\newtheorem{thm}{Theorem}[section]
\newtheorem{lem}[thm]{Lemma}
\newtheorem{prop}[thm]{Proposition}
\newtheorem{cor}[thm]{Corollary}
\newtheorem{assum}{Assumption}
\theoremstyle{definition}
\newtheorem{defn}{Definition}
\newtheorem{exmp}{Example}
\newtheorem{remark}[thm]{Remark}
\newcommand{\cc}[1]{\overline{#1}}
\newcommand{\cD}{\mathcal{D}}
\newcommand{\cH}{\mathcal{H}}
\newcommand{\cM}{\mathcal{M}}
\newcommand{\cN}{\mathcal{N}}
\newcommand{\cO}{\mathcal{O}}
\newcommand{\cP}{\mathcal{P}}
\newcommand{\cQ}{\mathcal{Q}}
\newcommand{\cR}{\mathcal{R}}
\newcommand{\cS}{\mathcal{S}}
\newcommand{\cU}{\mathcal{U}}
\newcommand{\cW}{\mathcal{W}}
\newcommand{\bC}{\mathbb{C}}
\newcommand{\bI}{\mathbb{I}}
\newcommand{\bN}{\mathbb{N}}
\newcommand{\bR}{\mathbb{R}}
\newcommand{\bT}{\mathbb{T}}
\newcommand{\bZ}{\mathbb{Z}}
\newcommand{\tr}{\mathrm{tr}}
\renewcommand{\Re}{\operatorname{Re}}
\newcommand{\vt}{\mathfrak{v}}
\newcommand{\Ltwo}{{L^2(\Gamma)}}
\newcommand{\eps}{\varepsilon}
\def\up#1{^{(#1)}}
\title{On the Spectra of Periodic Elastic Beam Lattices: \\Single-Layer Graph}
\author{Mahmood Ettehad\thanks{corresponding author}}
\affil{Institute for Mathematics and its Applications (IMA)\\ University of Minnesota}
\author{Burak Hat\.{i}no\u{g}lu}
\affil{Department of Mathematics \\ University of California, Santa Cruz}
\date{}
\begin{document}
	\maketitle
	
	\begin{abstract}
		We present full description of spectra for a Hamiltonian defined on periodic hexagonal elastic lattices. These continua are constructed out of Euler-Bernoulli beams, each governed by a scalar-valued self-adjoint operator, which is also known as the fourth order Schr\"{o}dinger operator, equipped with a real periodic symmetric potential. In contrast to the second order Schr\"{o}dinger operator commonly applied in quantum graph literature, here vertex matching conditions encode geometry of the underlying graph by their dependence on angles at which edges are met. We show that for a special equal angle lattice, known as graphene, dispersion relation has a similar structure as reported for the periodic second order Schr\"{o}dinger operator on hexagonal lattices. This property is then further utilized to prove existence of singular Dirac points. We further discuss reducibility of Fermi surface at uncountably many low-energy levels for this special lattice. Applying perturbation analysis, we extend the developed theory to derive dispersion relation for angle-perturbed Hamiltonian of hexagonal lattices in a geometric neighborhood of graphene. In these graphs, unlike graphene, dispersion relation is not splitted into purely energy and quasimomentum dependent terms, however singular Dirac points exist similar to the graphene case. 
	\end{abstract}
	
	\section{Introduction}
	Lattice materials are cellular structures obtained by tessellating a unit cell comprising a few beams. Such lattice materials exhibit the characteristic of pass and stop bands determining frequency intervals over which wave motion can or can not occur, respectively \cite{RSS03,KL15, LB19}. This unique directional behavior complements the stop-pass band pattern and makes the application of 2D periodic structures as directional mechanical filters \cite{RSS03}. For models on special lattices, e.g. graphene, interesting physical and spectral properties have been observed such as the presence of special conical points in the dispersion relation, where its different sheets touch to form a two-sided conical singularity \cite{SWF06, KP07, ZEKBSA17,BHJ19,BHJZ21}.
	
	The analysis of wave motion in periodic systems such as lattice materials and vibrations in harmonic atomic lattices are traced back to early studies of string vibration and later by Brillouin \cite{B53}. Under certain simplification assumptions, modeling variety of natural and engineered tessellated lattices can generally be studied under beam theories \footnote{most inclusive classical beam models are the Euler-Bernouli and Timoshenko beam theories.}. Under Euler-Bernouli beam model, each beam is described by an energy functional which involves four degrees of freedom for every infinitesimal element along the beam: axial, lateral (2 degrees of freedom) and angular displacements. At a joint, these four functions, supported on the beams involved, must be related via matching conditions that take into account the physics of a joint, see  \cite{BL04, GLL17, BE21} for more details. In the special case of the planar frames, the operator decomposes into a direct sum of two operators, one coupling out-of-plane to angular (torsional) displacements and the other coupling in-plane with axial displacements \cite{BE21}.
	
	From more theoretical point of view, recently the analysis of Hamiltonians corresponding to these symplectic structures has gained interest by mathematicians working on differential operators on metric graphs, see e.g. \cite{KKU15, GM20, BE21} and references therein. Along this line, early studies on derivation of dispersion relation (or variety) of second order {S}chr{\"o}dinger operator defined on a periodic graph, splits Hamiltonian into two essentially unrelated parts: the analysis on a single edge, and the spectral analysis on the combinatorial graph, the former being independent of the graph structure, and the latter independent of the potential \cite{KP07}. However, contrary to {S}chr{\"o}dinger type operator on graph, vertex conditions for beam Hamiltonian encode geometry by its dependence on the angles at which the edges are met. As a result, extension of the existing theory to the latter operator on periodic lattices is not trivially accessible.  
	
	The main focus of the current work is the extension of the reported results in \cite{KP07} to the fourth order operator $\cH = d^4/dx^4 + q(x)$ with self-adjoint vertex conditions and a real periodic symmetric potential on graphene and lattices in geometric neighborhood of it, see Figure \ref{fig:fundDomain}. This is done by considering the analysis of the operator $\cH$ on a single edge, and then the spectral analysis of $\cH$ on the combinatorial graph. The spectrum of the self-adjoint operator $\cH^{\text{per}} = d^4/dx^4 + q_0(x)$ on the real line with a real periodic potential (known as Hill operator for the second-order operator) has a band-gap structure and bounded below. In contrast to the Hill operator, the edges of the spectral bands may belong to not only the periodic or anti-periodic spectra of $\cH$ on $(0,1)$, but also the set of resonances \cite{BK10}. However the latter case may happen at most for finitely many bands \cite{BK10}. The resonances are the branch points of the Lyapunov function, which is an analytic function on a two-sheeted Riemann surface and depends on the monodromy matrix of $\cH$. The Lyapunov function characterizes the spectrum $\sigma(\cH)$ and multiplicities of its points. We refer interested reader to Section 2.2 of this work and \cite{BK05,BK10,BK12} for detailed discussions. 
	
	Before stating the structure of this paper, we briefly summarize the main results. In Theorem \ref{detProp2}, we obtain the dispersion relation of $\cH$ on graphene where it is shown that the absolutely continuous spectrum coincides with $\sigma(\cH)$ as a set and the singular continuous spectrum is empty. However the pure point spectrum is non-empty and coincides with the set of eigenvalues of $\cH$ on $(0,1)$ with Dirichlet boundary conditions and zero second derivative boundary conditions on both endpoints. Theorem \ref{grapheneSpectrum} describes these spectral properties of $\cH$ on graphene. In Theorem \ref{DiracPointsThm}, we prove a representation of the set of Dirac points (conical singularities) of the dispersion relation in terms of the two branches of the Lyapunov function. Then in Theorem \ref{ReducibleFermisurface}, we characterize reducible and irreducible Fermi surfaces. Investigation on the role of angle-dependent vertex conditions is done through perturbation analysis, where we present existence and stability of Dirac points under perturbed angles.  
	\begin{figure}[ht]
		\centering
		\includegraphics[width=0.325\textwidth]{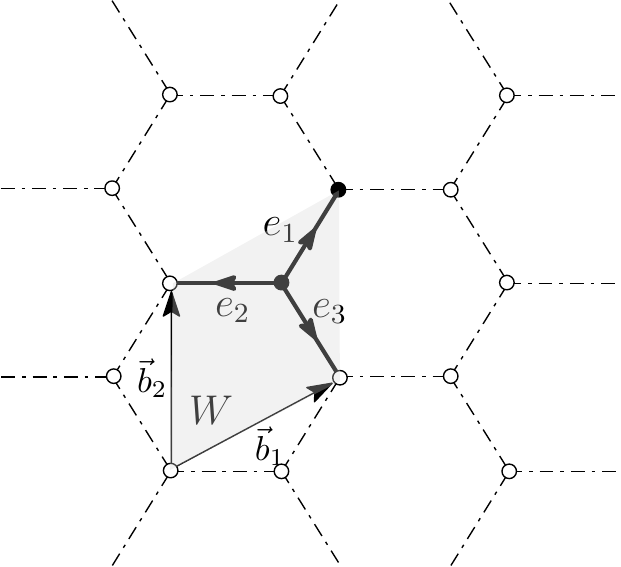}
		\caption{The hexagonal lattice $G$ and a fundamental domain $W$
			together with its set of vertices $V(W) = \{\vt_1,\vt_2\}$ and set of edges
			$E(W) = \{e_1,e_2,e_3\}$.}
		\label{fig:fundDomain}
	\end{figure}
	
	The paper is structured as follows: in Section \ref{sec:Preliminaries}, we summarize preliminary background starting with discussion on the parametrization of the beam deformation, energy functional, quadratic form and Hamiltonian on planar frames. This discussion is continued with a brief review of spectral properties of the fourth order periodic operator $\cH^{per}$ on the real line. In Section \ref{sec:SHLH}, we give a characterization of hexagonal elastic lattice’s Hamiltonian on graphene and its perturbations outside the Dirichlet spectrum. Section \ref{sec:GH} is devoted to the derivation of the dispersion relation, Dirac points, and spectral structure for the graphene lattice. Extension of the results for perturbed angles is the topic of Section \ref{sec:APH}. Section \ref{sec:Outlook} contains additional remarks and potential future extensions. 
	\section{Preliminaries}
	\label{sec:Preliminaries}
	In this section we will briefly review existing results in the literature to build necessary background for understanding the forthcoming materials. More specifically, in the first part self-adjoint beam operator $\cH$ on graph $G$ along with corresponding vertex conditions is defined. Next, we briefly discuss spectral results for similar type of periodic operator but defined on the real line,  $\cH^{\text{per}}$, known as Hill operator in the second-order case. We summarize these result from \cite{BL04, BK05, BK10} in Theorem \ref{summaryRefResults}, which will be repeatedly referred in the forthcoming sections. Reader familiar with these materials can safely skip this preliminary discussions and start with results in Section \ref{sec:SHLH}.
	\subsection{Elastic Planar Graphs}
	Under Euler-Bernouli beam model, each beam is described by an energy functional which involves four degrees of freedom for every infinitesimal element along the beam: axial, lateral (2 degrees of freedom) and angular displacements.  A central importance here is how to derive vertex matching conditions which are at the same time mathematically general and physically sound for application purposes. By restricting to one-degree of freedom, namely lateral displacement, vertex conditions for planar graphs have been derived by assuming that the deformed lattice will remain locally planar at vertex, i.e. existence of the tangent plane at that vertex \cite{KKU15}. It is shown that the resulting scalar-valued operator is self-adjoint. Extension of these results to generally three-dimensional graphs is developed in \cite{BE21}. This has been done by introducing the notion of rigidity at the vertex, on which matching conditions can be derived out of the first principle. Interestingly, the remaining vertex conditions which make vector-valued operator self-adjoint, have connection to the engineering world, namely satisfying equilibrium of force and moments at vertex. Further extensions of these result to semi-rigid type joint have been recently proposed in \cite{SBE21} where discontinuity of the displacement and rotation fields are admissible at a vertex. In a special case of planar frames, the operator decomposes into a direct sum of two operators, one coupling out-of-plane to angular (torsional) displacements and the other coupling in-plane with axial displacements. However achieving this level of physically sound models means that the operator is no longer scalar valued and contains at-least two degrees of freedom (for planar graphs) coupled at the joints. In this work we follow results of the scalar valued operator from \cite{KKU15} with the benefit of revealing some solid theoretical results regarding spectra of the corresponding Hamiltonian on periodic hexagonal lattice.  
	\subsubsection{\textbf{Energy Functional on Planar Lattice}}
	\label{sec:frames_description}
	A beam frame is a collection of beams connected at joints.
	We describe a beam frame as a geometric graph $\Gamma = (V,E)$,
	where $V$ denotes the set of vertices and $E$ the set of edges.  The
	vertices $\vt \in V$ correspond to joints and edges $e \in E$ are the
	beams.  Each edge $e$ is a collection of the following information:
	origin and terminus vertices $\vt_{e}^{o},\vt_{e}^{t}\in V$, length
	$\ell_e$ and the local basis
	$\{\vec{i}_e, \vec{j}_e, \vec{k}_e\}$. For special planar graphs $\vec{k}_e = \vec k$ for all edges $e \in E$, and thereby the graph $\Gamma$ can be embedded in $\bR^2$. Describing the
	vertices $V$ as points in $\bR^2$ also fixes the length $\ell_e$ and
	the axial direction $\vec{i}_e$ (from origin to terminus). However the choice
	of $\vec{j}_e$ in the plane orthogonal to $\vec{i}_e$ still needs to
	be specified externally. The distinction between origin and terminus, and thus the direction of
	$\vec i_e$ is unimportant in analysis but should be fixed for
	consistency.  It is important to use the same beam basis when writing
	out joint conditions at both ends of the beam.  We will use the
	\textit{incidence indicator} $s_{\vt}^e$ which is defined to be $1$
	when $\vt$ is the origin of $e$, $-1$ if it is the terminus of $e$ and
	$0$ otherwise. In the context of the kinematic Bernoulli assumptions for beam frame without pre-stress and external force, the total strain energy of the 
	beam frame is expressed as
	\begin{equation}
	\label{eq:energyFunc}
	\cU:= \frac{1}{2}  \sum_{e \in E}  \int_e \big(
	a_e |u''_e(x)|^2 +  q(x) |u_e(x)|^2  \big) dx.
	\end{equation}
	Above, parameter $a_e$ is positive and fixed over edge $e$ representing bending stiffness about the local axis $\vec j_e$ and $q \in L^2(e)$ is real-valued function.
	\begin{assum}
		The potential term $q \in L^2(e)$ is satisfying the evenness (symmetry) property 
		\begin{equation}
		\label{eq:evenPotential}
		q(x) = q(1-x).
		\end{equation}
		The evenness assumption \eqref{eq:evenPotential} is made not just for mathematical convenience, this condition is required if one considers operators invariant with respect to all symmetries of the periodic lattice.
	\end{assum}
	
	\subsubsection{\textbf{Quadratic and Operator Forms}}
	\label{sec:main_results}
	We now give a formal mathematical description of the Euler--Bernoulli
	strain energy form.
	\begin{thm}\emph{\textbf{(sesqulinear form \cite{KKU15,BE21})}}
		\label{primaryCond}
		Energy functional~\eqref{eq:energyFunc} of the planar beam lattice with
		free rigid joints is the quadratic form corresponding to the
		positive closed sesquilinear form
		\begin{align}
		\label{varEnergyForm}
		\cQ\left[u,\widetilde u\right] :=
		\sum_{e \in E} \int_e \big( a_e u_e''(x)\cc{\widetilde u_e''(x)} + q(x) u_e(x) \cc{\widetilde u_e(x)} \big)dx 
		\end{align}
		densely defined on the Hilbert space $\Ltwo := \mathop{\mathsmaller\bigoplus}_{e \in E} L^2(e)$ with the domain of $\cQ$ consisting of the vectors $\mathop{\mathsmaller\bigoplus}_{e \in E} H^2(e)$ that satisfy at every vertex $\vt \in V$  \textit{rigid joint}
		conditions, namely for all $e \sim \vt$
		\begin{subequations}
			\begin{gather}
			\label{dispRigid11}
			u_1(\vt) = \cdots = u_{n_\vt}(\vt) \\
			\label{rotationRigid11}
			(\vec j_2 \cdot \vec i_e) u_1'(\vt)
			+  (\vec j_e \cdot \vec i_1) u_2'(\vt)
			+  (\vec j_1 \cdot \vec i_2) u_e'(\vt)
			= 0
			\end{gather}
		\end{subequations}
	\end{thm}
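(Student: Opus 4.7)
The statement bundles together several routine facts (sesquilinearity, dense definition, and the identity $\cU[u]=\cQ[u,u]$, all immediate by inspection of \eqref{eq:energyFunc} and \eqref{varEnergyForm}) with two substantive claims: the derivation of the rigid joint conditions from a first-principle geometric hypothesis, and the positivity and closedness of $\cQ$ on the resulting domain. I would dispatch the formal parts by inspection and concentrate on these two points, treating the closedness argument as the technical core and the vertex-condition derivation as the conceptually interesting step.

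For the vertex conditions, the physical input is that, after deformation, the lattice should remain locally planar at each joint, i.e.\ the tangent vectors of the deformed beams at $\vt$ must lie in a common tangent plane. Continuity of displacement \eqref{dispRigid11} expresses that the beams remain joined. To first order in $u$, the tangent of beam $e$ at $\vt$ is $\vec{i}_e + u_e'(\vt)\,\vec{k}$. For any three incident edges $1,2,e\sim\vt$, coplanarity amounts to the vanishing of the $3\times 3$ determinant $\det\bigl(\vec{i}_1+u_1'\vec{k},\;\vec{i}_2+u_2'\vec{k},\;\vec{i}_e+u_e'\vec{k}\bigr)$. Expanding by multilinearity, the pure-$\vec{i}$ term vanishes because the $\vec{i}$'s lie in the plane, and the terms quadratic in $\vec{k}$ vanish trivially; the three surviving linear-in-$u'$ terms, after applying $\vec{j}_e = \vec{k}\times \vec{i}_e$ and the cyclic property of the triple product, produce exactly \eqref{rotationRigid11}. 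For vertices of degree $n_\vt > 3$ one must then check that imposing this triple condition over all ordered triples is equivalent to the single geometric requirement that all $n_\vt$ tangent vectors be coplanar, which is a rank computation yielding $n_\vt-2$ independent constraints, as expected.

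For positivity and closedness, I would start by bounding the potential term using $q\in L^2(e)$ and the $1$D Sobolev embedding $H^1(e)\hookrightarrow L^\infty(e)$: $\bigl|\int_e q\,|u|^2\bigr| \le \|q\|_{L^2(e)}\|u\|_{L^4(e)}^2 \le C\|u\|_{H^1(e)}\|u\|_{L^2(e)}$, then interpolate $\|u\|_{H^1}$ between $\|u''\|_{L^2}$ and $\|u\|_{L^2}$ so that Young's inequality yields $\cQ[u,u]\ge \tfrac12\sum_e a_e\|u_e''\|_{L^2(e)}^2 - M\|u\|_{\Ltwo}^2$ for some $M\ge 0$. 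This establishes semiboundedness and, in particular, positivity after the harmless shift $\cQ\mapsto\cQ+M\|\cdot\|^2$. Moreover, the form norm $\cQ[u,u]+(M+1)\|u\|_{\Ltwo}^2$ is equivalent on the stated domain to the edge-wise $H^2$-norm, so any Cauchy sequence in the form norm converges in $\bigoplus_{e\in E} H^2(e)$. The rigid-joint conditions \eqref{dispRigid11}--\eqref{rotationRigid11} persist under this limit because they are finitely many linear conditions in the point traces $u_e(\vt)$ and $u_e'(\vt)$, which depend continuously on $u_e$ via the embedding $H^2(e)\hookrightarrow C^1(\overline{e})$.

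The main obstacle I anticipate is not analytic but algebraic: the rank computation at high-degree vertices, where one must argue that the triple-determinant family cuts out exactly the coplanarity of all incident tangents and no more, so that the domain described in the theorem is neither too large (violating closedness of the physical constraint) nor too small (introducing spurious Dirichlet-type conditions). Once this is in hand, the rest of the proof is a standard combination of one-dimensional Sobolev embedding, interpolation, and the closedness-via-equivalent-norms recipe, and essentially follows the templates in \cite{KKU15,BE21}.
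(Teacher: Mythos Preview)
Your proposal is sound, but there is nothing in the paper to compare it against: Theorem~\ref{primaryCond} is stated with attribution to \cite{KKU15,BE21} and is not proved in the paper at all. It serves purely as background, and the authors simply quote the result and move on.

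For what it is worth, your sketch is correct and tracks the arguments in the cited references. The coplanarity-via-triple-product derivation of \eqref{rotationRigid11} is exactly the mechanism used in \cite{KKU15}, and the closedness argument via equivalence of the form norm with the edgewise $H^2$ norm plus the $H^2(e)\hookrightarrow C^1(\overline{e})$ trace continuity is the standard route. The rank issue you flag at high-degree vertices is real but is also handled in those papers; in the present paper it is moot anyway since the hexagonal lattice has only degree-$3$ vertices.
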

	Above, all functions are evaluated at the vertex $\vt$ and all derivatives are taken in direction $\vec i_e$. We remark here that condition \eqref{rotationRigid11} guarantees to preserve the local structure of a planar frame at each vertex, see e.g. \cite{BL04, KKU15}. The following theorem characterizes the Hamiltonian of the frame as a self-adjoint differential operator on the metric graph.  
	\begin{thm}\emph{\textbf{(operator form \cite{KKU15})}}
		\label{MainTheorem}
		Energy form~\eqref{varEnergyForm} on a beam frame with free rigid
		joints corresponds to the self-adjoint operator
		$\cH \colon \Ltwo \to \Ltwo$ acting as
		\begin{equation}
		\label{eq:diffSystem}
		v_e
		\mapsto
		a_e u_e'''' + q u_e
		\end{equation}
		on every edge $e\in E$ of the graph. The domain of the operator $\cH$
		consists of the functions from $\mathop{\mathsmaller\bigoplus}_{e \in E} H^4(e)$
		that satisfy at each vertex $\vt \in V$:
		
		(i) primary conditions
		\begin{subequations}
			\begin{gather}
			\label{dispRigid1}
			u_1(\vt) = \cdots = u_{n_\vt}(\vt) \\
			\label{rotationRigid1}
			(\vec j_2 \cdot \vec i_e) u_1'(\vt)
			+  (\vec j_e \cdot \vec i_1) u_2'(\vt)
			+  (\vec j_1 \cdot \vec i_2) u_e'(\vt)
			= 0
			\end{gather}
		\end{subequations}
		
		(ii) conjugate conditions, namely for $e_\ell, e_{\ell'} \sim \vt$ such that $\vec i_\ell \times \vec i_{\ell'} \not = 0$
		\begin{subequations}
			\begin{gather}
			\label{secondBcThmDisp}
			\sum_{e \sim \vt} s_\vt^e a_e u_e'''(\vt)= \vec0 \\
			\label{eq:secondBcThmRota}
			\sum_{e \sim \vt} s_\vt^e a_e (\vec i_\ell \cdot \vec j_e) u_e'' (\vt) = 0 \qquad \& \qquad	\sum_{e \sim \vt} s_\vt^e a_e (\vec i_{\ell'} \cdot \vec j_e) u_e''(\vt) = 0
			\end{gather}
		\end{subequations}
	\end{thm}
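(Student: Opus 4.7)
The plan is to identify the self-adjoint operator associated to $\cQ$ via Kato's first representation theorem. Theorem~\ref{primaryCond} already establishes that $\cQ$ is closed, positive, and densely defined on
\[
D \;:=\; \Big\{ u \in \bigoplus_{e \in E} H^2(e) : u \text{ satisfies } \eqref{dispRigid11}\text{--}\eqref{rotationRigid11} \text{ at every } \vt \in V \Big\}.
\]
Kato's theorem then produces a unique self-adjoint operator $\cH$ on $\Ltwo$ with form domain $D$, characterized by $\cQ[u,\wt u] = \ip{\cH u}{\wt u}_{\Ltwo}$ for $u \in \Dom(\cH)$ and $\wt u \in D$. What remains is to pin down $\Dom(\cH)$ and the action of $\cH$ concretely.

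Assume first that $u \in D \cap \bigoplus_{e \in E} H^4(e)$ and integrate by parts twice on each edge to obtain
\[
\cQ[u,\wt u] \;=\; \sum_{e \in E} \int_e \big(a_e u_e'''' + q\, u_e\big)\cc{\wt u_e}\,dx \;+\; \sum_{\vt \in V} B_\vt(u,\wt u),
\]
with vertex residual
\[
B_\vt(u,\wt u) \;=\; \sum_{e \sim \vt} s_\vt^e \Big(a_e u_e'''(\vt)\cc{\wt u_e(\vt)} - a_e u_e''(\vt)\cc{\wt u_e'(\vt)}\Big).
\]
Thus $u \in \Dom(\cH)$ with $(\cH u)|_e = a_e u_e'''' + q\,u_e$ exactly when $B_\vt(u,\wt u) = 0$ for every $\vt \in V$ and every $\wt u \in D$.

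The vanishing of $B_\vt$ decouples into two independent requirements via the primary conditions imposed on $\wt u$. The continuity $\wt u_1(\vt) = \cdots = \wt u_{n_\vt}(\vt) =: c$, with $c \in \bC$ arbitrary, reduces the third-derivative piece of $B_\vt$ to $\big(\sum_{e \sim \vt} s_\vt^e a_e u_e'''(\vt)\big)\bar c$, forcing \eqref{secondBcThmDisp}. For the second-derivative piece, condition \eqref{rotationRigid11} is precisely the assertion that the tuple $(\wt u_e'(\vt))_{e \sim \vt}$ takes the form $(\vec i_e \cdot \vec \xi)_{e \sim \vt}$ for some tangent-plane gradient $\vec \xi \in \bC^2$, parametrizing a two-dimensional subspace $R_\vt \subset \bC^{n_\vt}$. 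Substituting,
\[
-\,\cc{\vec \xi} \cdot \Big(\sum_{e \sim \vt} s_\vt^e a_e u_e''(\vt)\,\vec i_e\Big) = 0 \quad \forall\, \vec \xi \in \bC^2 \;\Longleftrightarrow\; \sum_{e \sim \vt} s_\vt^e a_e u_e''(\vt)\,\vec i_e = 0 \in \bC^2.
\]
The in-plane $90^\circ$ rotation $\vec i_e \mapsto \vec j_e = \vec k \times \vec i_e$ converts this vector identity to $\sum_{e \sim \vt} s_\vt^e a_e u_e''(\vt)\,\vec j_e = 0$; contracting against two transversal directions $\vec i_\ell, \vec i_{\ell'}$ (transversality guaranteed by $\vec i_\ell \times \vec i_{\ell'} \neq 0$) returns the two scalar equations \eqref{eq:secondBcThmRota}.

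The remaining and most delicate point is the converse regularity $\Dom(\cH) \subseteq \bigoplus_{e \in E} H^4(e)$: a priori Kato's theorem places $\Dom(\cH)$ only in the form domain $D \subset \bigoplus_{e} H^2(e)$ together with the distributional identity $a_e u_e'''' = (\cH u)_e - q\,u_e \in L^2(e)$ on each edge. Standard one-dimensional elliptic regularity for the fourth-order ODE $a_e u_e'''' = g_e \in L^2(e)$, applied edge by edge up to each endpoint, lifts $u_e$ from $H^2(e)$ to $H^4(e)$. With this in hand the integration-by-parts calculation above becomes reversible, completing the characterization of $\Dom(\cH)$; self-adjointness follows for free from Kato's theorem. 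The main obstacle is the linear-algebraic decoupling in the third paragraph, specifically confirming that the rotation constraints \eqref{rotationRigid11} parametrize exactly the rank-two image $\vec \xi \mapsto (\vec i_e \cdot \vec \xi)_e$ -- essentially a rephrasing of the tangent-plane existence at a rigid planar joint.
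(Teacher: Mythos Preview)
The paper does not supply its own proof of Theorem~\ref{MainTheorem}; the result is quoted from \cite{KKU15} as preliminary background, so there is no in-paper argument to compare against. Your derivation via Kato's first representation theorem together with edgewise integration by parts is the standard route and is essentially what one finds in \cite{KKU15,BE21}. The identification of the rotation constraint \eqref{rotationRigid11} with the two-dimensional image $\vec\xi\mapsto(\vec i_e\cdot\vec\xi)_e$ is correct (it is the 2D Jacobi identity $(\vec i_1\times\vec i_2)\vec i_e+(\vec i_2\times\vec i_e)\vec i_1+(\vec i_e\times\vec i_1)\vec i_2=0$ dotted into $\vec\xi$, using $\vec j_a\cdot\vec i_b=\vec i_a\times\vec i_b$), as is the subsequent $90^\circ$-rotation step and the edgewise $H^4$ regularity bootstrap. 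The argument is sound; nothing substantive is missing.
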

	Thus the defined operator $\cH$ is unbounded and self-adjoint
	in the Hilbert space $L^2(\Gamma)$. Due to the condition \eqref{eq:evenPotential} on the potential, the Hamiltonian $\cH$ is invariant with respect to all symmetries
	of the hexagonal lattice  $\Gamma$, in particular with respect to the $\bZ^2$-shifts, which will play a crucial role in our considerations, see \cite{KP07} for a detailed discussion on the role of symmetry of the potential. 
	\subsection{Periodic fourth order Operator on the Real Line}\label{4thorderHill}
	Next we will summarize existing results on the spectrum of fourth-order operator with periodic potential on the real line. There are key differences compare to the second-order (Hill's) operator which are essential for us to develop our results. The reader familiar with the aforementioned discussions can skip this subsection and directly jump to Theorem \ref{summaryRefResults}. Consider the self-adjoint operator $\cH^{\text{per}} := d^4/dx^4 + q_0(x)$, acting on $L^2(\bR)$, where the real 1-periodic potential $q_0(x)$ belongs to the real space
	\begin{equation*}
	L_0^2(\bT) := \Big\{q_0 \in L^2(\bT) : \int_{0}^{1} q_0(x) dx = 0\Big\},
	\end{equation*}
	where $\bT = \bR / \bZ$. Introduce the fundamental solutions $\{g_k(x)\}_{k=1}^4$ of the eigenvalue problem 
	\begin{equation}
	\label{eq:rigrnValRealPer}
	\cH^{\text{per}} u(x) = \lambda u(x), \qquad (x,\lambda) \in \bR \times \bC
	\end{equation}
	satisfying for $j, k \in \{1,\ldots,4\}$ the conditions
	\begin{equation}\label{fundsolconditions}
	g_k^{(j-1)}(0)	 = \delta_{j k},
	\end{equation}
	where $\delta_{jk}$ is the Kronecker delta function and $g^{(k)} = d^kg/dx^k$. The monodromy matrix has the form $M(\lambda) := \cM(1,\lambda)$ in which for $x \in \bR$
	\begin{equation}
	\label{eq:monodMatrix}
	\cM(x,\lambda) := \{\cM_{j,k}(x,\lambda)\}_{j,k=1}^4 = \{g_k^{(j-1)}(x)\}_{j,k=1}^4
	\end{equation}
	and it shifts by the period along the solutions of \eqref{eq:rigrnValRealPer}. It is well-known that the monodromy matrix $M(\lambda)$ is entire on $\lambda$ and its eigenvalue $\tau \in \bC$, i.e. root of algebraic polynomial $D(\tau, \lambda) := \det(M(\lambda) - \tau \bI_4)$, is called a multiplier. According to the Lyapunov Theorem if for some $\lambda \in \mathbb{C}$, $\tau(\lambda)$ is a multiplier, then $\tau^{-1}(\lambda)$ is a multiplier of same multiplicity. Moreover, each $M(\lambda), \lambda \in \mathbb{C}$ has exactly four multipliers $\tau^{\pm1}_{1}(\lambda), \tau^{\pm 1}_2(\lambda)$\cite{BK05}. If we let $D_{\pm}(\lambda) = \frac{1}{4}D(\pm1,\lambda)$, then zeros of $D_+(\lambda)$ and $D_-(\lambda)$ are the eigenvalues of the periodic and anti-periodic problem respectively for \eqref{eq:rigrnValRealPer}. Denote by $\lambda_0^+, \lambda_{2n}^{\pm}$ and $\lambda_{2n-1}^\pm$ with $n \in \mathbb{N}$, the sequence of zeros of $D_+$ and $D_-$  (counted with multiplicity) respectively such that $\lambda_0^+ \leq \lambda_2^- \leq \lambda_2^+ \leq \lambda_4^- \leq \lambda_4^+ \leq \cdots$ and $\lambda_1^- \leq \lambda_1^+ \leq \lambda_3^- \leq \lambda_3^+ \leq \lambda_5^- \leq \cdots$. It is well known that the spectrum of $\cH^{\text{per}}$ is purely absolutely continuous and consists of non-degenerate intervals\cite{BK05,BK10}. These intervals are separated by the gaps $G_n = (E_n^-,E_n^+)$, $n \in \mathbb{N}$, with positive length. We introduce the functions 
	\begin{equation}
	T_1(\lambda) := \frac{1}{4} \tr\big(M(\lambda)\big), \qquad T_2 := \frac{1}{2} \big(\tr\big(M^2(\lambda)\big)+1\big) - \tr^2\big(M(\lambda)\big).
	\end{equation}
	
	The functions $T_1(\lambda) $, $T_2(\lambda)$ are entire, real on $\bR$ and
	\begin{equation}
	D(\tau,\cdot) = \big(\tau^2 - 2(T_1-T_2^{1/2})\tau+1\big)\big(\tau^2 - 2(T_1+T_2^{1/2})\tau+1\big).
	\end{equation}
	For the special case of the zero potential, i.e. $q_0 \equiv 0$, the corresponding functions have the form 
	\begin{equation}
	T_1^0(\lambda) = \frac{1}{2}\big(\cosh( \lambda^{1/4}) + \cos( \lambda^{1/4})\big), \qquad 
	T_2^0(\lambda) = \frac{1}{4}\big(\cosh( \lambda^{1/4}) - \cos( \lambda^{1/4})\big)^2,
	\end{equation}
	with $\arg (\lambda^{1/4}) \in (-\frac{\pi}{4}, \frac{\pi}{4}]$.
	Let $\{r_0^-, r_n^\pm\}_{n \in \bN}$ be the sequence of zeros of $T_2(\lambda)$ in $\bC$ (counted with multiplicity) such that $r_0^-$ is the maximal real zero, and $\cdots \leq \Re r_{n+1}^+ \leq \Re r_n^+ \leq \cdots \leq \Re r_1^+$. 
	If $r_n^{+} \in \mathbb{C}_{+}$, then $r_n^{-} = \overline{r_n^{+}} \in \mathbb{C}_{+}$, and if $r_n^{+} \in \mathbb{R}$, then $r_n^{-} \leq r_n^{+} \leq \Re r_m^{-}$ for $m = 1,\ldots, n$.
	Under extra mild conditions, then it has been shown that $r_n^\pm = -4(n \pi)^4 + \cO(n^2)$ as $n \rightarrow \infty$. Let $\cdots \leq r_{n_j}^- \leq r_{n_j}^+ \leq \cdots \leq r_{n_1}^- \leq r_{n_1}^+ \leq r_0^-$ be the subsequence of the real zeros of $T_2(\lambda)$, then $T_2(\lambda) < 0$ for any $\lambda \in R_j^0 := (r_{n_{j+1}}^+, r_{n_j}^-)$ for $j \in \bN$.
	\begin{defn}
		Following characterizations are in order
		\begin{itemize}
			\item A zero of the function $T_2(\lambda)$ is called a resonance of operator $\cH^{\text{per}}$.
			\item The interval $R_j^0 \subset \bR$ is called a resonance gap. 
		\end{itemize}
	\end{defn}
	Denote by $R^0 := \cup R_j^0$ and $\eta^0$ which joins the points $r_n^+, \overline{r_n^+}$ and does not cross $R^0$. To deal with the roots of the function $T_2(\lambda)$, the Riemann surface $\cR$ is constructed by taking two replicas of the $\lambda$-plane cut along $R^0$ and $\cup \eta_n$ and they are called sheets $\cR_1$ and $\cR_2$ respectively. As a result, there exists a unique analytic continuation of the function $T_2^{1/2}(\lambda)$. Let introduce Lyapunov function by 
	\begin{equation}
	\Delta(\xi) = T_1(\xi) + T_2^{1/2}(\xi)
	\end{equation}
	with $ \xi \in \cR$. Let $\Delta(\xi) = \Delta_1(\lambda)$ on $\cR_1$, and $\Delta(\xi) = \Delta_2(\lambda)$ on the second sheet $\cR_2$. Then 
	\begin{subequations}
		\label{eq:Delta12}
		\begin{gather}
		\Delta_1(\lambda) = T_1(\lambda) + T_2^{1/2}(\lambda)\\
		\Delta_2(\lambda) = T_1(\lambda) - T_2^{1/2}(\lambda)
		\end{gather}
	\end{subequations}
	For $q_0 \in L_0^2(\bT)$, the function $\Delta(\lambda) = T_1(\lambda) + T_2^{1/2}(\lambda)$ is analytic on the two sheeted Riemann surface $\cR$ and the branches $\Delta_k$ of $\Delta$ have the forms 
	
	\begin{equation}
	\Delta_k(\lambda) = 
	\frac{1}{2}\big(\tau_k(\lambda)+\tau_k^{-1}(\lambda)\big)
	\end{equation}
	for $ \lambda \in \cR_k$ with $k = 1,2$. For the special case $q_0 \equiv 0$, the corresponding functions are characterized by
	\begin{equation}
	\Delta_1^0(\lambda) = \cosh(\lambda^{1/2}), \qquad \Delta_2^0(\lambda) = \cos(\lambda^{1/2})
	\end{equation} 
 	For the operator $\cH^{\text{per}}$ the Lyapunov function $\Delta_1$ is increasing and $\Delta_2$ is bounded on the real line at high energy-level (large $\lambda$ values). The Lyapunov function for the operator $\cH^{\text{per}}$ defines the band structure of the spectrum, but it is an analytic function on a 2-sheeted Riemann surface. The qualitative behavior of the Lyapunov function for identically vanishing and small potentials are shown in Figure \ref{fig:bandStructure}. 
 	\begin{figure}[ht]
 		\centering
 		\includegraphics[width=0.975\textwidth]{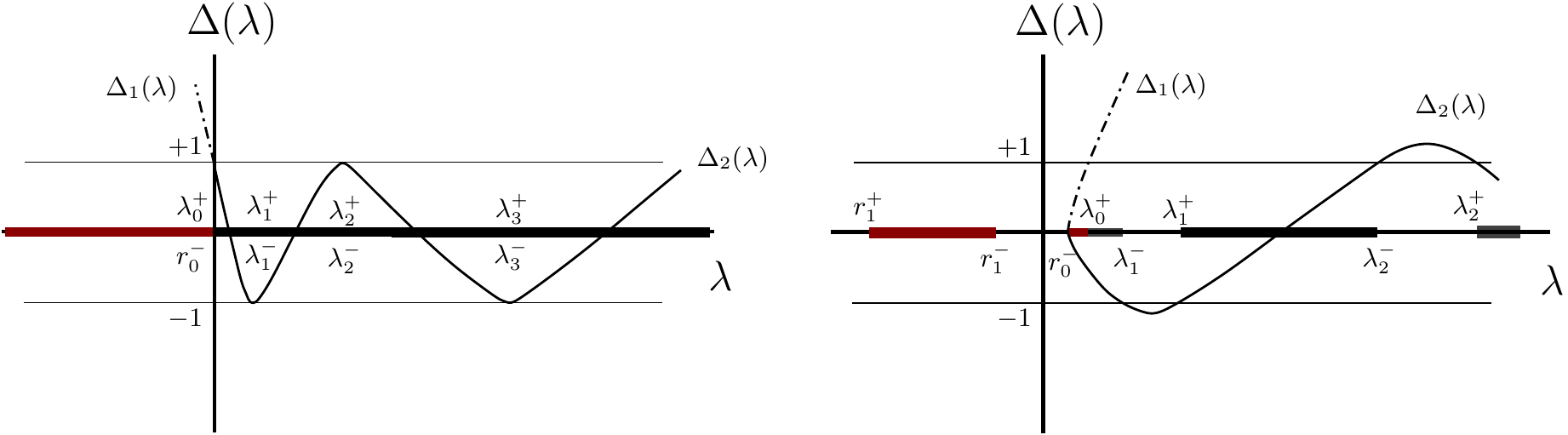}
 		\caption{The function $\Delta$ for the zero potential and a small potential $q_0$.}
 		\label{fig:bandStructure}
 	\end{figure}
	\begin{remark}
		In the case of Hill's operator, the monodromy matrix has exactly 2 eigenvalues $\tau$ and $\tau^{-1}$. The Lyapunov function $\frac{1}{2}(\tau+\tau^{-1})$ is an entire function of the spectral parameter. It defines the band structure of the spectrum, see \cite{K16} for detailed discussions. 
	\end{remark}
	\begin{thm}\emph{\textbf{(spectra of $\cH^{\text{per}}$ \cite{BL04, BK05, BK10})}}
		\label{summaryRefResults}
		Let $\Delta_1(\lambda)$ and $\Delta_2(\lambda)$ as defined in \eqref{eq:Delta12}, then for eigenvalue problem \eqref{eq:rigrnValRealPer} following results hold
		\begin{itemize}
			\item[(i)] The spectrum $\sigma(\cH^{\text{per}})$ of $\cH^{\text{per}}$ is purely absolutely continuous. 
			\item[(ii)] $\lambda \in \sigma(\cH^{\text{per}})$ iff $\Delta_k(\lambda) \in [-1,1]$ for some $k=1,2$. If $\lambda \in \sigma(\cH^{\text{per}})$, then $T_2(\lambda) \geq 0$. 
			\item[(iii)] There exists an integer $n_0 \in \bN_0$ such that for all $n \geq n_0$ 
			\begin{equation}
			\lambda_n^- \leq \lambda_n^+ \leq \lambda_{n+1}^- \leq \lambda_{n+1}^+ \leq \lambda_{n+2}^- \leq \lambda_{n+2}^+ \leq \cdots
			\end{equation}
			where the intervals $[\lambda_{n}^+,\lambda_{n+1}^-]$ are spectral bands of multiplicity 2 in $(\lambda_{n}^-,\lambda_{n+1}^-)$, and the intervals $(\lambda_{n}^-,\lambda_{n}^+)$ are gaps. 
			\item[(iv)] Each gap $G_n =(E_n^-,E_n^+)$ for $n \in \mathbb{N}$ is a bounded interval and $E_n^\pm$ are either periodic (anti-periodic) eigenvalues or resonance point, namely, real branch point of $\Delta_k$ for some $k = 1,2$ which is a zero of $T_2(\lambda)$. 
			\item[(v)] Any $\lambda \in \sigma(\cH^{\text{per}})$ on an interval $S \subset \bR$ has multiplicity 4 iff $-1 < \Delta_k(\lambda) < 1$ for all $k=1,2$ and $\lambda \in S$, except for finite number of points. 
			\item[(vi)] Any $\lambda \in \sigma(\cH^{\text{per}})$ on an interval $S \subset \bR$ has multiplicity 2 iff $-1 < \Delta_1(\lambda) < 1$, $\Delta_2(\lambda) \in \bR \setminus [-1,1]$ or $-1 < \Delta_2(\lambda) < 1$, $\Delta_1(\lambda) \in \bR \setminus [-1,1]$ for all $\lambda \in S$, except for finite number of points. 
			\item[(vii)] Let $\Delta_k$ be real analytic on some interval $I \subset \bR$ and $-1<\Delta_k(\lambda) < 1$ for any $\lambda \in I$ for some $k = 1,2$. Then $\Delta_k'(\lambda) \not=0$ for $\lambda \in I$ (monotonicity).   
			\item[(viii)] The dispersion relation for $\cH^{\text{per}}$ is given by 
			\begin{equation}
			\Delta_{1,2}(\lambda) := T_1(\lambda) \pm T_2^{1/2}(\lambda) = \cos(\theta),
			\end{equation}
			where $\theta$ is the one-dimensional quasimomentum. 
		\end{itemize}
	\end{thm}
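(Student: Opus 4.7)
The plan is to read this theorem as a careful compilation of Floquet--Bloch facts for the fourth-order periodic operator $\cH^{\text{per}}$ combined with the branch-point analysis of the Lyapunov function $\Delta$ developed just above. The structural backbone is the direct-integral decomposition $\cH^{\text{per}} \simeq \int^{\oplus}_{[-\pi,\pi)} \cH_\theta\,d\theta$ in which each fiber $\cH_\theta$ acts on $L^2([0,1])$ with $\theta$-twisted boundary conditions, has compact resolvent, and discrete spectrum $\{\lambda_n(\theta)\}$. Because the monodromy matrix $M(\lambda)$ is entire on $\lambda$, a nontrivial Floquet solution at quasimomentum $\theta$ exists iff $e^{i\theta}$ is a root of the characteristic polynomial $D(\tau,\lambda) = \det(M(\lambda) - \tau \bI_4)$. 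The already-recorded factorization
\[
D(\tau,\lambda) = (\tau^2 - 2\Delta_1\tau+1)(\tau^2 - 2\Delta_2\tau+1)
\]
immediately yields items (viii) and (ii): $\lambda \in \sigma(\cH^{\text{per}})$ iff $\Delta_k(\lambda) = \cos\theta$ for some $k$ and some real $\theta$, forcing $\Delta_k(\lambda)\in[-1,1]$, and the nonnegativity of $T_2(\lambda)$ then follows from the identities $\Delta_1+\Delta_2 = 2T_1\in\bR$ and $\Delta_1\Delta_2 = T_1^2 - T_2\in\bR$.

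For item (i), I would combine this direct-integral picture with analyticity of the band functions $\theta\mapsto\lambda_n(\theta)$: no such function can be locally constant because $M(\lambda)$ depends entire-analytically and nontrivially on $\lambda$, and standard theorems on fiber direct integrals over $\bR$ then deliver pure absolute continuity. Items (iii)--(iv) follow from the large-$n$ asymptotics $r_n^\pm = -4(n\pi)^4+\cO(n^2)$ together with the analogous asymptotics for the zeros $\lambda_n^\pm$ of $D_\pm$: for $n$ large the resonance zeros of $T_2$ lie far to the left of the periodic and anti-periodic eigenvalues, so the interlacing pattern stabilizes. A gap endpoint must either be a value where $|\Delta_k|=1$ (a periodic or anti-periodic eigenvalue) or a point where the branches transition across the cuts of $\cR$ (a resonance, i.e.\ a real zero of $T_2$), which is precisely the dichotomy asserted. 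The multiplicity criteria (v)--(vi) are then a direct counting argument from (ii): each branch $\Delta_k$ with $\Delta_k(\lambda)\in(-1,1)$ contributes a pair $\pm\theta$ of real quasimomenta and hence multiplicity~$2$, while an interior intersection with $\pm 1$ is a finite exceptional set.

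The subtle part, and the main obstacle, is item~(vii): the strict monotonicity $\Delta_k'(\lambda)\ne 0$ on any interval $I$ where $-1<\Delta_k<1$. I would argue by a Hellmann--Feynman-type calculation applied to the Floquet eigenvalue identity $\Delta_k(\lambda)=\cos\theta(\lambda)$, differentiating in $\lambda$ and pairing with the corresponding Floquet eigenfunction $u_\theta$ of $\cH_\theta$ to express $\Delta_k'(\lambda)$ as a ratio of $L^2$-pairings that is strictly signed on the open band. The delicate point is that the two-sheeted Riemann structure admits resonances as possible branch points, and one must verify that the eigenfunction does not degenerate inside $I$; since $T_2\ne 0$ on $I$ (otherwise $\Delta_k$ would fail to be real-analytic there), the spectral projection onto the $\tau_k^{\pm 1}$-eigenspace of $M(\lambda)$ stays regular and the pairing cannot vanish. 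Because this monotonicity step is the most technically demanding and the one most sensitive to the fourth-order structure, I expect the cleanest write-up to quote the detailed Lyapunov-function analysis of \cite{BK05,BK10} rather than reconstruct it from scratch.
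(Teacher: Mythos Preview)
The paper does not prove this theorem at all: it is stated in the preliminaries section as a compilation of known results, with the header explicitly citing \cite{BL04,BK05,BK10} and the surrounding text announcing ``We summarize these result from \cite{BL04, BK05, BK10} in Theorem~\ref{summaryRefResults}.'' There is no proof in the paper to compare against.

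Your sketch is a reasonable outline of how one would actually establish these facts, and it tracks the content of the cited references (particularly Badanin--Korotyaev). The factorization argument for (ii) and (viii), the direct-integral plus analyticity route to (i), the asymptotic/interlacing picture for (iii)--(iv), and the multiplicity count for (v)--(vi) are all the standard moves. You are also right that (vii) is the delicate item and that the cleanest treatment defers to the detailed Lyapunov-function analysis in \cite{BK05,BK10}; indeed, the paper itself simply imports that monotonicity statement wholesale and uses it later (in the proof of Theorem~\ref{DiracPointsThm}) without further justification. So your proposal is not wrong, but it goes well beyond what the paper does, which is simply to quote the result.
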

	\section{Spectra of Hexagonal Lattice Hamiltonian}
	\label{sec:SHLH}
	In this section our aim is to adapt and characterize the spectrum $\sigma(\cH)$ of operator $\cH$ defined in Theorem \ref{MainTheorem} on (graphene like) hexagonal lattices. Due to positiveness and self-adjointness of this operator, its spectrum is real and positive. Let $\lambda \in \sigma(\cH)$ with $\lambda > 0$ be an eigenvalue of $\cH$ with associated eigenfunction $(u_e)_{e\in E} \in \cD(\cH)$. Note that since $a_e$ in \eqref{eq:diffSystem} is a positive constant and identical over the hexagonal lattice, we assume it is identically one. Then $u_e(x)$ satisfies on each edge $e \in E$
	\begin{equation}
		\cH u_e(x) = u_e''''(x)  + q(x) u_e(x) = 
		\lambda  u_e(x) 
	\end{equation}
	Let 
	$\delta_0 :=2\pi/3$ and define angles 
	\begin{equation}
	\label{eq:anglePerturb}
		\delta_c\up{\eps} := \delta_0 + c \eps 
	\end{equation}
	For $c_1 \in [-1,1]$ to be an arbitrary parameter and $c_2 := -(1+c_1)$, the eigenfunction $(u_e)_{e\in E} \in \cD(\cH)$ corresponding to the $\eps$-perturbed lattice at each vertex $\vt$ satisfy (see Theorem \ref{MainTheorem}) primary vertex conditions 
	\begin{subequations}
		\label{eq:PCond}
		\begin{gather}
			\label{eq:P1}
			u_1(\vt) = u_2(\vt)  = u_3(\vt) \\
			\label{eq:P2}
			\sin(\delta_{1}\up{\eps}) u_1'(\vt) +\sin(\delta_{c_1}\up{\eps}) u_2'(\vt) +\sin(\delta_{c_2}\up{\eps}) u_3'(\vt) = 0.
		\end{gather}
	\end{subequations}
	along with their conjugate ones
	\begin{subequations}
		\label{eq:SCond}
		\begin{gather}
			\label{eq:S1}
			\sin^{-1}(\delta_{1}\up{\eps})u_1''(\vt) = \sin^{-1}(\delta_{c_1}\up{\eps})u_2''(\vt)  = \sin^{-1}(\delta_{c_2}\up{\eps})u_3''(\vt), \\
			\label{eq:S2}
			u_1'''(\vt) + u_2'''(\vt) + u_3'''(\vt) = 0.
		\end{gather}
	\end{subequations}
	We stress out here that the result in \eqref{eq:S1} is obtained by setting $\ell, \ell' = 2,3$ in \eqref{eq:secondBcThmRota} and using the fact that $\vec i_3 \cdot \vec j_2 = - \vec i_2 \cdot \vec j_3$. Moreover for graphene and its $\eps$-perturbed angles, conditions above are well-defined, we refer reader to \cite{KKU15} for discussion about special cases e.g. when $\delta_1\up{\eps} = \pi$. 
	
	Density of states is determined by the dispersion relation, and thus when the latter is known, the former can be determined as well \cite{K16}. Thereby, we apply now the standard Floquet-Bloch theory with respect to the $\mathbb{Z}^2$-action that we specified before. This reduces the study of the Hamiltonian $\cH$ to the study of the family of Bloch Hamiltonians $\cH^{\Theta}$ acting in $L^2(W)$ for the values of the \textit{quasimomentum} $\Theta$ in the (first) Brillouin zone $[-\pi, \pi]^2$. Here the Bloch Hamiltonian $\cH^{\Theta}$ acts the same way $\cH$ does, but it is applied to a different space of functions. Each function $u = \{u_e\}_{e \in E}$ in the domain of $H^{\Theta}$ must belong to the Sobolev space $H^4(e)$ on each edge $e$ and satisfy the vertex conditions \eqref{eq:PCond}-\eqref{eq:SCond}, as well as the cyclic conditions (Floquet-Bloch conditions)
	\begin{equation}
		\label{eq:FBThm}
		\begin{split}
			u(x + n_1\vec{b}_1 + n_2\vec{b}_2) = e^{i\vec n\cdot\Theta}u(x) = e^{i(n_1\theta_1+n_2\theta_2)}u(x) 
		\end{split}
	\end{equation}
	for any $x$ in the fundamental domain $W$, vector $\vec n = (n_1,n_2) \in \mathbb{Z}^2$, and quasimomentum $\Theta = (\theta_1,\theta_2) \in [-\pi, \pi]^2$, see Figure \ref{fig:fundDomain}. Due to the condition \eqref{eq:FBThm}, function $u$ is uniquely determined by its restriction to the fundamental domain $W$. Then conditions \eqref{eq:PCond}-\eqref{eq:SCond} at the central vertex, i.e. $x = 0$ become
	\begin{subequations}
		\begin{gather}
			\label{eq:cond1At0}
			u_1(0)  = u_2(0) = u_3(0) =:A \\
			\label{eq:cond2At0}
			\sin(\delta_{1}\up{\eps}) u_1'(0) +\sin(\delta_{c_1}\up{\eps}) u_2'(0) +\sin(\delta_{c_2}\up{\eps}) u_3'(0) = 0\\
			\label{eq:cond3At0}
			\sin^{-1}(\delta_{1}\up{\eps})u_1''(0) = \sin^{-1}(\delta_{c_1}\up{\eps})u_2''(0)  = \sin^{-1}(\delta_{c_2}\up{\eps})u_3''(0)=: B \\
			\label{eq:cond4At0}
			u_1'''(0) + u_2'''(0) + u_3'''(0) = 0.
		\end{gather}
	\end{subequations}
	Similarly at other end vertex of edge $e_1$, i.e. $x = 1$ we have
	\begin{subequations}
		\begin{gather}
			\label{eq:cond1At1}
			u_1(1)  = u_2(1)e^{i \theta_1} = u_3(1)e^{i \theta_2}  =:C \\
			\label{eq:cond2At1}
			\sin(\delta_{1}\up{\eps}) u_1'(1) +\sin(\delta_{c_1}\up{\eps}) u_2'(1)e^{i \theta_1}  +\sin(\delta_{c_2}\up{\eps}) u_3'(1)e^{i \theta_2}  = 0\\
			\label{eq:cond3At1}
			\sin^{-1}(\delta_{1}\up{\eps})u_1''(0) = \sin^{-1}(\delta_{c_1}\up{\eps})u_2''(0)e^{i \theta_1} = \sin^{-1}(\delta_{c_2}\up{\eps})u_3''(0) e^{i \theta_2} =: D \\
			\label{eq:cond4At1}
			u_1'''(1) + u_2'''(1)e^{i \theta_1} + u_3'''(1)e^{i \theta_2}= 0.
		\end{gather}
	\end{subequations}
	By standard arguments, $\cH^{\Theta}$ has purely discrete spectrum $\sigma(\cH^{\Theta}) = \{\lambda_k(\Theta)\}_{k \in \mathbb{N}}$. The graph of the multiple valued function $\Theta \mapsto \{\lambda_k(\Theta)\}$ is known as the \textit{dispersion relation}, or \textit{Bloch variety} of the operator $\cH$. It is known \cite{K16} that the range of this function is the spectrum of $\cH$:
	\begin{equation}
		\sigma(\cH) = \bigcup_{\Theta \in [-\pi,\pi]^2}\sigma(\cH^{\Theta})
	\end{equation}
	Our goal now is the determination of the spectrum of $\cH^{\Theta}$ 
	and thus the dispersion relation of $\cH$. In order to determine this spectrum, we have to solve the eigenvalue problems
	\begin{equation}\label{2ndOrderOp}
		\cH^{\Theta}u(x) = \lambda u(x) 
	\end{equation}
	for $\lambda \in \mathbb{R}$ and non-trivial functions $u_e(x) \in L_e^2(W)$ with the above boundary conditions. Let us denote by $\Sigma^D$ the spectrum of the operator 
	\begin{equation}\label{eigenvalueEquation}
		\cH u(x) = au''''(x) +q(x)u(x)
	\end{equation}
	on interval $(0,1)$ with boundary conditions 
	\begin{equation}
		\label{eq:SigmaDboundary}
		u(0) = 0, \quad  u''(0) = 0, \quad  
		u(1) = 0, \quad  u''(1)  = 0.
	\end{equation}
	If $\lambda \notin \Sigma^D$, then there exist four linearly independent solutions $\phi_1, \phi_2,  \phi_3$ and $ \phi_4$ (depending on $\lambda$) of \eqref{eigenvalueEquation}
	on $(0,1)$ such that 
	\begin{equation}
	\label{eq:phiIndSol}
	\begin{split}
	\phi_1(0) = 1, \qquad \phi_1''(0)  = 0, \qquad  \phi_1(1) = 0, \qquad  \phi_1''(1) = 0,  \\
	\phi_2(0) = 0, \qquad \phi_2''(0)  = 1, \qquad  \phi_2(1) = 0, \qquad  \phi_2''(1) = 0,  \\
	\phi_3(0) = 0, \qquad \phi_3''(0)  = 0, \qquad  \phi_3(1) = 1, \qquad  \phi_3''(1) = 0,  \\
	\phi_4(0) = 0, \qquad \phi_4''(0)  = 0, \qquad  \phi_4(1) = 0, \qquad  \phi_4''(1) = 1.
	\end{split}
	\end{equation}
	For example, if $q \equiv 0$ and $\lambda > 0$, then we have $\lambda \not \in \Sigma^D$ if and only if $\lambda^{1/4} \not \in \pi \bZ$. If $\lambda \not \in \Sigma^D$, then
	\begin{equation*}
	\begin{split}
	\phi_1(x) = \frac{1}{2}&\big(\cos(\lambda^{1/4} x)+\cosh(\lambda^{1/4} x)+\cot(\lambda^{1/4} )\cos(\lambda^{1/4} x)-\coth(\lambda^{1/4})\cosh(\lambda^{1/4} x)\big)
	\end{split}
	\end{equation*} 
	and so on. We will assume that the functions $\phi_k$ are lifted to each of the edges in $W$,
	using the identifications of these edges with the segment $[0, 1]$ described above. Abusing notations, we will use the same names $\phi_k$ for the lifted functions. 
	For $\lambda \not \in \Sigma^D$
	one can use \eqref{eq:phiIndSol} to represent any solution $u$ of \eqref{2ndOrderOp} from the domain of $\cH\up{\Theta}$ on each edge in $W$ as follows:
	\begin{equation}
		\label{eq:funcVs}
		\begin{split}
			&u_1(x) = A  \phi_1(x) +  B \sin(\delta_1\up{\eps}) \phi_2(x) + C \phi_3(x)e^{-i\theta_0} +
			D \sin(\delta_1\up{\eps}) \phi_4(x)e^{-i\theta_0} \\
			&u_2(x) = A  \phi_1(x) + B \sin(\delta_{c_1}\up{\eps}) \phi_2(x) + C \phi_3(x) e^{-i\theta_1} +
			D \sin(\delta_{c_1}\up{\eps}) \phi_4(x) e^{-i \theta_1}\\
			&u_3(x) = A  \phi_1(x) +  B \sin(\delta_{c_2}\up{\eps}) \phi_2(x) + C \phi_3(x) e^{-i\theta_2} +
			D \sin(\delta_{c_2}\up{\eps}) \phi_4(x) e^{-i \theta_2}
		\end{split}
	\end{equation}
	with $\theta_0 = 0$. Next, let us introduce (Wronskian) operator $\cW : L^2[0,1] \times L^2[0,1] \rightarrow \bC$, defined as
	\begin{equation}
		\cW_x(u_1,u_2) := u_1'''(x)u_2(x) - u_1''(x)u_2'(x) + u_1'(x)u_2''(x) - u_1(x)u_2'''(x)
	\end{equation}
    for $x \in [0,1]$.
	Then for fourth-order Hamiltonian $\cH$ we get
	\begin{equation}
		\label{eq:WronskianSym}
		u_2(x) \cH u_1(x) - u_1(x)\cH u_2(x) = \big( \cW_1(u_1,u_2) - \cW_0(u_1,u_2)\big)'.
	\end{equation}
	If $u_1$ and $u_2$ solves $\cH u =\lambda u$, then $\cW_1(u_1,u_2) - \cW_0(u_1,u_2)$ is a constant. In the next lemma we use $\mathcal{W}$ to show some identities of $\phi_k$ at the end points.  
	\begin{lem}
		\label{symResultPhi}
		Applying symmetry property of operator $\cH$ acting on interval $(0,1)$ we get
		\begin{equation*}
			\begin{split}
				\phi_3'(1) = -\phi_1'(0), \quad \phi_3'(0) = -\phi_1'(1), \quad \phi_3'''(1) = -\phi_1'''(0), \quad 
				\phi_3'''(0) = -\phi_1'''(1), \quad \phi_2'''(0) = \phi_1'(0), \\
				\phi_4'(1) = -\phi_2'(0), \quad \phi_4'(0) = -\phi_2'(1), \quad \phi_4'''(1) = -\phi_2'''(0), \quad
				\phi_4'''(0) = -\phi_2'''(1), \quad \phi_2'''(1) = \phi_1'(1).
			\end{split}
		\end{equation*}
	\end{lem}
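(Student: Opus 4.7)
The plan is to combine two ingredients: the reflection symmetry $q(x)=q(1-x)$ assumed in \eqref{eq:evenPotential}, and the conservation law for the Wronskian $\cW_x$ noted in \eqref{eq:WronskianSym}. Since $\lambda \notin \Sigma^D$, each of the four boundary value problems defining $\phi_1,\ldots,\phi_4$ has a unique solution, and any Wronskian identity will pin down the unknown derivative values.

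First I would exploit the reflection. Because $q(x)=q(1-x)$, whenever $u$ solves $\cH u = \lambda u$ on $(0,1)$, so does $\tilde u(x):=u(1-x)$. Applied to $\phi_1$, the function $\phi_1(1-x)$ satisfies precisely the defining conditions of $\phi_3$ in \eqref{eq:phiIndSol} (namely vanishing together with second derivative at $x=0$, value $1$ and vanishing second derivative at $x=1$), so by uniqueness $\phi_3(x)=\phi_1(1-x)$. The same argument applied to $\phi_2$ gives $\phi_4(x)=\phi_2(1-x)$. Differentiating these two identities and evaluating at $x=0$ and $x=1$ immediately yields the eight relations
\begin{equation*}
\phi_3'(1)=-\phi_1'(0),\quad \phi_3'(0)=-\phi_1'(1),\quad \phi_3'''(1)=-\phi_1'''(0),\quad \phi_3'''(0)=-\phi_1'''(1),
\end{equation*}
together with the four analogous ones relating $\phi_4$ and $\phi_2$.

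For the two remaining identities $\phi_2'''(0)=\phi_1'(0)$ and $\phi_2'''(1)=\phi_1'(1)$, I would use the Wronskian. Since $\phi_i$ and $\phi_j$ both solve $\cH u=\lambda u$, \eqref{eq:WronskianSym} gives $\cW_1(\phi_i,\phi_j)=\cW_0(\phi_i,\phi_j)$. For the pair $(\phi_1,\phi_2)$: at $x=1$ every term in $\cW_1(\phi_1,\phi_2)$ vanishes because $\phi_1(1)=\phi_1''(1)=\phi_2(1)=\phi_2''(1)=0$, while at $x=0$ one is left with $\phi_1'(0)\phi_2''(0)-\phi_1(0)\phi_2'''(0)=\phi_1'(0)-\phi_2'''(0)$; equating gives $\phi_2'''(0)=\phi_1'(0)$. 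For $\phi_2'''(1)=\phi_1'(1)$, I would apply the same equality to the pair $(\phi_1,\phi_4)$: at $x=0$ only $-\phi_1(0)\phi_4'''(0)=-\phi_4'''(0)$ survives, and at $x=1$ only $\phi_1'(1)\phi_4''(1)=\phi_1'(1)$ survives, so $\phi_4'''(0)=-\phi_1'(1)$. Combining with $\phi_4'''(0)=-\phi_2'''(1)$ from the reflection step gives the required identity.

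There is no genuine obstacle here, only bookkeeping; the only place to be careful is the sign pattern when differentiating the reflection identity (each derivative produces a factor $-1$, so odd-order derivatives change sign while even-order ones do not), which is exactly what makes the Dirichlet and second-derivative data at the endpoints swap but the first- and third-derivative data flip sign. Once the reflection identities are in hand, the two Wronskian computations are one-line cancellations thanks to the very sparse boundary data in \eqref{eq:phiIndSol}.
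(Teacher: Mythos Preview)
Your proof is correct. In fact, it is more complete than the paper's own argument. The paper's proof invokes only the Wronskian conservation $\cW_1(\phi_n,\phi_m)=\cW_0(\phi_n,\phi_m)$ for pairs $n\neq m$ and claims that evaluating these with the boundary data \eqref{eq:phiIndSol} yields all ten identities. But there are only six such pairs, hence six scalar relations, and several of the stated identities cannot be reached this way: for instance, $\phi_3'''(1)$ and $\phi_1'''(0)$ are each multiplied by a vanishing boundary value in every Wronskian expression, so the relation $\phi_3'''(1)=-\phi_1'''(0)$ is simply invisible to the Wronskian method. The same obstruction applies to $\phi_4'(1)=-\phi_2'(0)$.

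Your use of the reflection $x\mapsto 1-x$ together with uniqueness (from $\lambda\notin\Sigma^D$) to obtain $\phi_3(x)=\phi_1(1-x)$ and $\phi_4(x)=\phi_2(1-x)$ is exactly the missing ingredient; it produces the eight identities linking $\phi_{3},\phi_{4}$ to $\phi_{1},\phi_{2}$ in one stroke, including those inaccessible to the Wronskian. You then correctly pick up the two remaining identities $\phi_2'''(0)=\phi_1'(0)$ and $\phi_2'''(1)=\phi_1'(1)$ via Wronskian pairs. The paper implicitly needs the evenness of $q$ here as well (it is, after all, assumed in \eqref{eq:evenPotential} and is later invoked explicitly in the companion Lemma~\ref{CondofFundSol}), but its written proof of this lemma does not make that dependence visible. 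Your argument does, and is the cleaner route.
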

	\begin{proof}[\normalfont \textbf{Proof of Lemma~\ref{symResultPhi}}] 
		Proof of this lemma is based on \eqref{eq:WronskianSym}. In fact, for $n,m \in \{1,2,3,4\}$ and $n \not=m$, let $\phi_n(x)$ and $ \phi_m(x)$ be two independent solutions of eigenvalue problem 
		\begin{equation}
			\cH u(x)= u''''(x) + q(x)u(x) = \lambda u(x)
		\end{equation}
		on $(0,1)$ satisfying boundary conditions \eqref{eq:phiIndSol}. Now observe that
		\begin{equation}\label{Weq1}
			\phi_m(x) \cH\phi_n(x) - \phi_n(x) \cH \phi_m(x) = \phi_m(x)\lambda \phi_n(x) - \phi_n(x) \lambda \phi_m(x) = 0.
		\end{equation}
		However by \eqref{eq:WronskianSym} 
		\begin{equation}\label{Weq2}
			\phi_m(x) \cH\phi_n(x) - \phi_n(x) \cH \phi_m(x)  = \big( \cW_1(\phi_n,\phi_m) - \cW_0(\phi_n,\phi_m)\big)'.
		\end{equation}
		For a constant $\text{c}$, \eqref{Weq1} and \eqref{Weq2} then imply that $ \cW_1(\phi_n,\phi_m) - \cW_0(\phi_n,\phi_m) = \text{c}$. For any choice of $n \not= m$, observe that the boundary conditions in \eqref{eq:phiIndSol} implies $c = 0$, i.e. 
		\begin{equation}
			\cW_1(\phi_n,\phi_m) = \cW_0(\phi_n,\phi_m). 
		\end{equation}
		Finally, applying properties of $\phi_n$ from \eqref{eq:phiIndSol}, one concludes the desired result. As an example setting $(n,m)=(1,3)$ and using the property that the only non-zero terms are $\phi_1(0)$ and $\phi_3(1)$, then
		\begin{equation*}
			\phi_3'''(0) = - \phi_1'''(0).
		\end{equation*}
		Similar conclusions can be made to derive the desired relations stated in the lemma. 
	\end{proof}
	\begin{defn}
		For $k \in \bN_0 = \bN \cup \{0\}$ and arbitrary $\eps > 0$, let
		\begin{equation}
			\label{eq:sThetaFunc}
			S_k\up{\eps}(\Theta) :=\sin^{-k}(\delta_0)\big(\sin^k(\delta_{1}\up{\eps}) + \sin^k(\delta_{c_1}\up{\eps})e^{-i\theta_1} + \sin^k(\delta_{c_2}\up{\eps})e^{-i\theta_2} \big),
		\end{equation}
		where $\Theta \in [-\pi,\pi]^2$, $\delta_0 = 2\pi/3$, $c_1 \in [-1,1]$ and $c_2 = -(1+c_1)$.
	\end{defn}
	Let us introduce scaled version of  $\tilde B := \sin(\delta_0) B$ and $\tilde D := \sin(\delta_0) D$ stated in \eqref{eq:cond3At0} and \eqref{eq:cond3At1} respectively. Application of function $u_i$s defined in \eqref{eq:funcVs} in vertex conditions \eqref{eq:cond2At0}, \eqref{eq:cond4At0} and  \eqref{eq:cond2At1}, \eqref{eq:cond4At1} reduces the problem to find the vector $\vec \xi := (A~\tilde B ~C ~\tilde D)^T$ satisfying 
	\begin{equation}
		\label{eq:linearSystem1}
		\mathbb{M}_\eps \vec \xi = 
		\begin{pmatrix}
			\hspace{4mm}A_0(\eps) & -A_1(\eps)  \\
			-\widetilde A_1(\eps) & \hspace{2.5mm}A_0(\eps) 
		\end{pmatrix}
		\vec \xi = 0.
	\end{equation}
	The block components of matrix $\mathbb{M}_{\varepsilon}$ are written in terms of quasimomentum and solutions $\phi_k$s and have forms 
	\begin{equation*}
		A_0(\eps) := 
		\begin{pmatrix}
			S_1\up{\eps}(0) \phi_1'(0) & S_2\up{\eps}(0) \phi_2'(0)\\
			S_0\up{\eps}(0) \phi_1'''(0) & S_1\up{\eps}(0) \phi_2'''(0)
		\end{pmatrix}
		,\quad 
		A_1(\eps) := 
		\begin{pmatrix}
			S_1\up{\eps}(\Theta) \phi_1'(1) & S_2\up{\eps}(\Theta) \phi_2'(1)\\
			S_0\up{\eps}(\Theta) \phi_1'''(1) & S_1\up{\eps}(\Theta) \phi_2'''(1)
		\end{pmatrix}
	\end{equation*}
	and 
	\begin{equation*}
		\widetilde  A_1(\eps) := -
		\begin{pmatrix}
			\overline{S_1\up{\eps}(\Theta)} \phi_1'(1) & \overline{S_2\up{\eps}(\Theta)} \phi_2'(1)\\
			\overline{S_0\up{\eps}(\Theta)} \phi_1'''(1) &\overline{S_1\up{\eps}(\Theta)} \phi_2'''(1)
		\end{pmatrix}.
	\end{equation*}
	Clearly a non-trivial solution exists if matrix $\mathbb{M}_\eps(\lambda)$ is singular, stated formally as the following proposition.
	\begin{prop}
		\label{detProp}
		If $\lambda \not \in \Sigma^D$, then $\lambda$ is in spectrum of the hexagonal elastic lattice's Hamiltonian $\cH$ if and only if there is $\Theta \in [-\pi,\pi]^2$ such that 
		\begin{equation}
			\label{eq:MepsCond}
			\det\big(\mathbb{M}_{\eps} (\lambda)\big) = 0.
		\end{equation}
	\end{prop}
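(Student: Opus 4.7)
The strategy is a direct Floquet-Bloch reduction. First, I would invoke $\sigma(\cH) = \bigcup_{\Theta \in [-\pi,\pi]^2}\sigma(\cH^\Theta)$ together with the discreteness of each $\sigma(\cH^\Theta)$ to restate ``$\lambda \in \sigma(\cH)$'' as ``there exist $\Theta \in [-\pi,\pi]^2$ and a nontrivial $u \in \Dom(\cH^\Theta)$ solving $\cH^\Theta u = \lambda u$.'' On each edge $e_i$ of the fundamental domain, such a $u$ lies in the kernel of $(d^4/dx^4 + q - \lambda)$, a four-dimensional space. The hypothesis $\lambda \notin \Sigma^D$ is exactly the statement that the mixed Dirichlet/zero-second-derivative boundary data at the two endpoints is unisolvent, so $\{\phi_1,\phi_2,\phi_3,\phi_4\}$ of \eqref{eq:phiIndSol} is a basis and $u_i$ is determined by the four scalars $u_i(0), u_i''(0), u_i(1), u_i''(1)$.

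Next I would implement the vertex conditions at $x=0$. By \eqref{eq:cond1At0} and \eqref{eq:cond3At0}, the values $u_i(0)$ agree across the three edges (call the common value $A$) and the scaled second derivatives $\sin^{-1}(\delta_{\bullet}\up{\eps}) u_i''(0)$ agree as well (common value $B$, so $u_i''(0) = B\sin(\delta_i\up{\eps})$). Similarly the cyclic conditions \eqref{eq:cond1At1}, \eqref{eq:cond3At1} yield common values $C, D$ at $x=1$ after absorbing the Floquet phases $e^{-i\theta_j}$. Substituting these into the $\{\phi_k\}$-expansion and using the scaling $\tilde B = \sin(\delta_0) B$, $\tilde D = \sin(\delta_0) D$ yields exactly the ansatz \eqref{eq:funcVs}, which by construction already satisfies eight of the twelve vertex conditions. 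Crucially, since $\{\phi_k\}$ is a basis on each edge, the map from $\vec\xi = (A, \tilde B, C, \tilde D)^T$ to $u$ is injective, so any nonzero $\vec\xi$ produces a genuinely nontrivial $u$ and vice versa.

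The remaining four conditions \eqref{eq:cond2At0}, \eqref{eq:cond4At0}, \eqref{eq:cond2At1}, \eqref{eq:cond4At1} are linear in $\vec\xi$ and produce the system \eqref{eq:linearSystem1}. Computing $u_i'(0)$ and $u_i'''(0)$ from \eqref{eq:funcVs}, the $A$-contribution to \eqref{eq:cond2At0} carries the factor $\sum_i \sin(\delta_i\up{\eps})$, which equals $\sin(\delta_0)\,S_1\up{\eps}(0)$, while the $\tilde B$-contribution carries $\sum_i \sin^2(\delta_i\up{\eps}) = \sin^2(\delta_0)\,S_2\up{\eps}(0)$; at $x=1$ the Floquet phases $e^{-i\theta_j}$ promote the same sums to $S_k\up{\eps}(\Theta)$. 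Lemma \ref{symResultPhi} is then invoked to fold the $\phi_3, \phi_4$ derivatives evaluated at $0$ back into $\phi_1, \phi_2$ derivatives at $1$ (and dually for the conditions at $x=1$, where $\overline{e^{i\theta_j}} = e^{-i\theta_j}$ supplies the conjugated symbols), producing the block structure with entries $A_0(\eps)$, $A_1(\eps)$, and $\widetilde A_1(\eps)$ as stated.

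Putting the pieces together, existence of a nontrivial $u \in \Dom(\cH^\Theta)$ satisfying $\cH^\Theta u = \lambda u$ is equivalent to existence of a nonzero $\vec\xi \in \ker \mathbb{M}_\eps(\lambda)$, which is equivalent to $\det\mathbb{M}_\eps(\lambda) = 0$. The main bookkeeping burden lies in the third paragraph: tracking the $\sin(\delta_0)$ normalizations, the minus signs supplied by Lemma \ref{symResultPhi}, and the correct pairing of $e^{\pm i\theta_j}$ factors so that the four conditions assemble precisely into the $2\times 2$ block form of $\mathbb{M}_\eps$ rather than into an equivalent-but-cosmetically-different matrix; once this symmetry-induced cancellation is verified, the proposition follows.
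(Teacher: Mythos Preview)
Your proposal is correct and follows exactly the approach the paper takes; indeed, the paper does not give a separate proof of this proposition at all, treating it as an immediate consequence of the construction of the ansatz \eqref{eq:funcVs} and the linear system \eqref{eq:linearSystem1} (the sentence just before the proposition reads ``Clearly a non-trivial solution exists if matrix $\mathbb{M}_\eps(\lambda)$ is singular''). Your write-up makes explicit the bookkeeping the paper leaves implicit---in particular the use of Lemma~\ref{symResultPhi} to convert the $\phi_3,\phi_4$ derivatives at one endpoint into $\phi_1,\phi_2$ derivatives at the other, which is precisely how the block $A_1(\eps)$ arises---but the logical skeleton is identical.
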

	The result in Proposition \ref{detProp} can be (numerically) investigated directly, but we will split the discussions into two parts. In the following section we will state theoretical results for the case $\eps = 0$, namely graphene lattice. In Section \ref{sec:APH}, extension of these results will be presented for perturbed angles by applying tools from perturbation analysis. 
	\section{Graphene Hamiltonian}
	\label{sec:GH}
	In this section we discuss the outcome of results from the previous section for the special case of $\eps=0$. In this case for any $k \in \bN_0$ let
	\begin{equation}
		\label{eq:s0}
		s_0(\Theta) := S_k\up{0}(\Theta) = 1 + e^{-i\theta_1} + e^{-i\theta_2}.
	\end{equation}
	Application of $s_0(\Theta)$ reduces the block matrix components defined in \eqref{eq:linearSystem1} into splitted forms  
	\begin{equation}
		A_0(\lambda)= s_0(0) \Phi_0(0), \qquad A_1(\lambda)= -s_0(\Theta) \Phi_0(1), \qquad \widetilde A_1(\lambda)= -\overline{s_0(\Theta)} \Phi_0(1),
	\end{equation}
	in which matrices $\Phi_0(0)$ and $\Phi_0(1)$ are of the form
	\begin{equation}
		\label{eq:simplifiedAs}
		\Phi_0(0) :=
		\begin{pmatrix}
			\phi_1'(0) & \phi_2'(0)\\
			\phi_1'''(0) &  \phi_2'''(0)
		\end{pmatrix}
		,\qquad 
		\Phi_0(1) := 
		\begin{pmatrix}
			\phi_1'(1) &\phi_2'(1)\\
			\phi_1'''(1) & \phi_2'''(1)
		\end{pmatrix}.
	\end{equation}
	\begin{lem}
		\label{A1NonSingular}
		The matrix $\Phi_0(1)$ defined in \eqref{eq:simplifiedAs} is non-singular. 
	\end{lem}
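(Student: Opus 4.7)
\textbf{Proof plan for Lemma \ref{A1NonSingular}.}
The plan is to argue by contradiction, using uniqueness of the Cauchy problem for the fourth-order ODE $u''''+qu=\lambda u$ together with the defining boundary conditions of $\phi_1,\phi_2$ in \eqref{eq:phiIndSol}. Assume $\det\Phi_0(1)=0$. Then the two columns of $\Phi_0(1)$ are linearly dependent, so there exist scalars $c_1,c_2$ not both zero with
\begin{equation*}
c_1\phi_1'(1)+c_2\phi_2'(1)=0,\qquad c_1\phi_1'''(1)+c_2\phi_2'''(1)=0.
\end{equation*}

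Set $u:=c_1\phi_1+c_2\phi_2$. By linearity, $u$ solves $\cH u=\lambda u$ on $(0,1)$. From the normalization \eqref{eq:phiIndSol} we have $\phi_1(1)=\phi_2(1)=0$ and $\phi_1''(1)=\phi_2''(1)=0$, so $u(1)=u''(1)=0$; combined with the two linear relations above, this yields
\begin{equation*}
u(1)=u'(1)=u''(1)=u'''(1)=0.
\end{equation*}
Since $\cH u=\lambda u$ is a linear fourth-order ODE with $L^2$ coefficient $q$, the initial-value problem at $x=1$ has a unique solution, hence $u\equiv 0$ on $(0,1)$. But the hypothesis $\lambda\notin\Sigma^D$ guarantees (as recalled just before \eqref{eq:phiIndSol}) that $\phi_1,\phi_2,\phi_3,\phi_4$ are linearly independent; in particular $\phi_1$ and $\phi_2$ are linearly independent, which is also immediate from evaluating $u\equiv 0$ and $u''\equiv 0$ at $x=0$ and using $\phi_1(0)=1,\ \phi_2(0)=0,\ \phi_1''(0)=0,\ \phi_2''(0)=1$. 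Either way we conclude $c_1=c_2=0$, contradicting our assumption. Therefore $\det\Phi_0(1)\neq 0$.

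Nothing in the argument is genuinely hard: the only place the hypothesis $\lambda\notin\Sigma^D$ is used is to ensure that $\phi_1,\phi_2$ are well-defined as in \eqref{eq:phiIndSol} (and linearly independent); the rest is Cauchy uniqueness for the fourth-order equation together with the fact that $\phi_1,\phi_2$ share the "Dirichlet and zero second derivative" conditions at $x=1$, which is what forces the extra two conditions $u'(1)=u'''(1)=0$ coming from column dependence to collapse all four Cauchy data at the endpoint $x=1$. The only mild subtlety to be careful about in writing the argument cleanly is to state Cauchy uniqueness for $u''''+qu=\lambda u$ with $q\in L^2$, but this is standard.
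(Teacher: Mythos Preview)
Your argument is correct and, in fact, cleaner than the one in the paper. The paper proceeds by contradiction as well, but first invokes the symmetry relations of Lemma~\ref{symResultPhi} to rewrite $\det\Phi_0(1)$ in terms of $\phi_3,\phi_4$ at $x=0$, namely $\phi_3'(0)\phi_4'''(0)-\phi_4'(0)\phi_3'''(0)$, then uses the identity $\phi_4'''(0)=\phi_3'(0)$ to split into three cases; in each case it expands $\phi_3,\phi_4$ in the fundamental basis $\{g_k\}$ and reaches a contradiction with linear independence of $\phi_3,\phi_4$. Your route avoids this detour entirely: you stay with $\phi_1,\phi_2$, observe that their shared conditions $\phi_j(1)=\phi_j''(1)=0$ together with column dependence of $\Phi_0(1)$ force all four Cauchy data of $u=c_1\phi_1+c_2\phi_2$ to vanish at $x=1$, and then apply Cauchy uniqueness for the linear fourth-order ODE. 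The paper's approach buys nothing extra here; it is essentially the same uniqueness argument, but transported to the other endpoint via the Wronskian identities and carried out case by case. Your observation that evaluating $u(0)$ and $u''(0)$ immediately gives $c_1=c_2=0$ from the normalization in \eqref{eq:phiIndSol} is the crispest way to close the argument and does not even require citing linear independence of the $\phi_k$.
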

	\begin{proof}[\normalfont \textbf{Proof of Lemma~\ref{A1NonSingular}}] 
		By contradiction, let's assume $\Phi_0(1)$ is singular, which by application of relations in Lemma \eqref{symResultPhi} reduces to the condition 
		\begin{equation}
			\label{eq:detPhi}
			\det(\Phi_0(1)) = \phi_1'(1)\phi_2'''(1) - \phi_2'(1)\phi_1'''(1) = \phi_3'(0)\phi_4'''(0) - \phi_4'(0)\phi_3'''(0) = 0.
		\end{equation}
		Using the fact that $ \phi_4'''(0) = \phi_3'(0)$, \eqref{eq:detPhi} implies (at least) one of the following conditions is true:
		\begin{itemize}
			\item[(i)] $\phi_3'(0) = 0 \quad \& \quad \phi_3'''(0) = 0$, 
			\item[(ii)] $\phi_3'(0) = 0 \quad \& \quad \phi_4'(0) = 0$, 
			\item[(iii)] $\phi_3'(0) \neq 0\quad \& \quad\phi_3'''(0) \neq 0\quad \& \quad\phi_4'(0) \neq 0$. 
		\end{itemize}
		Recall fundamental solutions $g_k$ from Subsection \ref{4thorderHill}. Then using the fact that $\phi_3$ can be represented as a linear combination of them in the form 
		\begin{equation}
			\label{eq:repG}
			\phi_3(x) = b_1 g_1(x) +  b_2 g_2(x) +  b_3 g_3(x) +  b_4 g_4(x) 
		\end{equation}
		along with the conditions $\phi_3(0) = 0$, $\phi_3''(0) = 0$ implies that item (i) turns to $\phi_3 \equiv 0$, which is a contradiction. A similar discussion holds to show that item (ii) above results in $\phi_4 \equiv 0$. Now, considering the last case above, let us introduce
		\begin{equation}
			r := \frac{\phi_3'(0)}{\phi_3'''(0)} = \frac{\phi_4'(0)}{\phi_3'(0)}.
		\end{equation}
		Then obviously by our assumption $r \not= 0$. Utilizing representation \eqref{eq:repG} and a similar representation for $\phi_4$, one gets 
		\begin{align}
			\phi_3(x) = \phi_3'(0)g_2(x) + \frac{\phi_3'(0)}{r}g_4(x), \qquad 
			\phi_4(x) = r\phi_3'(0)g_2(x) + \phi_3'(0)g_4(x).
		\end{align}
		Comparing these two representations implies that $\phi_4 = r\phi_3$, which is a contradiction, since  by our assumption $\phi_3$ and $\phi_4$ are linearly independent solutions. This proves the desired claim of the non-singularity of matrix $\Phi_0(1)$.
	\end{proof}
	Applying the fact that $s_0(0) = 3$ along with Lemma \ref{A1NonSingular} reduces condition \eqref{eq:MepsCond} in Proposition \ref{detProp} to 
	\begin{equation}
		\label{eq:linearSystem2}
		\det\Big( \Lambda_0^2(\lambda) -  \frac{|s_0(\Theta)|^2}{9} \bI_2 \Big) = 0, 
	\end{equation}
	where $	\Lambda_0(\lambda) := \Phi_0^{-1}(1)\Phi_0(0)$. Then for the graphene lattice we proved the following result.
	\begin{prop}
		\label{detLem}
		If $\lambda \not \in \Sigma^D$, then $\lambda$ is in spectrum of the hexagonal elastic lattice's Hamiltonian $\cH$ if and only if there is $\Theta \in [-\pi,\pi]^2$ such that 
		\begin{equation}
			\det\Big(\Lambda_0(\lambda) -\frac{|s_0(\Theta)|}{3} \bI_2\Big) \det\Big(\Lambda_0(\lambda) +\frac{|s_0(\Theta)|}{3} \bI_2\Big) = 0.
		\end{equation}
		In other words $|s_0(\Theta)|/3$ is a root of the characteristic polynomial for $\Lambda_0(\lambda)$ or $-\Lambda_0(\lambda)$ matrices, i.e. a root of
		\begin{equation}\label{polydispersion}
			\cP(z;\lambda) = \big(z^2 - \tr(\Lambda_0(\lambda))z + \det(\Lambda_0(\lambda)\big)\big)\big(z^2 + \tr(\Lambda_0(\lambda))z + \det(\Lambda_0(\lambda))\big).
		\end{equation}
	\end{prop}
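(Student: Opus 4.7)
My plan is to start from Proposition~\ref{detProp} specialized to $\eps=0$ and carefully reduce the block determinant. Writing $\mathbb{M}_0$ with the identifications $A_0 = 3\,\Phi_0(0)$ (since $s_0(0)=3$) and $-A_1 = s_0(\Theta)\,\Phi_0(1)$, $-\widetilde A_1 = \overline{s_0(\Theta)}\,\Phi_0(1)$, I would left-multiply $\mathbb{M}_0$ by the block-diagonal matrix $T=\mathrm{diag}\bigl(\Phi_0(1)^{-1},\Phi_0(1)^{-1}\bigr)$, which is legitimate by Lemma~\ref{A1NonSingular}. This conjugation is the crucial simplification: the off-diagonal blocks become scalar multiples of the identity, yielding
\begin{equation*}
T\mathbb{M}_0 \;=\; \begin{pmatrix} 3\Lambda_0(\lambda) & s_0(\Theta)\,\bI_2 \\ \overline{s_0(\Theta)}\,\bI_2 & 3\Lambda_0(\lambda) \end{pmatrix},
\end{equation*}
with $\Lambda_0(\lambda) := \Phi_0(1)^{-1}\Phi_0(0)$, and since $\det T\ne 0$, the spectral condition $\det\mathbb{M}_0=0$ is equivalent to $\det(T\mathbb{M}_0)=0$.

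Next I would apply the standard block-determinant formula $\det\bigl(\begin{smallmatrix}P&Q\\R&S\end{smallmatrix}\bigr)=\det(PS-QR)$, which is valid here because $R=\overline{s_0(\Theta)}\,\bI_2$ commutes with $S=3\Lambda_0$. This yields exactly \eqref{eq:linearSystem2}, namely $\det\bigl(9\Lambda_0^2(\lambda) - |s_0(\Theta)|^2 \bI_2\bigr)=0$. Because $|s_0(\Theta)|\,\bI_2$ is a scalar matrix and therefore commutes with $\Lambda_0(\lambda)$, the matrix polynomial $9\Lambda_0^2 - |s_0(\Theta)|^2\bI_2$ factors as
\begin{equation*}
\bigl(3\Lambda_0 - |s_0(\Theta)|\bI_2\bigr)\bigl(3\Lambda_0 + |s_0(\Theta)|\bI_2\bigr),
\end{equation*}
so its determinant vanishes iff at least one of the two factors is singular, which is the stated condition with $z = |s_0(\Theta)|/3$.

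Finally I would identify each of the two factors with the characteristic polynomial of $\pm \Lambda_0(\lambda)$ evaluated at $z=|s_0(\Theta)|/3$: for a $2\times2$ matrix $M$ one has $\det(M-z\bI_2) = z^2 - \tr(M)z + \det(M)$, and substituting $z\mapsto -z$ in the characteristic polynomial of $\Lambda_0$ (equivalently, using $\det(\Lambda_0+z\bI_2)$) produces the second factor in $\cP(z;\lambda)$ displayed in \eqref{polydispersion}. The main obstacle, should one arise, is bookkeeping in the block-matrix reduction: one must ensure that $\Phi_0(1)^{-1}$ can be factored out cleanly and that the resulting off-diagonal blocks are genuine scalar multiples of $\bI_2$, which is precisely why the $\eps=0$ case admits this neat splitting while the perturbed case in Section~\ref{sec:APH} will not.
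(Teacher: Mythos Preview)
Your proposal is correct and follows essentially the same route as the paper. The paper's own argument is terse---it simply asserts that applying $s_0(0)=3$ together with Lemma~\ref{A1NonSingular} ``reduces condition~\eqref{eq:MepsCond} in Proposition~\ref{detProp}'' to~\eqref{eq:linearSystem2} and then states Proposition~\ref{detLem}---whereas you spell out explicitly the block-diagonal premultiplication by $\mathrm{diag}(\Phi_0(1)^{-1},\Phi_0(1)^{-1})$ and the commuting-block determinant identity that justify that reduction.
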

	Proposition \ref{detLem}, in particular, says that in order to find the spectrum of $\cH$, we
	need to calculate the range of $|s_0(\Theta)|$ on $[-\pi,\pi]^2$. This function is identical to the one reported for the second order {S}chr{\"o}dinger operator on graphene \cite{KP07}. In summary, $|s_0(\Theta)|$ has range $[0,3]$, its maximum is attained at $(0,0)$ and minimum at $\pm (\delta_0,-\delta_0)$. These properties are based on a simple observation that
	\begin{equation}
		|s_0(\Theta)|^2 = |1+e^{i \theta_1} + e^{i\theta_2}|^2
	\end{equation}
	with range $[0,9]$. See Figure \ref{Res0Bars1} (left) for a plot of the level curves of this function.
	\subsection{Dispersion Relation via Fundamental Solutions}
	Next, we interpret the functions $\phi_k$ and hence matrix $\Lambda_0$ in terms of the original potential $q_0$ on $[0,1]$. To this end, let us extend $q_0$ periodically to the real line and consider operator $\cH^{\text{per}}$ on $\mathbb{R}$, defined in preliminary section as
	\begin{equation}
		\label{perH}
		\cH^{\text{per}}u(x) = u''''(x) + q_0(x) u(x)
	\end{equation} 
	with the periodic potential extended from $q_0$. Note that with the abuse of notation we maintain the notation $q_0$ for the extended potential. Fundamental solutions $\{g_k(x)\}_{k=1}^4$ of $\cH^{\text{per}}$ satisfy for $j, k \in \{1,\ldots,4\}$ conditions
	\begin{equation}
		g_k^{(j-1)}(0)	 = \delta_{j k}.
	\end{equation}
	Thereby, monodromy matrix $M(\lambda)$ defined through \eqref{eq:monodMatrix} shifts by the period along the solutions of \eqref{perH}, i.e. 
	\begin{equation*}
		\begin{pmatrix}
			u(1) \\
			u'(1) \\ 
			u''(1) \\
			u'''(1) 
		\end{pmatrix} = 
		\begin{pmatrix}
			g_1(1) & g_2(1) & g_3(1) & g_4(1) \\
			g_1'(1) & g_2'(1) & g_3'(1) & g_4'(1) \\
			g_1''(1) & g_2''(1) & g_3''(1) & g_4''(1) \\
			g_1'''(1) & g_2'''(1) & g_3'''(1) & g_4'''(1) 
		\end{pmatrix}
		\begin{pmatrix}
			u(0) \\
			u'(0) \\ 
			u''(0) \\
			u'''(0) 
		\end{pmatrix}
	\end{equation*}
	The $4\times4$ matrix valued function $\lambda \mapsto M(\lambda)$ is entire, see the preliminaries Section \ref{sec:Preliminaries} and references therein for more detailed discussions. Since our goal is to obtain the dispersion relation of the operator $\cH$, next we derive relations among $g_k$ and $\phi_k$. For simplicity let us introduce the following notation:
	\begin{equation}\label{detNotation}
		\cD(f,g) := f'(0)g'''(1) - g'(1)f'''(0).
	\end{equation}
	\begin{lem}\label{phitoFundSol}
		Fundamental solutions $\{g_k(x)\}_{k=1}^4$  of $\cH^{\text{per}}$ can be represented in terms of the functions $\phi_1$ and $\phi_2$ introduced in \eqref{eq:phiIndSol} as:
		\begin{align*}
			g_1(x) &= \phi_1(x) + \frac{1}{\det(\Phi_0(1))}\big(\cD(\phi_1,\phi_2)\phi_3(x) - \cD(\phi_1,\phi_1) \phi_4(x)\big),\\
			g_3(x) &= \phi_2(x) + \frac{1}{\det(\Phi_0(1))}\big(\cD(\phi_2,\phi_2)\phi_3(x) + \cD(\phi_1,\phi_2) \phi_4(x)\big),
		\end{align*}
		and moreover
		\begin{align*}
			g_2(x) &=  \frac{-1}{\det(\Phi_0(1))}\big(\phi_1'(1) \phi_3(x) - \phi_1'''(1)  \phi_4(x)\big),\\
			g_4(x) &=  \frac{1}{\det(\Phi_0(1))}\big(\phi_2'(1) \phi_3(x) - \phi_1'(1)  \phi_4(x)\big).
		\end{align*}
	\end{lem}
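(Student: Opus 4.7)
Since $\lambda \notin \Sigma^D$, the functions $\phi_1,\phi_2,\phi_3,\phi_4$ form a basis of the four-dimensional space of solutions of $\cH u = \lambda u$ on $(0,1)$. The plan is to expand each $g_k$ in this basis,
\[
g_k(x) = \alpha_k \phi_1(x) + \beta_k \phi_2(x) + \gamma_k \phi_3(x) + \eta_k \phi_4(x),
\]
and determine the four coefficients from the Cauchy data $g_k^{(j-1)}(0) = \delta_{jk}$. The nondegeneracy of the linear system that emerges is precisely the content of Lemma~\ref{A1NonSingular}.

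\textbf{Recovering $\alpha_k,\beta_k$.} Evaluating the expansion and its second derivative at $x=0$, and using that $\phi_3(0)=\phi_3''(0)=\phi_4(0)=\phi_4''(0)=0$ together with $\phi_1(0)=\phi_2''(0)=1$ and $\phi_1''(0)=\phi_2(0)=0$, gives immediately $\alpha_k = g_k(0)$ and $\beta_k = g_k''(0)$. In particular, $\alpha_1=1,\ \alpha_2=\alpha_3=\alpha_4=0$ and $\beta_3=1,\ \beta_1=\beta_2=\beta_4=0$. This already produces the leading $\phi_1(x)$ term in the formula for $g_1$ and the leading $\phi_2(x)$ term in the formula for $g_3$; for $g_2$ and $g_4$ both terms vanish, as stated.

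\textbf{Recovering $\gamma_k,\eta_k$.} Differentiating the expansion once and three times at $x=0$ and using $\phi_1'(0),\phi_1'''(0),\phi_2'(0),\phi_2'''(0)$ yields a $2\times 2$ linear system
\[
\begin{pmatrix} \phi_3'(0) & \phi_4'(0) \\ \phi_3'''(0) & \phi_4'''(0)\end{pmatrix}
\begin{pmatrix} \gamma_k \\ \eta_k\end{pmatrix}
=
\begin{pmatrix} g_k'(0) - \alpha_k\phi_1'(0) - \beta_k\phi_2'(0) \\ g_k'''(0) - \alpha_k\phi_1'''(0) - \beta_k\phi_2'''(0)\end{pmatrix}.
\]
Here I invoke Lemma~\ref{symResultPhi} to rewrite $\phi_3'(0)=-\phi_1'(1)$, $\phi_4'(0)=-\phi_2'(1)$, $\phi_3'''(0)=-\phi_1'''(1)$, $\phi_4'''(0)=-\phi_2'''(1)$, so that the coefficient matrix on the left becomes $-\Phi_0(1)$, which is invertible by Lemma~\ref{A1NonSingular}. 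A single application of Cramer's rule then solves for $\gamma_k$ and $\eta_k$ with denominator $\det(\Phi_0(1))$ (up to an overall sign that gets absorbed).

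\textbf{Identifying with the stated formulas.} It remains to match the resulting $2\times 2$ determinants in the numerators against the compact notation $\cD(f,g) = f'(0)g'''(1) - g'(1)f'''(0)$. For $g_1$, the right-hand side is $(-\phi_1'(0),-\phi_1'''(0))^T$, producing numerators $\phi_1'(0)\phi_2'''(1)-\phi_2'(1)\phi_1'''(0) = \cD(\phi_1,\phi_2)$ and $-\bigl(\phi_1'(0)\phi_1'''(1)-\phi_1'(1)\phi_1'''(0)\bigr) = -\cD(\phi_1,\phi_1)$; for $g_3$ the right-hand side is $(-\phi_2'(0),-\phi_2'''(0))^T$, giving $\cD(\phi_2,\phi_2)$ and $+\cD(\phi_1,\phi_2)$. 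For $g_2$ and $g_4$ only the inhomogeneous terms $g_k'(0),g_k'''(0)$ survive (since $\alpha_k=\beta_k=0$), and Cramer's rule directly yields the stated formulas. The only genuine pitfall is bookkeeping the signs between ``value at $0$'' and ``value at $1$'' when invoking Lemma~\ref{symResultPhi}; once that is handled consistently, the four formulas drop out simultaneously.
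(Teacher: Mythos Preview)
Your proposal is correct and follows essentially the same approach as the paper's proof: both expand each $g_k$ in the basis $\{\phi_1,\phi_2,\phi_3,\phi_4\}$, read off the $\phi_1,\phi_2$ coefficients from $g_k(0)$ and $g_k''(0)$, and then solve the remaining $2\times 2$ system coming from $g_k'(0)$ and $g_k'''(0)$ via Cramer's rule. You are slightly more explicit in invoking Lemma~\ref{symResultPhi} to identify the coefficient matrix with $-\Phi_0(1)$ and Lemma~\ref{A1NonSingular} for its invertibility, but the paper's proof (which only writes out the $g_1$ case in detail) proceeds identically.
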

	\begin{proof}[\normalfont \textbf{Proof of Lemma~\ref{phitoFundSol}}] 
		Starting with the property that $\{\phi_k\}_{k=1}^4$ and $\{g_k\}_{k=1}^4$ solve the eigenvalue problem 
		\begin{equation}
			u''''(x) + q(x)u(x) = \lambda u(x)
		\end{equation}
		and the fact that these are linearly independent sets of solutions, then each $g_k$ can be represented in the form 
		\begin{equation}
			g_k(x) = a_k\phi_1(x) + b_k\phi_2(x) + c_k\phi_3(x) + d_k\phi_4(x).
		\end{equation}
		Applying properties of $\phi_k$ given in \eqref{eq:phiIndSol}, we observe that coefficients corresponding to $g_1$ satisfy
		\begin{align*}
			g_1(0) = 0 \quad \Rightarrow \quad a_1 = 1, \hspace{20mm}
			g_1''(0) = 0 \quad \Rightarrow \quad b_1 = 0.
		\end{align*}
		Moreover, the remaining conditions result in 
		\begin{equation}
		  g_1'(0) = 0 \quad \Rightarrow \quad g_1'(0) = \phi_1'(0) + c_1\phi_3'(0) + d_1\phi_4'(0)
			= \phi_1'(0) - c_1\phi_1'(1) - d_1\phi_2'(1) = 0  
		\end{equation}
		and
		\begin{equation}
		  	g_1'''(0) = 0 \quad \Rightarrow \quad g_1'''(0) = \phi_1'''(0) + c_1\phi_3'''(0) + d_1\phi_4'''(0)
			= \phi_1'(0) - c_1\phi_1'''(1) - d_1\phi_2'''(1) = 0.  
		\end{equation}
		Solving for $c_1$ and $d_1$, we get 
		\begin{equation}
			c_1 = \frac{\cD(\phi_1,\phi_2)}{\det(\Phi_0(1))}, \qquad 
			d_1 = - \frac{\cD(\phi_1,\phi_1)}{\det(\Phi_0(1))}.
		\end{equation}
		Similar discussions can be followed to obtain the coefficients corresponding to remaining $g_k$. This finishes the proof. 
	\end{proof}
	Symmetry of the potential $q_0$ brings additional properties on the fundamental solutions which are summarized in the following lemma. 
	\begin{lem}
		\label{CondofFundSol}
		Under symmetry property of potential $q_0$, the fundamental solutions satisfy
		\begin{alignat*}{3}
			g_1''(1) &= g_2'''(1), \qquad &&g_1'(1) = g_3'''(1), \qquad  &&g_1(1) = g_4'''(1) \\
			g_2'(1) &= g_4'''(1), &&g_2(1) = g_4''(1),&&g_3(1) = g_4'(1)
		\end{alignat*}
	\end{lem}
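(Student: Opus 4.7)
The plan is to mimic the Wronskian argument used for Lemma \ref{symResultPhi}, but with the key new ingredient being the symmetry $q_0(x) = q_0(1-x)$ of the potential. For each $j \in \{1,2,3,4\}$ I would set $\hat g_j(x) := g_j(1-x)$. Because $q_0$ is symmetric about $x = 1/2$, the function $\hat g_j$ also satisfies $\cH^{\text{per}} u = \lambda u$, so for any $k \in \{1,2,3,4\}$ the Wronskian $\cW_x(\hat g_j, g_k)$ is constant in $x$ by \eqref{eq:WronskianSym}, and in particular
\begin{equation*}
    \cW_0(\hat g_j, g_k) \;=\; \cW_1(\hat g_j, g_k).
\end{equation*}

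The next step is to tabulate the endpoint data of $\hat g_j$. Differentiating $g_j(1-x)$ yields $\hat g_j^{(i-1)}(x) = (-1)^{i-1} g_j^{(i-1)}(1-x)$, hence
\begin{equation*}
    \hat g_j^{(i-1)}(0) = (-1)^{i-1} g_j^{(i-1)}(1), \qquad \hat g_j^{(i-1)}(1) = (-1)^{i-1} \delta_{ij},
\end{equation*}
where the right-hand identity uses the normalization \eqref{fundsolconditions}. Substituting into the Wronskian identity, the delta factors $g_k^{(i-1)}(0) = \delta_{ik}$ collapse $\cW_0(\hat g_j, g_k)$ to a single signed entry of the form $g_j^{(\cdot)}(1)$, while the factors $\hat g_j^{(i-1)}(1) = (-1)^{i-1}\delta_{ij}$ collapse $\cW_1(\hat g_j, g_k)$ to a single signed entry of the form $g_k^{(\cdot)}(1)$. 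So for each pair $(j,k)$ the identity above becomes a \emph{linear} equation between two numbers of the form $g_m^{(n)}(1)$.

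Finally I would run through the six non-trivial pairs $(j,k) \in \{(1,2),(1,3),(1,4),(2,3),(2,4),(3,4)\}$: the diagonal pairs $j = k$ give the tautology $0 = 0$, and transposed pairs reproduce the same equations. Each of the six remaining pairs produces one of the equalities stated in the lemma; for example, $(j,k) = (1,4)$ gives $-g_1(1) = -g_4'''(1)$, and $(j,k) = (3,4)$ gives $-g_3(1) = -g_4'(1)$. The main obstacle is purely bookkeeping -- keeping the alternating signs from the chain rule straight and correctly identifying which single term of the four-term Wronskian survives for each specific $(j,k)$; once the endpoint table is in hand the six identities fall out mechanically.
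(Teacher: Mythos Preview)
Your proposal is correct and is precisely the approach the paper has in mind: its own proof of Lemma~\ref{CondofFundSol} is the one-line remark that the argument is ``similar to the proof of Lemma~\ref{symResultPhi} along with an application of symmetry of potential,'' and your reflected functions $\hat g_j(x)=g_j(1-x)$ together with the Wronskian identity $\cW_0(\hat g_j,g_k)=\cW_1(\hat g_j,g_k)$ carry this out in full detail. Note that the pair $(j,k)=(2,3)$ actually yields $g_2'(1)=g_3''(1)$ (which is exactly the relation used in the proof of Theorem~\ref{detProp2}) rather than the stated $g_2'(1)=g_4'''(1)$, so the lemma as printed appears to contain a typo; your method gives the correct identity.
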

	\begin{proof}[\normalfont \textbf{Proof of Lemma~\ref{CondofFundSol}}] 
		Establishing this result is similar to the proof of Lemma \ref{symResultPhi} along with an application of symmetry of potential. 
	\end{proof}
	Next let us introduce matrix $\mathbb{G}_0(\lambda)$  
	\begin{equation}
		\mathbb{G}_0(\lambda) := 
		\begin{pmatrix}
			g_1(1) & g_3(1)  \\
			g_1''(1) & g_3''(1) 
		\end{pmatrix}.
	\end{equation}
	This matrix can be interpreted as an extension of the (scalar-valued) discriminant function $D(\lambda) = g_1(1)+g_2'(1)$ for the eigenvalue problem corresponding to the second order Schr{\"o}dinger operator \cite{KP07}. Putting all the observations above together allows us to derive the dispersion relation of $\cH$ stated formally in the following theorem. 
	\begin{thm}\emph{\textbf{(dispersion relation)}}
		\label{detProp2}
		The dispersion relation of the hexagonal elastic lattice's Hamiltonian $\cH$ consists of the variety 
		\begin{equation}
			\label{eq:DispRelation}
			\det\Big(\mathbb{G}_0^2(\lambda) - \frac{|s_0(\Theta)|^2}{9} \bI_2 \Big) = 0
		\end{equation}
		and the collection of flat branches $\lambda \in \Sigma^D$. 
	\end{thm}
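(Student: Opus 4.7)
The plan is to split into the generic case $\lambda\notin\Sigma^D$, covered by Proposition~\ref{detLem}, and the exceptional case $\lambda\in\Sigma^D$ which supplies the flat branches. For $\lambda\notin\Sigma^D$, I would recast Proposition~\ref{detLem} in factored form: since $\Lambda_0$ commutes with $z\bI_2$, the polynomial $\cP(z;\lambda)=\det(\Lambda_0-z\bI_2)\det(\Lambda_0+z\bI_2)$ equals $\det(\Lambda_0^2-z^2\bI_2)$, so substituting $z=|s_0(\Theta)|/3$ turns the criterion of Proposition~\ref{detLem} into the condition $\det\!\bigl(\Lambda_0^2(\lambda)-(|s_0(\Theta)|^2/9)\bI_2\bigr)=0$. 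The theorem would then follow from showing that $\Lambda_0(\lambda)$ and $\mathbb{G}_0(\lambda)$ share the same characteristic polynomial, for then $\Lambda_0^2$ and $\mathbb{G}_0^2$ do as well and the two determinantal conditions coincide as polynomials in $|s_0(\Theta)|^2/9$.

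The coincidence of characteristic polynomials is the main calculation. Plugging the boundary data of $\phi_3,\phi_4$ from \eqref{eq:phiIndSol} into Lemma~\ref{phitoFundSol} yields
\[
\mathbb{G}_0(\lambda)=\frac{1}{\det\Phi_0(1)}\begin{pmatrix}\cD(\phi_1,\phi_2)&\cD(\phi_2,\phi_2)\\ -\cD(\phi_1,\phi_1)&\cD(\phi_1,\phi_2)\end{pmatrix}.
\]
Expanding $\Lambda_0=\Phi_0^{-1}(1)\Phi_0(0)$ by cofactors and inserting $\phi_2'''(0)=\phi_1'(0)$ and $\phi_2'''(1)=\phi_1'(1)$ from Lemma~\ref{symResultPhi}, the $(1,1),(1,2),(2,1)$ entries of $\Lambda_0$ coincide with those of $\mathbb{G}_0$; equality of the $(2,2)$ entries, and with it equality of trace and determinant, reduces to the single scalar identity $\phi_1'''(0)\phi_2'(1)=\phi_1'''(1)\phi_2'(0)$. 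Proving this identity is the hard part and must decisively use the symmetry $q(x)=q(1-x)$. My approach is to exploit the reflection relations $\phi_3(x)=\phi_1(1-x),\;\phi_4(x)=\phi_2(1-x)$, obtained by matching \eqref{eq:phiIndSol} under $x\mapsto 1-x$, together with constancy of $\cW_x(\phi_i,\phi_j)$ from \eqref{eq:WronskianSym} across all index pairs and the Liouville--Abel constancy of the full Wronskian determinant of $\{\phi_k\}$ on $(0,1)$, producing an extension of Lemma~\ref{symResultPhi} that contains the desired scalar relation.

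For $\lambda\in\Sigma^D$, I would construct explicit flat-branch eigenfunctions. Let $\psi$ solve $\cH\psi=\lambda\psi$ on $(0,1)$ with \eqref{eq:SigmaDboundary}, and set $u_{e_k}(x)=c_k\psi(x)$ on each edge of $W$. Since $\psi$ and $\psi''$ both vanish at $0$ and $1$, the value-type conditions \eqref{eq:cond1At0}, \eqref{eq:cond3At0}, \eqref{eq:cond1At1}, \eqref{eq:cond3At1} hold for free. The remaining four derivative conditions, after dividing out the factors $\psi'(0),\psi'(1),\psi'''(0),\psi'''(1)$ and using $\sin\delta_0\ne 0$, collapse to the two independent equations $c_1+c_2+c_3=0$ and $c_1+c_2e^{i\theta_1}+c_3e^{i\theta_2}=0$. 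This homogeneous $2\times 3$ system admits a nontrivial solution for every $\Theta\in[-\pi,\pi]^2$, so $\lambda\in\sigma(\cH^\Theta)$ uniformly in $\Theta$, yielding a flat branch. Conversely, a flat branch at $\lambda\notin\Sigma^D$ would force $|s_0(\Theta)|^2/9$ to be an eigenvalue of $\mathbb{G}_0^2(\lambda)$ for every $\Theta$, which is impossible since $|s_0(\Theta)|^2/9$ ranges continuously over $[0,1]$ whereas $\mathbb{G}_0^2(\lambda)$ has at most two eigenvalues.
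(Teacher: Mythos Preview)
Your overall architecture matches the paper's: reduce the non-Dirichlet case to the identification $\Lambda_0(\lambda)=\mathbb{G}_0(\lambda)$, and handle $\Sigma^D$ separately. The paper in fact proves the matrix equality $\Lambda_0=\mathbb{G}_0$ outright (not merely equality of characteristic polynomials): it computes the four entries of $\Lambda_0$ via Lemma~\ref{phitoFundSol} as the sums $g_1(1)+g_4'''(1)$, $g_3(1)+g_4'(1)$, $g_1''(1)+g_2'''(1)$, $g_3''(1)+g_2'(1)$, and then invokes Lemma~\ref{CondofFundSol} (the symmetry relations among the $g_k^{(j)}(1)$ coming from the even potential) to collapse each sum to twice a single entry of $\mathbb{G}_0$. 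Your route stays in the $\phi$-basis and isolates the single remaining scalar identity $\phi_1'''(0)\phi_2'(1)=\phi_1'''(1)\phi_2'(0)$; this is exactly the $\phi$-translation of the paper's step, and it does require the evenness of $q$.

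One caution about your plan for that identity: the reflection relations $\phi_3(x)=\phi_1(1-x)$, $\phi_4(x)=\phi_2(1-x)$, when evaluated at the endpoints, reproduce precisely the content of Lemma~\ref{symResultPhi}, which holds \emph{without} symmetry; and the Liouville--Abel constancy of the full $4\times4$ Wronskian is likewise symmetry-free. So your listed ingredients, taken at endpoints, do not yet encode the symmetry. What actually closes the gap is the monodromy identity $JMJ=M^{-1}$ with $J=\mathrm{diag}(1,-1,1,-1)$, obtained from $u\mapsto u(1-\cdot)$, together with $\det M=1$; comparing $JMJ$ to the adjugate of $M$ yields the relations of Lemma~\ref{CondofFundSol}, from which your scalar identity follows. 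You should make this explicit rather than hope it falls out of pairwise Wronskians.

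Your treatment of the flat branches is genuinely different from the paper's and, for the purpose of this theorem, more direct. The paper cites Lemma~\ref{LemmasigmaD}, which builds compactly supported loop eigenstates on single hexagons. You instead produce, for each $\Theta$, a Floquet eigenfunction on $W$ by solving the $2\times3$ system $c_1+c_2+c_3=0$, $c_1+c_2e^{i\theta_1}+c_3e^{i\theta_2}=0$; this always has a nontrivial solution, so $\lambda\in\sigma(\cH^\Theta)$ for every $\Theta$, giving the flat branch immediately. Both arguments are correct; yours avoids the loop-state construction, while the paper's yields the extra structural information (compactly supported eigenfunctions, infinite multiplicity) used later in Theorem~\ref{grapheneSpectrum}.
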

	\begin{proof}[\normalfont \textbf{Proof of Theorem~\ref{detProp2}}] 
		Recalling the notation (\ref{detNotation}), then applying Lemma \ref{phitoFundSol} and \eqref{eq:phiIndSol}, the following identities are in order:
		\begin{align*}
			g_1(1) + g_4'''(1) &=  +2\frac{\cD(\phi_1,\phi_2)}{\det(\Phi_0(1))}, \qquad
			g_3(1) + g_4'(1) =  +2\frac{\cD(\phi_2,\phi_2)}{\det(\Phi_0(1))},\\
			g_1''(1) + g_2'''(1) &=  -2\frac{\cD(\phi_1,\phi_1)}{\det(\Phi_0(1))}, \qquad
			g_3''(1) + g_2'(1) =  -2\frac{\cD(\phi_2,\phi_1)}{\det(\Phi_0(1))}.
		\end{align*}
		Since $\phi_2'''(0) = \phi_1'(0)$ and $\phi_2'''(1) = \phi_1'(1)$, observe that right-hand sides of the above equations are the entries of $2\Lambda_0(\lambda)$ introduced in \eqref{eq:linearSystem2}. Therefore using Lemma \ref{CondofFundSol} one gets
		\begin{equation*}
			2\Lambda_0(\lambda) = \begin{pmatrix}
				g_1(1)+g_4'''(1) & g_3(1)+g_4'(1)\\
				g_1''(1)+g_2'''(1) & g_3''(1)+g_2'(1) 
			\end{pmatrix} = 
			\begin{pmatrix}
				2g_1(1) & 2g_3(1)  \\
				2g_1''(1) & 2g_3''(1) 
			\end{pmatrix} = 2\mathbb{G}_0(\lambda),
		\end{equation*}
		Combining the results from Proposition \ref{detLem} and Lemma \ref{LemmasigmaD} establishes the claimed result. 
	\end{proof}
	For specific purposes, e.g. reducibility of Fermi surface, it may be desirable to rephrase \eqref{eq:DispRelation} in terms of characteristic polynomials.
	\begin{remark}
		$\lambda$ is in the spectrum of the graphene Hamiltonian $\cH$ if and only if $\lambda \in \Sigma^{D}$ or $|s_0(\Theta)|/3$ is a root of the characteristic polynomial for $\mathbb{G}_0(\lambda)$ or $-\mathbb{G}_0(\lambda)$, i.e. $\lambda \in \Sigma^{D}$ or is a root of 
		\begin{equation}
			\cP(z;\lambda) = \Big(z^2 - \tr\big(\mathbb{G}_0(\lambda)\big)z + \det\big(\mathbb{G}_0(\lambda)\big)\Big) \Big(z^2 + \tr\big(\mathbb{G}_0(\lambda)\big)z + \det\big(\mathbb{G}_0(\lambda)\big)\Big) 
		\end{equation}
		 by equation \eqref{polydispersion}.
	\end{remark} 
	Noting that $\phi_2'''(1) = \phi_1'(1)$ we can also write the dispersion relation as follows: $\lambda$ is in the Floquet spectrum of $\cH$ if and only if 
	\begin{equation}
		\Big(\Delta_1(\lambda) \pm \frac{|s_0(\Theta)|}{3} \Big)\Big(\Delta_2(\lambda)  \pm \frac{|s_0(\Theta)|}{3} \Big) = 0
	\end{equation}
	or $\lambda \in \Sigma^D$, where $\Delta_{1,2}(\lambda)$ were defined in \eqref{eq:Delta12} and 
	\begin{equation}
		\label{eq:T1T2Def}
		T_1 = \frac{ \tr(\mathbb{G}_0)}{2}, \qquad  T_2 = \frac{ \tr^2(\mathbb{G}_0)}{4} - \det(\mathbb{G}_0).
	\end{equation}
	So far, we have been avoiding points of the Dirichlet spectrum $\Sigma^D$ of a single
	edge. We will now deal with exactly these points. The idea is to explicitly construct corresponding eigenfunctions as discussed in \cite{KP07}.
	\begin{lem}
		\label{LemmasigmaD}
		Each point $\lambda \in \Sigma^D$ is an eigenvalue of infinite multiplicity of the hexagonal elastic lattice’s Hamiltonian $\cH$ on graphene. The corresponding eigenspace is generated by simple
		loop states, i.e. by eigenfunctions which are supported on a single hexagon and vanish at the vertices.
	\end{lem}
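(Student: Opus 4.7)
The plan is to construct, for each $\lambda \in \Sigma^D$, an infinite-dimensional family of $L^2$-eigenfunctions of $\cH$ supported on single hexagonal faces of $\Gamma$. First, let $\psi \in H^4([0,1])$ be a nontrivial solution of $\psi''''+q\psi=\lambda\psi$ satisfying the Dirichlet-type boundary conditions \eqref{eq:SigmaDboundary}. Because of the evenness assumption $q(x)=q(1-x)$, the reflected function $x\mapsto\psi(1-x)$ solves the same boundary value problem, so after replacing $\psi$ by $\psi(x)\pm\psi(1-x)$ I may assume $\psi$ has definite parity about $x=1/2$. This yields the identities $\psi'(1)=\mp\psi'(0)$ and $\psi'''(1)=\mp\psi'''(0)$.

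Next fix any hexagonal face $H$ of $\Gamma$ with edges $e_1,\dots,e_6$ and vertices $\vt_1,\dots,\vt_6$ cyclically ordered so that $\vt_j$ lies between $e_j$ and $e_{j+1}$. Define a candidate $u_H$ on $\Gamma$ by $u_H|_{e_j}=\epsilon_j\psi$ (composed with the parametrization of $e_j$) for signs $\epsilon_j\in\{\pm1\}$ to be determined, and $u_H\equiv 0$ on every other edge of $\Gamma$. The values $\psi(0)=\psi(1)=\psi''(0)=\psi''(1)=0$ immediately make the continuity condition \eqref{eq:P1} and the second-derivative balance \eqref{eq:S1} hold at each $\vt_j$, since the third edge leaving $\vt_j$---the one outside $H$---carries the identically zero function. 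What remains are the first- and third-derivative conditions \eqref{eq:P2}, \eqref{eq:S2}: since all three angles at a graphene vertex equal $2\pi/3$, the common factor $\sin(2\pi/3)$ cancels and these reduce at $\vt_j$ to $u_{e_j}'(\vt_j)+u_{e_{j+1}}'(\vt_j)=0$ and $u_{e_j}'''(\vt_j)+u_{e_{j+1}}'''(\vt_j)=0$.

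Substituting $u_{e_j}=\epsilon_j\psi$ and using the parity relations from the first paragraph turns each of these two equations at $\vt_j$ into a single sign constraint of the form $\epsilon_{j+1}=\eta\,\epsilon_j$ with $\eta\in\{\pm1\}$ independent of $j$ (depending only on the parity of $\psi$ and a globally fixed orientation convention along the cycle $e_1\to e_2\to\cdots\to e_6$). Iterating six times imposes the cyclic compatibility $\eta^6=1$, which is automatic, so constant signs $\epsilon_j\equiv 1$ (if $\eta=1$) or alternating signs $\epsilon_j=(-1)^{j-1}$ (if $\eta=-1$) yield an admissible $u_H$ lying in the domain of $\cH$, with $\cH u_H=\lambda u_H$ edgewise by construction. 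Since $\Gamma$ contains infinitely many hexagons and loop states on distinct hexagons have disjoint supports, they are linearly independent, so $\lambda$ is an eigenvalue of infinite multiplicity.

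For the generation claim I would argue that at any quasimomentum $\Theta$ the Floquet component of an $L^2$-eigenfunction of $\cH$ with eigenvalue $\lambda\in\Sigma^D$ lives on the fundamental domain and satisfies Dirichlet-type conditions on each edge inherited from the definition of $\Sigma^D$; a direct dimension count for the solutions to the linear system \eqref{eq:linearSystem1} at these flat-band energies then matches exactly the Bloch-phase superpositions of the loop states $u_H$ constructed above, showing the loop states span. The main obstacle is the careful bookkeeping of edge orientations and sign propagation around a hexagon needed to establish that the constant $\eta$ in the sign recursion is indeed independent of $j$; once that step is verified, the remainder is a direct construction.
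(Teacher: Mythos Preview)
Your construction of the single-hexagon loop states is essentially identical to the paper's: both exploit the evenness of $q$ to reduce to a $\psi$ of definite parity about $1/2$, then place $\psi$ on the six edges of a hexagon with constant signs (odd case) or alternating signs (even case), and both observe that the Dirichlet conditions $\psi(0)=\psi(1)=\psi''(0)=\psi''(1)=0$ dispose of \eqref{eq:P1} and \eqref{eq:S1} automatically. Your treatment is in fact more explicit than the paper's about why \eqref{eq:P2} and \eqref{eq:S2} reduce to a two-term sign recursion and why the cyclic constraint $\eta^6=1$ closes up; the paper simply asserts the construction works.

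For the generation claim, the paper does not argue at all and instead invokes Lemma~3.5 of \cite{KP07}, whose proof (combinatorial decomposition of any compactly supported eigenfunction into elementary hexagon loops, together with the general fact that the $L^2$-eigenspace of a periodic elliptic operator is spanned by compactly supported eigenfunctions) carries over verbatim. Your sketch takes a different route via a Floquet dimension count, but the sentence ``the Floquet component \dots\ satisfies Dirichlet-type conditions on each edge inherited from the definition of $\Sigma^D$'' is not justified: a Floquet eigenfunction at energy $\lambda\in\Sigma^D$ need not vanish at vertices edge-by-edge, since non-flat branches of the dispersion relation can also pass through $\lambda$ at isolated quasimomenta. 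What you actually need is that the \emph{flat} part of the Bloch eigenspace (the part present for all $\Theta$, which is what produces $L^2$-eigenfunctions) is spanned by loop states; this is exactly what the \cite{KP07} argument supplies and what your dimension count, as stated, does not.
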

	\begin{proof}[\normalfont \textbf{Proof of Lemma~\ref{LemmasigmaD}}] 
		Let us first show that each $\lambda \in \Sigma^D$ is an eigenvalue. Let $u$ be an eigenfunction of the operator $d^4/dx^4 + q_0(x)$ with the eigenvalue $\lambda$ and (Dirichlet type) boundary conditions on $[0,1]$ as stated in \eqref{eq:SigmaDboundary}. Note that $u(1-x)$ is also an eigenfunction with the same eigenvalue, since $q_0(x)$ is even. If $u(x)$ is neither even nor odd, then $u(x) - u(1-x)$ is an odd eigenfunction. For an odd eigenfunction, repeating it on each of the six edges of a hexagon and letting the eigenfunction to be zero on any other hexagon, we get an eigenfunction of the operator $\cH$. If $u$ is an even eigenfunction, then repeating it around the hexagon with an alternating sign and letting the eigenfunction to be zero on any other hexagon, we get an eigenfunction of the operator $\cH$. Therefore $\lambda \in \sigma_{\text{pp}}(\cH)$. We get the rest of the proof by following the arguments of Lemma 3.5 in \cite{KP07}.
	\end{proof}
	\begin{remark}
		Compared to {S}chr{\"o}dinger, the Dirichlet type boundary conditions for fourth-order operator may be a place to be cautious. Naturally, one may select vanishing boundary conditions in quadratic form as Dirichlet ones (this choice holds for second-order operator). However, here we defined $\Sigma^D$ as \eqref{eq:SigmaDboundary} to accommodate Floquet vertex conditions in \eqref{eq:cond1At0}, \eqref{eq:cond3At0} and so on. We refer interested reader to the Section \ref{sec:Outlook} for further discussion along this line. 
	\end{remark}
	\begin{exmp}\label{freeOperator}
		Let us consider the free operator, i.e. $q \equiv 0$. Setting $\mu := \sqrt[4]{\lambda}$ and using the convention 
		\begin{equation}
			C_{\mu}^{\pm}(x) = \cosh(\mu x) \pm \cos(\mu x), \qquad S_\mu^{\pm}(x) = \sinh(\mu x) \pm \sin(\mu x),
		\end{equation}
		then the fundamental solutions take the forms 
		\begin{equation}
			g_1(x) = \frac{1}{2} C_{\mu}^{+}(x), \quad g_2(x) = \frac{1}{2\mu} S_{\mu}^{+}(x), \quad 
			g_3(x) = \frac{1}{2\mu^2} C_{\mu}^{-}(x), \quad g_4(x) = \frac{1}{2\mu^3} S_{\mu}^{-}(x),
		\end{equation}
		and hence 
		\begin{equation*}
			\mathbb{G}_0(\lambda) =
			\begin{pmatrix}
				g_1(1) & g_3(1)\\
				g_1''(1) & g_3''(1) 
			\end{pmatrix} =	\frac{1}{2}
			\begin{pmatrix}
				\hspace{4mm}C_{\mu}^{+}(1) &  \mu^{-2}C_{\mu}^{-}(1) \\
				\mu^2  C_{\mu}^{-}(1) &  \hspace{6mm}C_{\mu}^{+}(1)
			\end{pmatrix}.
		\end{equation*}
		This then implies
		\begin{align*}
			\det\Big(\mathbb{G}_0(\lambda) \pm \frac{|s_0(\Theta)|}{3} \bI_2 \Big) = 
			\Big(\frac{|s_0(\Theta)|}{3}\Big)^2 \pm \tr(\mathbb{G}_0)\Big(\frac{|s_0(\Theta)|}{3}\Big) + \det(\mathbb{G}_0).
		\end{align*}
		Thereby, the dispersion relation is equivalent to 
		\begin{equation}
			\Big(\cos(\lambda^{1/4})\pm\frac{|s_0(\Theta)|}{3}\Big)\Big(\cosh(\lambda^{1/4})\pm\frac{|s_0(\Theta)|}{3}\Big) = 0
		\end{equation}
		Since $\cosh(x) \geq 1$ and by taking to account $|s_0(\Theta)| \leq 3$, the only root of the second factor happens at $\Theta = (0,0)$ and $\lambda = 0$ which also solve the first phrase. Therefore the dispersion relation for $q_0 \equiv 0$ reduces to 
		\begin{equation}
			\label{eq:dispSurfZeroPot}
			\cos(\lambda^{1/4}) = \pm \frac{|s_0(\Theta)|}{3}.
		\end{equation}
	\end{exmp}
\begin{figure}[ht]
	\centering
	\includegraphics[width=0.45\textwidth]{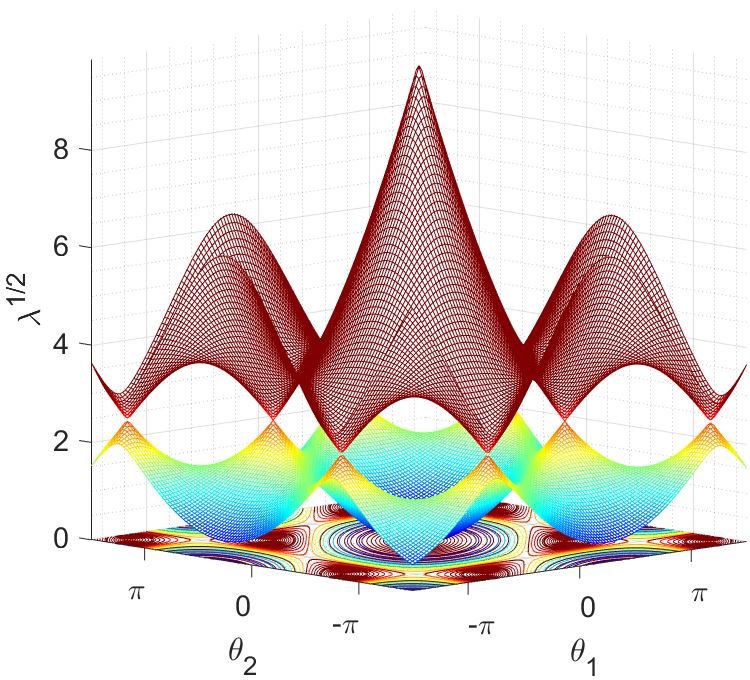}
	\caption{Dispersion relation for zero potential case, see \eqref{eq:dispSurfZeroPot}.}
	\label{fig:zeroPotDispersion}
\end{figure}
	\begin{remark}
		The dispersion relation of the second order Schr\"{o}dinger operator with the vanishing potential, i.e. $\cH\up{s} u(x) = -u''(x) $ on graphene has the form 
		\begin{equation}
			\cos(\lambda^{1/2}) = \pm \frac{|s_0(\Theta)|}{3},
		\end{equation}
		which is interestingly very similar to \eqref{eq:dispSurfZeroPot}. Therefore Example \ref{freeOperator} shows that the dispersion relation of the graphene Hamiltonian $\cH$ coincides with the one for the second order Schr\"{o}dinger operator on graphene $\cH\up{s}$ if the eigenvalue problems $\cH\up{s} u = \lambda u$ and $\cH u = \lambda^{1/2} u$ are considered. Figure \ref{fig:zeroPotDispersion} shows the plot of first two spectral sheets of the dispersion relation. 
	\end{remark}
	\subsection{The Spectra of Graphene Hamiltonian}
	This section is devoted on full description of spectra, conical singularities and Fermi surfaces corresponding to $\cH$ defined on graphene. 
	\begin{lem}
		\label{Dspecsubsetpapspec}
		As a set, $\Sigma^D$ belongs to the union of periodic and anti-periodic spectra of $\cH^{\text{per}}$.
	\end{lem}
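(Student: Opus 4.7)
The plan is to split any Dirichlet-type eigenfunction on $[0,1]$ into its symmetric and antisymmetric parts about $x=1/2$ using the assumption $q(x)=q(1-x)$, and then extend these parts to all of $\mathbb{R}$ as anti-periodic and periodic eigenfunctions of $\cH^{\text{per}}$ respectively.

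First, fix $\lambda \in \Sigma^D$ with a nontrivial eigenfunction $u$ on $[0,1]$ satisfying $u(0)=u''(0)=u(1)=u''(1)=0$. Since $q(x)=q(1-x)$, the reflected function $x\mapsto u(1-x)$ solves the same eigenvalue equation with the same boundary data. Consequently
\begin{equation*}
u^{+}(x):=\tfrac{1}{2}\bigl(u(x)+u(1-x)\bigr),\qquad u^{-}(x):=\tfrac{1}{2}\bigl(u(x)-u(1-x)\bigr)
\end{equation*}
are eigenfunctions with the same Dirichlet-type vanishing, and since $u=u^{+}+u^{-}$, at least one of them is not identically zero.

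Next, I would extract the matching data at the endpoints. From $u^{+}(1-x)=u^{+}(x)$ one gets $(u^{+})^{(k)}(0)=(-1)^{k}(u^{+})^{(k)}(1)$, and from $u^{-}(1-x)=-u^{-}(x)$ one gets $(u^{-})^{(k)}(0)=(-1)^{k+1}(u^{-})^{(k)}(1)$. Combined with the Dirichlet vanishing of the zeroth and second derivatives at both endpoints, these relations are exactly the anti-periodic matching conditions $(u^{+})^{(k)}(0)=-(u^{+})^{(k)}(1)$ for $k=0,1,2,3$, and the periodic matching conditions $(u^{-})^{(k)}(0)=(u^{-})^{(k)}(1)$ for $k=0,1,2,3$. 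This short symmetry calculation is the core of the argument.

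Finally, define $\tilde u^{+}(x+n):=(-1)^{n}u^{+}(x)$ and $\tilde u^{-}(x+n):=u^{-}(x)$ for $x\in[0,1]$ and $n\in\mathbb{Z}$. The four matching conditions at each integer make these functions $C^{3}_{\mathrm{loc}}(\mathbb{R})$, and the ODE $u''''=(\lambda-q)u$ then places the fourth derivative in $L^{2}_{\mathrm{loc}}(\mathbb{R})$, so each nonzero $\tilde u^{\pm}$ is an $H^{4}_{\mathrm{loc}}$ solution of $\cH^{\text{per}}\tilde u^{\pm}=\lambda\tilde u^{\pm}$ on the line. Hence $\lambda$ belongs to the anti-periodic or periodic spectrum of $\cH^{\text{per}}$, and therefore to their union. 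The only point requiring care is the $H^{4}_{\mathrm{loc}}$ regularity of the extensions at integer points, but this reduces immediately to the four matching conditions just established; no further delicate analysis is needed.
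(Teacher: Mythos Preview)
Your proof is correct and takes essentially the same approach as the paper: use the symmetry $q(x)=q(1-x)$ to reduce to even or odd eigenfunctions about $x=1/2$, then observe that (with the Dirichlet-type vanishing of the zeroth and second derivatives) these satisfy respectively the anti-periodic and periodic boundary conditions. The paper's own proof is slightly terser---it stops at the boundary-condition check without the explicit $H^4_{\mathrm{loc}}$ extension to $\mathbb{R}$---but the argument is the same.
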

	\begin{proof}[\normalfont \textbf{Proof of Lemma~\ref{Dspecsubsetpapspec}}] 
		Let $\lambda \in \Sigma^D$. Since the potential $q_0$ is even, if $u(x)$ is an eigenfunction, then $u(1-x)$ is also an eigenfunction. Therefore we can assume $u$ to be either even or odd. In case $u$ is odd, it satisfies the periodic boundary conditions, i.e.
		\begin{equation}\label{perbc}
			u(0)=u(1), \quad u'(0)=u'(1), \quad u''(0)=u''(1), \quad u'''(0)=u'''(1).
		\end{equation}
		On the other-hand for even $u$, it satisfies the anti-periodic boundary conditions
		\begin{equation}\label{aperbc}
			u(0)=-u(1), \quad u'(0)=-u'(1), \quad u''(0)=-u''(1), \quad u'''(0)=-u'''(1).
		\end{equation}
	\end{proof}
	We can now completely describe the spectral structure of the graphene operator $\cH$. 
	\begin{thm}
		\label{grapheneSpectrum}
		\emph{\textbf{(spectral description)}}
		\begin{itemize}
			\item[(i)] The singular continuous spectrum $\sigma_{\text{sc}}(\cH)$ is empty.
			\item[(ii)] The absolutely continuous spectrum $\sigma_{\text{ac}}(\cH)$ has band-gap structure and coincides as a set with the spectrum $\sigma(\cH^{\text{per}})$ of the 4-th order operator $\cH^{\text{per}}$ with potential $q_0$ periodically extended from $[0,1]$. Moreover, the absolutely continuous spectrum $\sigma_{\text{ac}}(\cH)$ has the representation
			\begin{equation}\label{abscontspectrum}
				\sigma_{\text{ac}}(\cH) = \big\{ \lambda \in \mathbb{R} ~|~ \Delta_k(\lambda)=[-1,1] \text{ for some } k=1,2 \big\}, 
			\end{equation}
			where $\Delta_{1,2}(\lambda) := \frac{1}{2}\big(\tr(\mathbb{G}_0(\lambda)) \pm \big(	\tr^2(\mathbb{G}_0(\lambda)) - 4\det(\mathbb{G}_0(\lambda))\big)^{1/2}\big)$
			\item[(iii)] The pure point spectrum $\sigma_{\text{pp}}(\cH)$ coincides with $\Sigma^D$ as a set and eventually belongs to the union of the edges of the spectral bands of $\sigma_{\text{ac}}(\cH)$.
		\end{itemize}
	\end{thm}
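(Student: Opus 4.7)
The plan is to combine the Floquet-Bloch decomposition $\cH \cong \int_{[-\pi,\pi]^2}^{\oplus} \cH^{\Theta}\,d\Theta$ with the dispersion relation of Theorem~\ref{detProp2} and the scalar theory of Theorem~\ref{summaryRefResults}. The dispersion relation already splits into a ``continuous'' part (where $\lambda \notin \Sigma^D$ and the eigenfunctions are reconstructed from $\phi_1,\ldots,\phi_4$) and a ``flat'' part ($\lambda \in \Sigma^D$ producing loop states by Lemma~\ref{LemmasigmaD}). The three items of the theorem correspond precisely to analyzing each part.

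For (ii), I would start from the factorization just above Theorem~\ref{detProp2}, which says that for $\lambda \notin \Sigma^D$, $\lambda \in \sigma(\cH^{\Theta})$ iff $\Delta_k(\lambda) = \pm |s_0(\Theta)|/3$ for some $k \in \{1,2\}$, with $\Delta_{1,2}$ as in \eqref{eq:T1T2Def}. Since $|s_0(\Theta)|/3$ is a continuous surjection of the Brillouin zone onto $[0,1]$ (as recorded after Proposition~\ref{detLem}), the union over $\Theta$ gives exactly $\{\lambda : \Delta_k(\lambda) \in [-1,1]$ for some $k\}$, which by Theorem~\ref{summaryRefResults}(ii) equals $\sigma(\cH^{\text{per}})$. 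This establishes both the representation \eqref{abscontspectrum} and the set-theoretic coincidence with $\sigma(\cH^{\text{per}})$. To upgrade ``spectrum'' to ``absolutely continuous spectrum'', I would invoke Theorem~\ref{summaryRefResults}(vii): on each spectral band the relevant branch $\Delta_k$ is real-analytic and strictly monotone, so the implicit function theorem yields real-analytic, non-constant band functions $\Theta \mapsto \lambda_n(\Theta)$ solving $\Delta_k(\lambda) = \pm |s_0(\Theta)|/3$; standard Floquet theory (e.g.\ \cite{K16}) then gives that such non-flat branches contribute purely absolutely continuous spectrum and no singular continuous part.

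For (i) and (iii), eigenvalues of $\cH$ in the direct integral correspond exactly to flat (constant in $\Theta$) band functions. Lemma~\ref{LemmasigmaD} already provides the inclusion $\Sigma^D \subseteq \sigma_{\text{pp}}(\cH)$ with infinite multiplicity. For the reverse inclusion, I would argue that if $\lambda \notin \Sigma^D$ then the equation $\Delta_k(\lambda) = \pm |s_0(\Theta)|/3$ depends non-trivially on $\Theta$ (since $|s_0(\Theta)|$ is a non-constant real-analytic function on the Brillouin zone whose level sets have Lebesgue measure zero), so the associated band function cannot be constant, hence $\lambda$ produces no contribution to $\sigma_{\text{pp}}(\cH)$. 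Combining this with Step~2 yields $\sigma_{\text{pp}}(\cH) = \Sigma^D$ as a set and $\sigma_{\text{sc}}(\cH) = \emptyset$. Finally, Lemma~\ref{Dspecsubsetpapspec} places $\Sigma^D$ inside the union of periodic and anti-periodic spectra of $\cH^{\text{per}}$, and Theorem~\ref{summaryRefResults}(iv) identifies these points with spectral band edges modulo the at most finitely many resonance points; this gives the ``eventually belongs to band edges'' clause.

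The main obstacle I anticipate is the reverse inclusion in (iii) together with ruling out any singular continuous component hidden in the non-flat portion. The delicate point is that $\Delta_k$ is not globally smooth but lives on a two-sheeted Riemann surface with branch points at zeros of $T_2$, so the ``monotone real-analytic band'' argument must be localized band by band and carefully handled near resonance endpoints where $\Delta_1$ and $\Delta_2$ collide. Handling the points at which $T_2(\lambda) = 0$, and verifying that the exceptional set where the implicit function theorem degenerates has measure zero in $\Theta$, is the technical heart of the proof; the arguments mirror but strengthen those used in \cite{KP07} for the second-order case, adapted to the two-sheeted structure summarized in Section~\ref{4thorderHill}.
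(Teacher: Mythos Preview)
Your proposal is correct and follows essentially the same architecture as the paper: the dispersion relation of Theorem~\ref{detProp2} identifies $\sigma(\cH)\setminus\Sigma^D$ with $\{\lambda:\Delta_k(\lambda)\in[-1,1]\text{ for some }k\}=\sigma(\cH^{\text{per}})$ via Theorem~\ref{summaryRefResults}(ii); Lemma~\ref{LemmasigmaD} gives $\Sigma^D\subseteq\sigma_{\text{pp}}(\cH)$; non-constancy of the branches outside $\Sigma^D$ gives the reverse inclusion; and Lemma~\ref{Dspecsubsetpapspec} plus Theorem~\ref{summaryRefResults} handles the band-edge clause.

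The one place where the paper's route differs from yours, and is worth noting, is item~(i). You plan to deduce $\sigma_{\text{sc}}(\cH)=\emptyset$ from real-analyticity and monotonicity of the band functions, and you flag the two-sheeted Riemann surface (branch points at zeros of $T_2$) as the delicate technical heart. The paper bypasses this completely: it simply cites the general fact (Corollary~6.11 in \cite{K16}) that a $\bZ^n$-periodic self-adjoint elliptic operator has empty singular continuous spectrum. With that one line your anticipated obstacle evaporates and no localization near resonance endpoints is needed. Similarly, for the inclusion $\sigma_{\text{pp}}(\cH)\subseteq\Sigma^D$ the paper invokes Thomas' analytic continuation argument \cite{T73,RS78,KP07} rather than your measure-zero level-set reasoning; both work, but Thomas' argument is the canonical device here. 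One small correction: for the ``eventually band edges'' clause you point to Theorem~\ref{summaryRefResults}(iv), but the paper uses part~(iii), which is the right direction of implication (for $n\ge n_0$ every periodic/anti-periodic eigenvalue \emph{is} a band edge, whereas (iv) only tells you that band edges lie among periodic/anti-periodic eigenvalues or resonances).
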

	\begin{proof}[\normalfont \textbf{Proof of Theorem~\ref{grapheneSpectrum}}] 
		Proof of the items above is based on the developed tools in this paper along with already-established results in our references. For item (i) observe that the singular continuous spectrum is empty, since $\cH$ is a self-adjoint elliptic operator (see e.g. Corollary 6.11 in \cite{K16}). Proof of (ii) is based on Theorem \ref{detProp2}, as we know that any $\lambda \notin \Sigma^D$ belongs to $\sigma(\cH)$ if and only if $|s_0(\Theta)|/3$ is a root of the characteristic polynomial for $D(\lambda)$ or $-D(\lambda)$, i.e. a root of 
		\begin{equation*}
			\cP(z;\lambda) := \big(z^2 - \tr(\mathbb{G}_0(\lambda))z + \det(\mathbb{G}_0(\lambda))\big) \big(z^2 + \tr(\mathbb{G}_0D(\lambda))z + \det(\mathbb{G}_0(\lambda))\big).
		\end{equation*}
		Since the range of $|s_0(\Theta)|$ is $[0,3]$, then $\cP(|s_0(\Theta)|/3;\lambda)=0$ if and only if $\Delta_1 \in [-1,1]$ or $\Delta_2 \in [-1,1]$. This observation along with Proposition \ref{detLem}
		provide the desired representation \eqref{abscontspectrum}. According to the Thomas’ analytic continuation argument, eigenvalues correspond to the constant branches of the dispersion relation \cite{KP07,RS78,T73}. Since the dispersion surfaces
		\begin{equation}
			\big\{(\Theta,\lambda) \in \mathbb{R}^3 ~|~ \Delta_{k}(\lambda) = \pm |s_0(\Theta)|/3 \text{ for some } k=1,2  \big\}
		\end{equation}
		have no constant branches outside $\Sigma^D$, we get $\sigma_{\text{pp}}(\cH) \subseteq \Sigma^D$ and hence 
		\begin{equation}
			\sigma_{\text{ac}}(\cH) = \{ \lambda \in \mathbb{R} ~|~ \Delta_k(\lambda) \in [-1,1]  \text{ for some } k=1,2\}.
		\end{equation}
		Note that \eqref{abscontspectrum} also represents $\sigma(\cH^{\text{per}}) = \sigma_{\text{ac}}(\cH^{\text{per}})$ by item (ii) in Theorem \ref{summaryRefResults}. So, the absolutely continuous spectrum $\sigma_{\text{ac}}(\cH)$ has band-gap structure and coincides as a set with the spectrum $\sigma(\cH^{\text{per}})$ of operator $\cH^{\text{per}}$ with potential $q_0$ periodically extended from $[0,1]$. Finally for item (iii), we observed that $\sigma_{\text{pp}}(\cH) \subseteq \Sigma^D$ and in Lemma \ref{LemmasigmaD} we showed that $\Sigma^D \subseteq \sigma_{\text{pp}}(\cH)$. Then Lemma \ref{Dspecsubsetpapspec} implies that $\sigma_{\text{pp}}(\cH) \subset \Sigma^{\text{p}}\cup\Sigma^{\text{ap}}$, where $\Sigma^{\text{p}}$ and $\Sigma^{\text{ap}}$ denote the periodic and anti-periodic spectra of \eqref{eigenvalueEquation}, i.e. with the boundary conditions \eqref{perbc} and \eqref{aperbc} respectively. However, from item (iii) in Theorem \ref{summaryRefResults} there exists $n_0\in \bN$ such that for all $n \geq n_0$ the edges of the n-th spectral band are the n-th periodic and anti-periodic eigenvalues. This concludes the proof.
	\end{proof}
	Next theorem proves existence of Dirac points, also called diabilical points, in the dispersion relation of $\cH$, where its different sheets touch to form a conical singularity.
	\begin{thm}
		\label{DiracPointsThm}
		\emph{\textbf{(Dirac points)}}
		The set of Dirac points of $\cH$ in the (first) Brillouin zone is
		\begin{equation*}
		\begin{split}
		\big\{(\Theta,\lambda) \in \mathbb{R}^3 ~|~ \Theta = \pm(2\pi/3,- 2\pi/3) \text{, } &T_2(\lambda-\eps,\lambda+\eps) \subset [0,\infty) \text{ and } \\
		&\Delta_k(\lambda)=0 \text{ for some } \eps > 0, k\in\{1,2\}\big\}.
		\end{split}
		\end{equation*}
	\end{thm}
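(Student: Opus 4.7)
The plan is to read the claim directly off the factored dispersion relation established in Theorem \ref{detProp2}. Outside $\Sigma^D$, that relation factors as $(\Delta_1(\lambda)\pm|s_0(\Theta)|/3)(\Delta_2(\lambda)\pm|s_0(\Theta)|/3)=0$. A Dirac point is by definition a point at which two of these four sheets meet as a two-sided cone. The strategy is therefore to combine two local facts: a geometric one about $|s_0(\Theta)|$ in the quasimomentum variable, and a spectral one about $\Delta_k(\lambda)$ in the energy variable.

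First I would record the geometric fact. Writing $|s_0(\Theta)|^2 = 3 + 2\cos\theta_1 + 2\cos\theta_2 + 2\cos(\theta_1-\theta_2)$ and expanding at the two (and only) zeros $\Theta_0 = \pm(2\pi/3,-2\pi/3)$ of $|s_0|$ shows that $|s_0(\Theta)| = c\,|\Theta-\Theta_0| + O(|\Theta-\Theta_0|^2)$ with $c>0$; i.e.\ $|s_0|$ has isolated conical zeros at precisely these two points. This step is identical to the analysis already carried out in \cite{KP07} for the Schr\"odinger graphene, because the factor $|s_0(\Theta)|$ is shared by both models and is insensitive to which scalar operator lives on each edge.

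For the sufficiency direction, fix $\Theta_0\in\{\pm(2\pi/3,-2\pi/3)\}$ together with $\lambda_0$ satisfying $T_2\geq 0$ on $(\lambda_0-\eps,\lambda_0+\eps)$ and $\Delta_k(\lambda_0)=0$ for some $k$. The first condition implies that $T_2^{1/2}$ admits a real, continuous (and generically analytic) branch in this interval, so $\Delta_k$ is a well-defined real function near $\lambda_0$ that does not sit on a ramification point of the Riemann surface $\cR$. Since $\Delta_k(\lambda_0)=0\in(-1,1)$, the monotonicity statement (vii) of Theorem \ref{summaryRefResults} yields $\Delta_k'(\lambda_0)\neq 0$. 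Inverting locally, each of the two surfaces $\Delta_k(\lambda)=\pm|s_0(\Theta)|/3$ is the graph $\lambda=\lambda_0\pm|s_0(\Theta)|/(3\Delta_k'(\lambda_0))+o(|s_0(\Theta)|)$, and combining with the conical expansion of $|s_0|$ gives a two-sided cone at $(\Theta_0,\lambda_0)$, which is the Dirac point.

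For the necessity direction, suppose $(\Theta_0,\lambda_0)$ is a Dirac point, so two sheets among the four factors of the dispersion relation must vanish simultaneously there and meet as a cone. If $|s_0(\Theta_0)|>0$, then $|s_0|$ is smooth with nonzero gradient near $\Theta_0$ and each $\Delta_k$ is smooth and (by (vii)) monotone in $\lambda$ at any analytic regular point, so any two such sheets meet transversally along a smooth curve rather than as a cone; this case is therefore excluded. Hence $|s_0(\Theta_0)|=0$, forcing $\Theta_0=\pm(2\pi/3,-2\pi/3)$, and the dispersion relation collapses at $\Theta_0$ to $\Delta_1(\lambda_0)\Delta_2(\lambda_0)=0$, giving $\Delta_k(\lambda_0)=0$ for some $k$. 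For the meeting to be a genuine two-sided cone (rather than a cusp or higher-order tangency), the local inversion from the sufficiency step must succeed, which demands $\Delta_k'(\lambda_0)\neq 0$ and rules out resonance/branch points of $T_2^{1/2}$ inside a neighborhood of $\lambda_0$; equivalently, one recovers $T_2\geq 0$ on $(\lambda_0-\eps,\lambda_0+\eps)$. The main obstacle I expect is precisely this necessity direction: carefully excluding Riemann-surface branch points of $T_2^{1/2}$ as sources of pseudo-Dirac singularities, and verifying that sheet crossings at quasimomenta with $|s_0(\Theta_0)|>0$ are genuinely transversal. The sufficiency half, by contrast, is essentially mechanical once the conical expansion of $|s_0|$ and the monotonicity of $\Delta_k$ are in hand.
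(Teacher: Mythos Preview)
Your proposal is correct and follows essentially the same approach as the paper: both combine the conical zeros of $|s_0(\Theta)|$ at $\pm(2\pi/3,-2\pi/3)$ with the monotonicity of $\Delta_k$ from item (vii) of Theorem~\ref{summaryRefResults}, and both invoke the condition $T_2\geq 0$ on a neighborhood of $\lambda$ to ensure $\lambda$ sits in the interior of a band (so that the cone is two-sided rather than one-sided). Your write-up is more explicit than the paper's terse argument---you separate sufficiency from necessity, spell out the local inversion $\lambda=\lambda_0\pm|s_0|/(3\Delta_k')+o(|s_0|)$, and flag the exclusion of quasimomenta with $|s_0(\Theta_0)|>0$ as the delicate step---but the underlying mechanism is the same.
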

	\begin{proof}[\normalfont \textbf{Proof of Theorem~\ref{DiracPointsThm}}] 
	    If $T_2(\lambda-\eps,\lambda+\eps) \not \subset [0,\infty)$, then $\lambda$ can not belong to the interior of a spectral band. If it is an edge of a band, it can not be a Dirac point, since it may provide a one-sided conical singularity. Observe that function $|s_0(\Theta)|$ on $[-\pi,\pi]^2$ has vanishing conical singularities at points $\pm (2\pi/3,-2\pi/3)$. From item (vii) in Theorem \ref{summaryRefResults} we know for $k=1,2$ and $\lambda$ so that $\Delta_k(\lambda) \in [-1,1]$, then $\Delta_k$ is analytic and has non-zero derivative in the neighborhood of $\lambda$ restricted to the interior of the corresponding band. Therefore $\Delta_k$ is monotonic in any spectral band around any $\lambda$ satisfying $\Delta_k(\lambda)=0$, so using the dispersion relation of $\cH$ we get the set of Dirac points.  
	\end{proof}
	\begin{remark}
		\label{DiracPics}
		One can classify the Dirac points $(\pm\Theta^{*},\lambda^*)$ with $\Theta^{*}:=(2\pi/3,- 2\pi/3)$ of the dispersion relation as follows:
		\begin{itemize}[leftmargin=*]
			\item If $\lambda^*$ is not a resonance point (i.e. $T_2(\lambda^*) \neq 0$) and $\Delta_k(\lambda^*) = 0$ for some $k \in \{1,2\}$, then the dispersion relation around each of the singularities $(\pm\Theta^{*},\lambda^*)$ consists of two cones located in opposite directions in $\lambda^*$-axis with the common vertex singularity $(\pm\Theta^{*},\lambda^*)$. See Figure \ref{fig:DiracTypes} (left). This is the case for large $\lambda^*$, i.e. high energy level scheme.  
			\item If $\Delta_1(\lambda^*) = \Delta_2(\lambda^*) = 0$ and there exists $\delta >0$ so that $|T_2(\lambda)| < 1 $ for all $\lambda \in [\lambda^*-\delta,\lambda^*+\delta]$, and $T_1(\lambda^*-\lambda) \not = T_1(\lambda^*+\lambda)$ for $\lambda \in (0,\delta)$, then  dispersion relation around each of the singularities $(\pm\Theta^{*},\lambda^*)$ consists of four cones, two of them located in opposite directions than the other two on $\lambda$-axis with the common vertex singularity at $(\pm\Theta^{*},\lambda^*)$. See Figure \ref{fig:DiracTypes} (right). Note that if $T_1(\lambda^*-\lambda) = T_1(\lambda^*+\lambda)$ for $\lambda \in (0,\delta)$, then the pairs of cones which are in the same directions coincide, so we get the first item above. 
		\end{itemize}
		\begin{figure}[ht]
			\centering
			\includegraphics[width=0.75\textwidth]{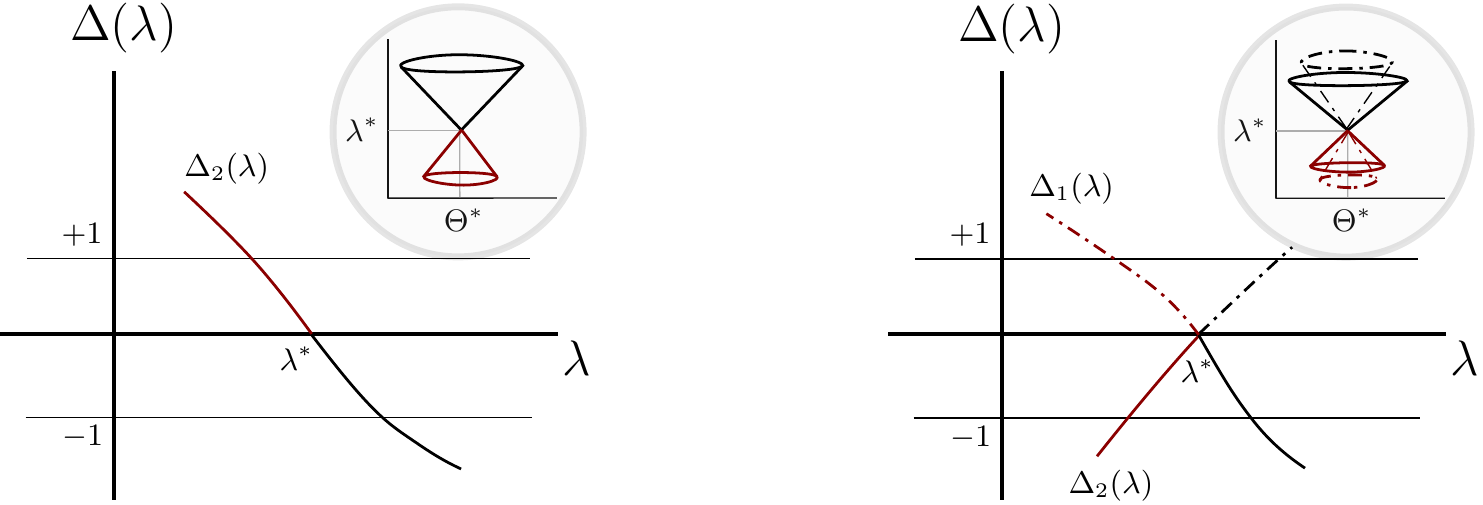}
			\caption{Behaviour of functions $\Delta_1$ and $\Delta_2$ near Dirac point $\lambda^*$. The circular windows schematically show the dispersion relation in a neighborhood of $(\pm\Theta^{*},\lambda^*)$, see Remark \ref{DiracPics} for details.}
			\label{fig:DiracTypes}
		\end{figure}
	\end{remark}
	
	Next result of this section is about irreducibility of Fermi surfaces corresponding to graphene Hamiltonian $\cH$ at high-energy levels. Depending on potential, reducibility of this surface may happen for uncountably many (low) energies. This is unlike special cases e.g. 2D and 3D discrete Laplacian plus a periodic potential, continuous Laplacian with special type of potential, and more general graph operators, where the underlying graph is planar with two vertices per period, in which irreducibility happens for all but finitely many energies \cite{LS20,FWS21}. Reducibility is required for the existence of embedded eigenvalues engendered by local defect, except for the anomalous situations when an eigenvalue has compact support \cite{KV06}. In summary, Fermi surface of a 2-periodic operator at an energy $\lambda$ is the set of wavevectors $(\theta_1,\theta_2)$ admissible by the operator at that energy. For periodic graph Hamiltonian, the dispersion function is a Laurent polynomial in the Floquet variables $(z_1,z_2) = (e^{i\theta_1}, e^{i\theta_2})$. When the dispersion function can be factored, for each fixed energy, as a product of two or more Laurant polynomials in $(\theta_1,\theta_2)$, each irreducible component contributes a sequence of special bands and gaps. We refer reader to the work \cite{FWS21} and references there for detailed discussions. By referring to Theorem \ref{detProp2}, the dispersion relation (Fermi surface) of $\cH$ is equivalent to the fact that $|s_0(\Theta)|^2/9$ is an eigenvalue of $\mathbb{G}_0^2(\lambda)$, i.e. it is a root of polynomial
	\begin{equation*}
		z^2 - \tr(\mathbb{G}_0^2(\lambda))z + \det(\mathbb{G}_0^2(\lambda)).
	\end{equation*}
	The roots of this quadratic polynomial have forms
	\begin{equation*}
		\frac{|s_0(\Theta)|^2}{9} = \frac{\tr(\mathbb{G}_0^2(\lambda))}{2} \pm \frac{1}{2} \big(\tr^2(\mathbb{G}_0^2(\lambda)) - 4\det(\mathbb{G}_0^2(\lambda))\big)^{1/2}.
	\end{equation*}
	Now observe that 
	\begin{align*}
		\frac{|s_0(\Theta)|^2}{9} &= 
		\frac{\tr^2(\mathbb{G}_0(\lambda))}{2} - \det(\mathbb{G}_0(\lambda)) \pm \frac{1}{2}\big(\tr(\mathbb{G}_0(\lambda))\big(\tr^2(\mathbb{G}_0(\lambda)) - 4\det(\mathbb{G}_0(\lambda))\big)^{1/2}\big)\\
		&=- \det(\mathbb{G}_0(\lambda)) + \frac{1}{2} \tr(\mathbb{G}_0(\lambda))
		\big(\tr(\mathbb{G}_0(\lambda)) \pm \big(\tr^2(\mathbb{G}_0(\lambda)) - 4\det(\mathbb{G}_0(\lambda))\big)^{1/2} \big).
	\end{align*}
	Application of $T_1(\lambda)$ and $T_2(\lambda)$ from \eqref{eq:T1T2Def} in $\Delta_k$,  implies that 
	\begin{align*}
		\frac{|s_0(\Theta)|^2}{9} 
		= T_1^2(\lambda) + T_2(\lambda) \pm 2T_1(\lambda)T_2^{1/2}(\lambda)
		= \Delta_{1,2}^2(\lambda).
	\end{align*}
	So we proved the following result on reducibility of the Fermi surface of $\cH$. 
	\begin{thm}\label{ReducibleFermisurface}\emph{\textbf{(Fermi surfaces)}} 
		The relation \eqref{eq:DispRelation} has representation 
		\begin{equation*}
			\big(P(z_1,z_2) P(z_1^{-1},z_2^{-1}) - 9\Delta_1^2(\lambda)\big)\big(P(z_1,z_2) P(z_1^{-1},z_2^{-1}) - 9\Delta_2^2(\lambda)\big) = 0,
		\end{equation*}
		where $P(\omega_1,\omega_2) := 1+\omega_1+\omega_2$ and $z_1 = e^{i\theta_1}$ and $z_2 = e^{i\theta_2}$. Moreover, letting $\cS_1 := \{ \lambda \in \mathbb{R} ~|~ \Delta_1(\lambda) \in [-1,1] \}$ and $\cS_2 := \{ \lambda \in \mathbb{R} ~|~ \Delta_2(\lambda) \in [-1,1] \}$, then Fermi surface with the energy level $\lambda \not \in \Sigma^D$ is 
		\begin{itemize}
			\item reducible if $\lambda \in (\cS_1\cap \cS_2)$,
			\item irreducible if $\lambda \in (\cS_1\setminus \cS_2)\cup(\cS_2\setminus \cS_1)$,
			\item absent if $\lambda \in \bR\setminus (\cS_1\cup \cS_2)$.
		\end{itemize}
	\end{thm}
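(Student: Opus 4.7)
The plan is to unpack both claims directly from the computation carried out in the paragraph immediately preceding the theorem, which already establishes $|s_0(\Theta)|^2/9 = \Delta_{1,2}^2(\lambda)$ as the solvability condition of the dispersion relation \eqref{eq:DispRelation}. The only remaining ingredients are a clean algebraic rewrite in the Floquet variables $z_1,z_2$ and a short case analysis based on the range of $|s_0(\Theta)|^2$.

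For the representation, first observe the elementary identity
\begin{equation*}
|s_0(\Theta)|^2 = (1+e^{i\theta_1}+e^{i\theta_2})(1+e^{-i\theta_1}+e^{-i\theta_2}) = P(z_1,z_2)\,P(z_1^{-1},z_2^{-1}),
\end{equation*}
with $z_j=e^{i\theta_j}$. Combining this with the derivation just before the theorem, which shows that the quadratic in $|s_0(\Theta)|^2/9$ coming from \eqref{eq:DispRelation} factors with roots $\Delta_1^2(\lambda)$ and $\Delta_2^2(\lambda)$, I rewrite the dispersion relation as
\begin{equation*}
\Bigl(\tfrac{|s_0(\Theta)|^2}{9}-\Delta_1^2(\lambda)\Bigr)\Bigl(\tfrac{|s_0(\Theta)|^2}{9}-\Delta_2^2(\lambda)\Bigr)=0,
\end{equation*}
then clear denominators and substitute to get the claimed factorization in $P(z_1,z_2)P(z_1^{-1},z_2^{-1})$.

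For the trichotomy, I would use that $|s_0(\Theta)|^2/9$ ranges over $[0,1]$ on $[-\pi,\pi]^2$ (this was already recorded after Proposition \ref{detLem}). A factor $P(z_1,z_2)P(z_1^{-1},z_2^{-1})-9\Delta_k^2(\lambda)$ has zeros on the torus $\mathbb{T}^2$ iff $\Delta_k^2(\lambda)\in[0,1]$, i.e.\ iff $\lambda\in\cS_k$. Hence: if $\lambda\in\cS_1\cap\cS_2$ both factors cut out non-empty components of the Fermi surface, so it is reducible in the sense of \cite{FWS21}; if $\lambda$ belongs to exactly one of $\cS_1,\cS_2$ then only the corresponding factor contributes while the other is nowhere zero on $\mathbb{T}^2$, and the Fermi surface equals a single component, hence irreducible; if $\lambda\notin\cS_1\cup\cS_2$ neither factor vanishes on $\mathbb{T}^2$ and the Fermi surface is empty.

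The main conceptual point to be careful with is the notion of reducibility itself: the Laurent polynomial always factors as written, so one must justify that the splitting is geometrically nontrivial on the torus rather than merely algebraically so. This is exactly what the condition $\Delta_k^2(\lambda)\le 1$ controls, by pinning $9\Delta_k^2(\lambda)$ inside the image of $P(z_1,z_2)P(z_1^{-1},z_2^{-1})$. Everything else is a bookkeeping step: apply Theorem \ref{detProp2} and the formulas in \eqref{eq:T1T2Def}, rewrite with $P$, and read off the three cases.
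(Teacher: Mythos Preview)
Your proposal is correct and follows essentially the same route as the paper: the computation immediately preceding the theorem \emph{is} the paper's proof, and you recapitulate it (rewrite $|s_0(\Theta)|^2$ as $P(z_1,z_2)P(z_1^{-1},z_2^{-1})$, use the factorization of the quadratic in $|s_0(\Theta)|^2/9$ with roots $\Delta_{1,2}^2(\lambda)$) and then supply the trichotomy by the range argument. If anything, your write-up is slightly more explicit than the paper's, which leaves the three cases to the reader; your caveat about geometric versus algebraic reducibility is also well taken, though the paper does not address it either.
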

	\textbf{A Remark on Choices of Brillouin Zone:}	
	There exists some room on the choice of fundamental domain $W$ for hexagonal lattices. For the selected one in Figure \ref{fig:fundDomain}, space and quasimomentum (conjugate) basis with respect to global coordinate system are of the form  
		\begin{equation}
		\vec b_1 = \frac{1}{2}
		\begin{pmatrix}
			3\\
			\sqrt{3}
		\end{pmatrix}, 
		\qquad
		\vec b_2 = 
		\begin{pmatrix}
			0\\
			\sqrt{3}
		\end{pmatrix},	
		\qquad
		\vec b_1^* = \frac{2}{3}
		\begin{pmatrix}
			1\\
			0
		\end{pmatrix}, 
		\qquad
		\vec b_2^* =  \frac{1}{3}
		\begin{pmatrix}
			-1\\
			\sqrt{3}
		\end{pmatrix}.
	\end{equation}
	The dual basis then satisfies 
	\begin{equation}
		\label{eq:orthon}
		\vec b_n^* \cdot \vec b_m = \delta_{nm}
	\end{equation}
	and vectors $2\pi \vec b_1^* $ and $2\pi\vec b_2^* $ span hexagonal lattice as well denoted by $\Gamma^*$. Now the orthonormality condition \eqref{eq:orthon} implies
	\begin{equation}
		n_1 \theta_1 + n_2 \theta_2 = \big(\theta_1 \vec b_1^* + \theta_2 \vec b_2^*\big) \cdot \big(n_1 \vec b_1 + n_2 \vec b_2\big).
	\end{equation}
	The two choices of Brillouin zone using coordinates $\Theta = (\theta_1, \theta_2)$ with respect to dual basis vectors $\vec b_1^*, \vec b_2^*$ are shown in Figure \ref{fig:Brillouin}. In the literature it is more common to represent these Brillouin zones in corresponding Cartesian coordinates $\vec \kappa = (k_1, k_2)^T$ given by $\vec \kappa = B^*\Theta$, where $B^*$ is the transformation matrix with columns formed by the dual basis vectors, i.e. 
	\begin{equation}
		B^* = 
		\begin{pmatrix}
			\vec b_1^* ~
			\vec b_2^* 
		\end{pmatrix}
		= \frac{1}{3}
		\begin{pmatrix}
			2 & -1\\
			0 & \sqrt{3}
		\end{pmatrix}.
	\end{equation}
	\begin{figure}[ht]
		\centering
		\includegraphics[width=0.75\textwidth]{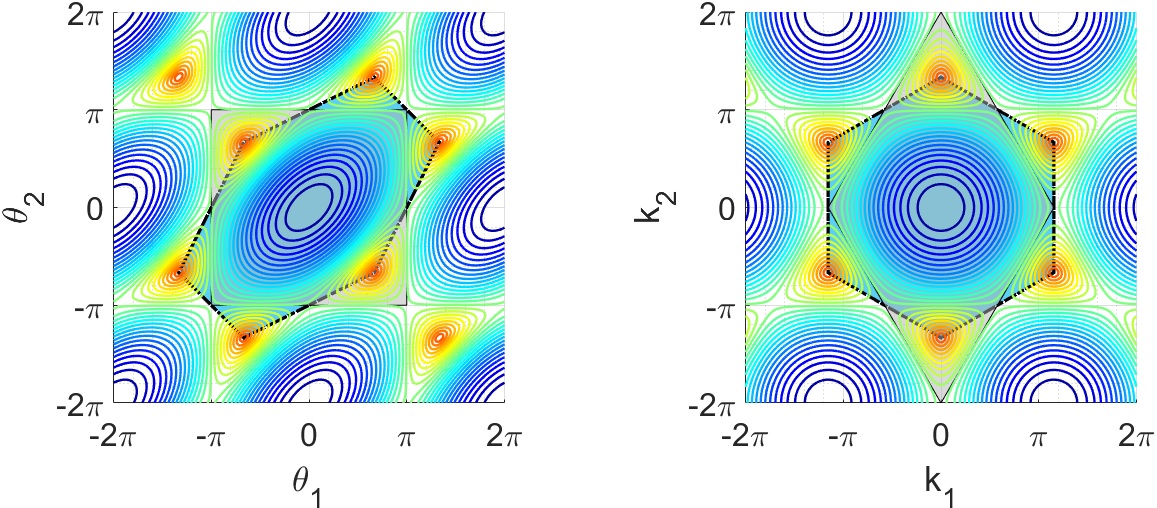}
		\caption{Choices of Brillouin zone, contour plot of second sheet of the dispersion surface in Left: coordinates $\theta_1, \theta_2$ (drawn as if they were Cartesian) and Right: coordinates $k_1$, $k_2$ (which are Cartesian). }
		\label{fig:Brillouin}
	\end{figure}
	
	As it is shown in Figure \ref{fig:Brillouin}(right), the resulting Brillouin zones will be symmetric in the new coordinates system $\vec \kappa$.  One arrives at the first picture by using $\theta_1$ and $\theta_2$ as parameters for the dispersion relation ranging within $[-\pi, \pi]^2$  and then plots the result using $k_1$ and $k_2$ as Cartesian coordinates. Although these two representations are equal, for symmetry discussion it may be more preferable to work with  $\vec \kappa$ coordinate system, while for our case we followed the Brillouin zone in $\Theta$ coordinates due to simpler presentation of the vertex conditions, see  \eqref{eq:cond1At1}-\eqref{eq:cond4At1}. Interested reader is encouraged to look at the work \cite{BC18} for detailed discussions. 
	
	\section{Perturbed Hamiltonian}
	\label{sec:APH}
	In this section we will apply tools from perturbation theory to characterize dispersion relation for the case in which edges meet at generally different angels, see Figure \ref{fig:anglePrturb} for schematic fundamental domains. Restricted to fundamental domain $W$, this is equivalent to find $(\lambda, \Theta) \in \bR \times [-\pi ,\pi]^2$ so that $\det(\mathbb{M}_\eps(\lambda)) = 0$ as stated in Proposition \ref{detProp}.
	\begin{figure}[ht]
		\centering
		\includegraphics[width=0.8\textwidth]{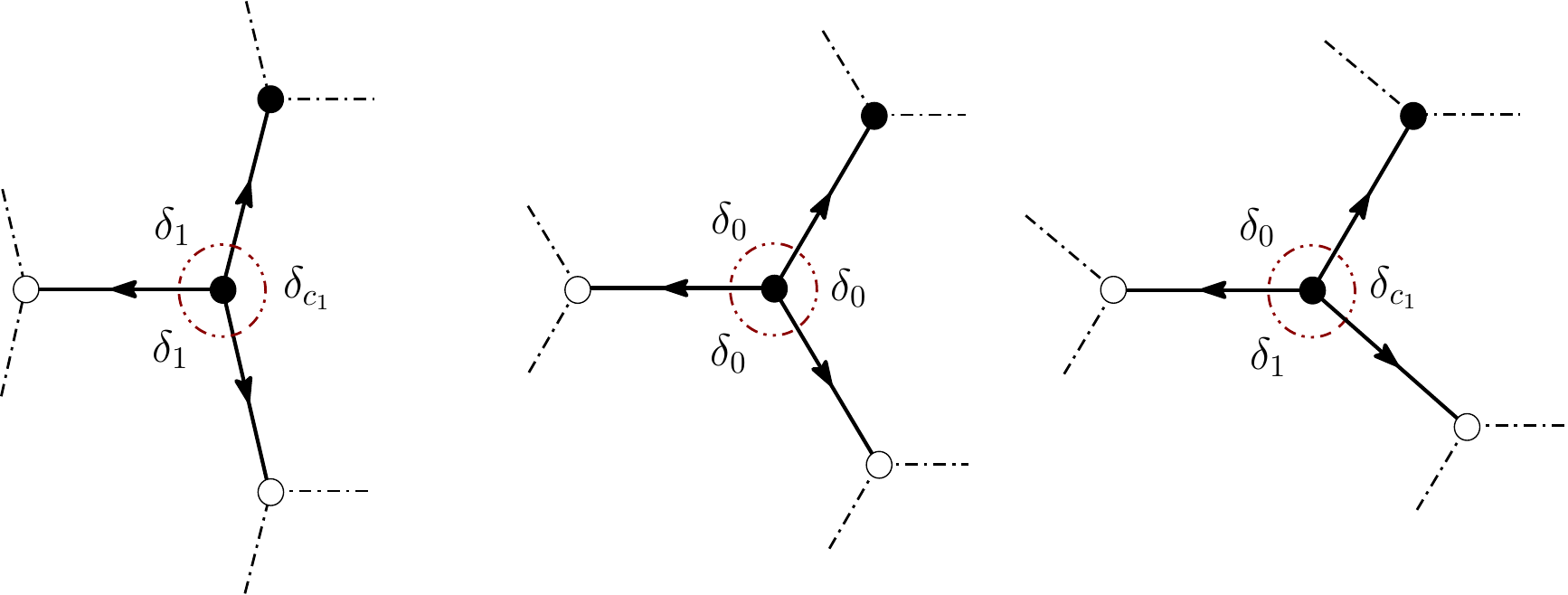}
		\caption{Fundamental domains for angle perturbed hexagonal lattices, see \eqref{eq:anglePerturb}. The middle picture shows the graphene lattice in which edges are met with equal angles at each vertex.}
		\label{fig:anglePrturb}
	\end{figure}
	
	First observe that for angle $\delta_{c}\up{\eps}$, see \eqref{eq:anglePerturb}, an expansion of sine function has the form 
	\begin{equation}
		\sin(\delta_{c}\up{\eps}) = \sin(\delta_0) + \eps c \cos(\delta_0) + \cO(\eps^2)
	\end{equation}
	as $\eps$ goes to zero. A similar result holds as
	\begin{equation}
		\sin^2(\delta_{c}\up{\eps}) = \sin^2(\delta_0) + 2\eps c \cos^2(\delta_0) + \cO(\eps^2).
	\end{equation}
	Let us introduce 
	\begin{equation}
		\label{eq:s1}
		s_1(\Theta) := \cot(\delta_0) (1 + c_1 e^{-i \theta_1} + c_2 e^{-i \theta_2}).
	\end{equation}
	Then up to order $\cO(\eps^2)$ accuracy, $\mathbb{M}_\eps(\lambda)$ has expansion of the form 
	\begin{equation*}
		\mathbb{M}_{\eps} := \mathbb{M}_0 + \eps \mathbb{M}_1 + O(\eps^2),
	\end{equation*}
	in which the two matrices have block structures as
	\begin{equation*}
		\mathbb{M}_0 := \begin{pmatrix}
			s_{0}(0)\Phi_0(0) & - s_{0}(\Theta)\Phi_0(1)\\
			- \overline{s_{0}(\Theta)}\Phi_0(1) & s_{0}(0)\Phi_0(0)
		\end{pmatrix}, \qquad \mathbb{M}_1:= \begin{pmatrix}
			s_1(0)\Phi_{1}(0) & - s_1(\Theta)\Phi_{1}(1)\\
			- \overline{s_1(\Theta)}\Phi_{1}(1) & s_1(0)\Phi_{1}(0)
		\end{pmatrix}
	\end{equation*}
	with $2 \times 2$ blocks 
	\begin{align*}
		\Phi_0(0) &:= \begin{pmatrix}
			\phi_1'(0) & \phi_2'(0)\\
			\phi_1'''(0) & \phi_2'''(0) 
		\end{pmatrix}, \qquad \Phi_0(1) := \begin{pmatrix}
			\phi_1'(1) & \phi_2'(1)\\
			\phi_1'''(1) & \phi_2'''(1) 
		\end{pmatrix},\\
		\Phi_1(0) &:= \begin{pmatrix}
			\phi_1'(0) & 2\phi_2'(0)\\
			0 & \phi_2'''(0) 
		\end{pmatrix}, \qquad \Phi_1(1) := \begin{pmatrix}
			\phi_1'(1) & 2\phi_2'(1)\\
			0 & \phi_2'''(1) 
		\end{pmatrix}.
	\end{align*}
	Applying the fact that $\Phi_0(1)$ is non-singular, see Lemma \ref{A1NonSingular}, we introduce
	\begin{equation}
		\Lambda_0(0) := \Phi_0^{-1}(1) \Phi_0(0), \qquad 	\Lambda_1(1) := \Phi_0^{-1}(1) \Phi_1(1).
	\end{equation}
	Then up to $\cO(\eps^2)$ error, the perturbed matrix $\mathbb{M}_\eps$ can be explicitly written as
	\begin{equation}
		\label{eq:Meps}
		\mathbb{M}_\eps(\lambda) = 
		\begin{pmatrix}
			3 \Lambda_0(0) & -s_0(\Theta) \\
			-\overline{s_0(\Theta)}& 3 \Lambda_0(0) 
		\end{pmatrix}
		+ \eps 
		\begin{pmatrix}
			0 & - s_1(\Theta) \Lambda_1(1) \\
			- \overline{s_1(\Theta)} \Lambda_1(1) & 0
		\end{pmatrix}.
	\end{equation}
	As stated in Theorem \ref{detProp2}, equality $\mathbb{G}_0(\lambda) = \Lambda_0(0)$ holds with components of $\mathbb{G}_0(\lambda)$ in terms of the fundamental solutions
	\begin{equation}
		\mathbb{G}_0(\lambda) =
		\begin{pmatrix}
			g_1(1) & g_3(1)\\
			g_1''(1) & g_3''(1) 
		\end{pmatrix}.
	\end{equation}
	Denoting by $\widetilde{\cD}(f,g) := f(1)g''(1) - g(1)f''(1)$, we represent functions $\phi_1$ and $\phi_2$ in terms of the fundamental solutions. 
	\begin{lem}\label{FundSoltoPhi}
		Functions $\phi_1$ and $\phi_2$ have representations
		\begin{align*}
			\phi_1(x) &= g_1(x) + \widetilde{\cD}^{-1}(g_2,g_4) \big(\widetilde{\cD}(g_4,g_1) g_2(x) + \widetilde{\cD}(g_1,g_2)g_4(x)\big),\\
			\phi_2(x) &= g_3(x) + \widetilde{\cD}^{-1}(g_2,g_4)\big(\widetilde{\cD}(g_4,g_3) g_2(x) + \widetilde{\cD}(g_3,g_2)g_4(x)\big).
		\end{align*}
	\end{lem}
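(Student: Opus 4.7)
The plan is to expand each of $\phi_1$ and $\phi_2$ in the basis of fundamental solutions $\{g_1,g_2,g_3,g_4\}$ of $\cH^{\text{per}} u = \lambda u$, use the initial-type normalizations \eqref{fundsolconditions} of the $g_k$ at $x=0$ to read off two coefficients immediately, and then determine the remaining two coefficients from the right-endpoint conditions in \eqref{eq:phiIndSol} by solving a $2\times 2$ linear system.

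First I would write $\phi_1(x) = a\,g_1(x) + b\,g_2(x) + c\,g_3(x) + d\,g_4(x)$, which is admissible since $\phi_1$ solves the same fourth-order equation and $\{g_k\}$ is a basis of its solution space. Evaluating this identity and its second derivative at $x=0$ and using $g_k^{(j-1)}(0) = \delta_{jk}$ gives $\phi_1(0) = a = 1$ and $\phi_1''(0) = c = 0$, leaving
\begin{equation*}
\phi_1(x) = g_1(x) + b\,g_2(x) + d\,g_4(x).
\end{equation*}
Imposing the remaining endpoint conditions $\phi_1(1) = 0$ and $\phi_1''(1)=0$ yields the linear system
\begin{equation*}
\begin{pmatrix} g_2(1) & g_4(1) \\ g_2''(1) & g_4''(1) \end{pmatrix}
\begin{pmatrix} b \\ d \end{pmatrix}
= - \begin{pmatrix} g_1(1) \\ g_1''(1) \end{pmatrix},
\end{equation*}
whose coefficient matrix has determinant exactly $\widetilde{\cD}(g_2,g_4)$. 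Solving by Cramer's rule gives $b = \widetilde{\cD}(g_4,g_1)/\widetilde{\cD}(g_2,g_4)$ and $d = \widetilde{\cD}(g_1,g_2)/\widetilde{\cD}(g_2,g_4)$, reproducing the stated formula. The argument for $\phi_2$ is identical: the conditions at $0$ force the coefficient of $g_1$ to vanish and the coefficient of $g_3$ to equal one, and the same $2\times 2$ system (with right-hand side driven by $g_3$ instead of $g_1$) yields the claimed expression.

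The only substantive step is justifying $\widetilde{\cD}(g_2,g_4)\neq 0$ so that the linear system is solvable; this is the item I expect to require the most care. My argument will be by contradiction: if $\widetilde{\cD}(g_2,g_4) = 0$, then there exist scalars $(\alpha,\beta)\neq (0,0)$ such that $u := \alpha g_2 + \beta g_4$ satisfies $u(1) = u''(1) = 0$. By the initial conditions \eqref{fundsolconditions} one also has $g_2(0) = g_4(0) = 0$ and $g_2''(0) = g_4''(0) = 0$, so $u(0) = u''(0) = 0$ as well. Hence $u$ is a nontrivial solution of \eqref{eigenvalueEquation} satisfying the boundary conditions \eqref{eq:SigmaDboundary}, forcing $\lambda \in \Sigma^D$ and contradicting the standing hypothesis of the lemma. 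This closes the non-singularity gap and completes the proof.
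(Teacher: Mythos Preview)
Your proof is correct and follows exactly the approach the paper uses for the companion Lemma~\ref{phitoFundSol} (which expresses the $g_k$ in terms of the $\phi_k$): expand in the alternate basis, read off two coefficients from the normalizations at $x=0$, and solve a $2\times2$ system for the remaining two using the conditions at $x=1$. The paper in fact states Lemma~\ref{FundSoltoPhi} without proof, so your argument is precisely what is intended; your additional justification that $\widetilde{\cD}(g_2,g_4)\neq 0$ via the characterization of $\Sigma^D$ is a welcome detail the paper leaves implicit.
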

	Application of Lemma \ref{FundSoltoPhi} along with characterization $\phi_2'''(1) = \phi_1'(1)$ yields representation of $\Lambda_1(1)$ in \eqref{eq:Meps} in terms of the fundamental solutions. From now on we call this representation $\mathbbm{G}_1(\lambda)$ matrix. One way to calculate determinant of $\mathbbm{M}_\eps(\lambda)$ is to apply results on analysis of perturbed matrices, e.g. see \cite{K95} and references there. However, we calculate this quantity directly up to $\cO(\eps^2)$ order, which under heavy simplification of the terms turns to be 
	\begin{equation}
		\label{eq:DetMeps}
		\det(\mathbbm{M}_\eps) =d_0 + \eps d_1+ \cO(\eps^2).
	\end{equation}
	The $d_0$ is equal to determinant of $\mathbb{M}_0$ matrix 
	\begin{equation}
		\begin{split}
			d_0 = \det(\mathbbm{M}_0) = \frac{|s_0(\Theta)|^4}{81} - \frac{|s_0(\Theta)|^2}{9} \tr(\mathbbm{G}_0^2) + \det(\mathbbm{G}_0^2) 
		\end{split}.
	\end{equation}
	Moreover, the $\eps$-contribution term is
	\begin{equation}
		d_1 = -4\frac{|s_0(\Theta)|^2}{9}\Re\big(s_0(\Theta) \overline{ s_1(\Theta)}\big) G(\lambda)
	\end{equation}
	with the purely $\lambda$-dependent function 
	\begin{equation}
		G(\lambda) = -\frac{1}{2} \big\{(1-(\mathbbm{G}_0^2)_{22})(\mathbbm{G}_1)_{11} + (1-(\mathbbm{G}_0^2)_{11}) (\mathbbm{G}_1)_{22} + 
		(\mathbbm{G}_0^2)_{21}(\mathbbm{G}_1)_{12} +
		(\mathbbm{G}_0^2)_{12}(\mathbbm{G}_1)_{21} \big\}.
	\end{equation}
	Therby up to $\eps^2$ accuracy, zeros of perturbed determinant \eqref{eq:DetMeps} is equivalent to the fact that $|s_0(\Theta)|^2/9$ is a root of polynomial 
	\begin{equation}
		\label{eq:poly1}
		\cP(z) = z^4 -\big(\tr(\mathbbm{G}_0^2)+4\eps \Re(s_0(\Theta) \overline{ s_1(\Theta)}) \xi(\lambda)\big)z^2 + \det(\mathbbm{G}_0^2). 
	\end{equation}
	Notice that a fourth-order polynomial of form $z^4 -az^2 + b$ can be factorized as
	\begin{equation}
		z^4 -az^2 + b = (z^2+\tilde a z+ \tilde b)(z^2-\tilde a z+ \tilde b)
	\end{equation}
	in which $\tilde a= (a+2b^{1/2})^{1/2}$ and $\tilde b = b^{1/2}$. This realization along with the form \eqref{eq:poly1} implies that $\pm|s_0(\Theta)|/3$ are roots of $\cP(z) = \cP_1(z)\cP_2(z)$, where
	\begin{equation}
		\begin{split}
			\cP_{1,2}(z) = z^2 \pm\big(\tr(\mathbbm{G}_0^2)+2\text{det}^{1/2}(\mathbbm{G}_0^2) 
			&+4\eps \Re(s_0(\Theta) \overline{s_1(\Theta)})G(\lambda) \big)^{1/2}z \\+ 
			\big(\tr(\mathbbm{G}_0^2)&+4\eps \Re(s_0(\Theta) \overline{ s_1(\Theta)})G(\lambda)\big)^{1/2} 
		\end{split}.
	\end{equation}
	Without loss of generality, let's assume that $|s_0(\Theta)|/3$ is a root of $\cP_2$, i.e. 
	\begin{equation}
		\begin{split}
			\frac{2}{3} |s_0(\Theta)| =  &\big(\tr(\mathbbm{G}_0^2)+2\text{det}^{1/2}(\mathbbm{G}_0^2) 
			+4\eps \Re(s_0(\Theta) \overline{s_1(\Theta)})G(\lambda) \big)^{1/2} \pm  \\
			&\big(\tr(\mathbbm{G}_0^2)-2\text{det}^{1/2}(\mathbbm{G}_0^2) 
			+4\eps \Re(s_0(\Theta) \overline{ s_1(\Theta)})G(\lambda) \big)^{1/2}.
		\end{split}
	\end{equation}
	Now applying the fact that 
	\begin{equation}
		\tr(\mathbbm{G}_0^2) = \tr^2(\mathbbm{G}_0) - 2\text{det}(\mathbbm{G}_0)
	\end{equation}
	along with equality $\text{det}^{1/2}(\mathbbm{G}_0^2) = \text{det}(\mathbbm{G}_0) $ implies that 
	\begin{equation}
		\begin{split}
			\frac{|s_0(\Theta)| }{3} =  &\big( \frac{1}{4}\tr^2(\mathbbm{G}_0) 
			+\eps\Re(s_0(\Theta) \overline{s_1(\Theta)})G(\lambda) \big)^{1/2} \pm  \\
			&\big(\frac{1}{4} \tr^2(\mathbbm{G}_0)-\text{det}(\mathbbm{G}_0) 
			+\eps \Re(s_0(\Theta) \overline{s_1(\Theta)})G(\lambda) \big)^{1/2}.
		\end{split}
	\end{equation}
	However using the definitions of $T_1$ and $T_2$ in \eqref{eq:T1T2Def}, we  introduce $\eps$-extension of these functions as 
	\begin{equation}\label{eq:T1T2eps1}
	    T_1\up{\eps} := \big(T_1^2(\lambda) +\eps\Re(s_0(\Theta) \overline{ s_1(\Theta)}) G(\lambda)\big)^{1/2}
	\end{equation}
	and
	\begin{equation}
	   	T_2\up{\eps} := T_2(\lambda)  +\eps \Re(s_0(\Theta) \overline{ s_1(\Theta)}) G(\lambda). 
	\end{equation}
	Finding the roots of quadratic polynomials $\cP_{1,2}$ is then reduces to condition $|s_0(\Theta)|/3$ satisfying 
	\begin{equation}
		\label{eq:T1T2eps2}
		\pm \frac{|s_0(\Theta)|}{3} = T_1\up{\eps}+ (T_2\up{\eps})^{1/2} \quad \text{or} \quad 
		\pm \frac{|s_0(\Theta)|}{3} = T_1\up{\eps} - (T_2\up{\eps})^{1/2}.
	\end{equation}
	Thus we proved an $\eps$-extended dispersion relation for perturbed Hamiltonian as stated below. 
	\begin{thm}\label{anglePertDispThm}\emph{\textbf{(perturbed dispersion)}} 
		The dispersion relation for perturbed graphene Hamiltonian up to accuracy $\cO(\eps^2)$ satisfies
		\begin{equation}
			\label{eq:D1D2eps}
			\Big(\Delta_{1}\up{\eps}(\lambda,\Theta) \pm \frac{|s_0(\Theta)|}{3}\Big)\Big( \Delta_{2}\up{\eps}(\lambda,\Theta) \pm \frac{|s_0(\Theta)|}{3} \Big)= 0,
		\end{equation}
		where $\Delta_{1,2}\up{\eps} := T_1\up{\eps} \pm (T_2\up{\eps})^{1/2}$. 
	\end{thm}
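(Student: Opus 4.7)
My plan is to prove Theorem \ref{anglePertDispThm} by starting from the condition $\det(\mathbb{M}_\eps(\lambda)) = 0$ of Proposition \ref{detProp}, then carrying out a first-order expansion in $\eps$ and factorizing the resulting quartic polynomial in $|s_0(\Theta)|/3$ into two quadratics whose roots reproduce the claimed $\Delta_{1,2}\up{\eps}$.

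First I would expand the sine factors appearing in the matrix entries. Since $\sin(\delta_c\up{\eps}) = \sin(\delta_0) + \eps c \cos(\delta_0) + \cO(\eps^2)$ and similarly for $\sin^2$, the quantities $S_k\up{\eps}(\Theta)$ in \eqref{eq:sThetaFunc} split cleanly: the leading order produces $s_0(\Theta)$ from \eqref{eq:s0}, and the $\cO(\eps)$ correction produces the combination $s_1(\Theta)$ from \eqref{eq:s1}. Substituting these expansions into the block matrix $\mathbb{M}_\eps$ defined in Proposition \ref{detProp} gives the decomposition $\mathbb{M}_\eps = \mathbb{M}_0 + \eps\mathbb{M}_1 + \cO(\eps^2)$ with the block forms already recorded in \eqref{eq:Meps}, where $\Phi_0(1)$ is non-singular by Lemma \ref{A1NonSingular}. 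Factoring $\Phi_0(1)$ out of the second row of blocks lets me rewrite everything in terms of the matrices $\Lambda_0(0) = \mathbb{G}_0(\lambda)$ (via Theorem \ref{detProp2}) and $\Lambda_1(1) = \mathbb{G}_1(\lambda)$ (via Lemma \ref{FundSoltoPhi}), which are purely $\lambda$-dependent.

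Next I would compute $\det(\mathbb{M}_\eps)$ to first order. The zeroth-order piece is the graphene determinant, which by Proposition \ref{detLem} and Theorem \ref{detProp2} equals
\[
d_0 = \frac{|s_0(\Theta)|^4}{81} - \frac{|s_0(\Theta)|^2}{9}\tr(\mathbb{G}_0^2) + \det(\mathbb{G}_0^2),
\]
a quadratic polynomial in $|s_0(\Theta)|^2/9$. For the linear-in-$\eps$ piece I would use the Jacobi formula $\frac{d}{d\eps}\det(\mathbb{M}_\eps)|_{\eps=0} = \tr(\mathrm{adj}(\mathbb{M}_0)\mathbb{M}_1)$; because $\mathbb{M}_1$ has zero diagonal blocks, only the off-diagonal cofactor blocks of $\mathrm{adj}(\mathbb{M}_0)$ contribute. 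A block-by-block calculation then collects the $\Theta$-dependence into the single real scalar $\Re(s_0(\Theta)\overline{s_1(\Theta)})$ multiplied by the $|s_0(\Theta)|^2/9$ factor coming from the surviving $\mathbb{M}_0$-block, and collects the $\lambda$-dependence into the function $G(\lambda)$ displayed just above \eqref{eq:poly1}. This yields $\det(\mathbb{M}_\eps) = d_0 + \eps d_1 + \cO(\eps^2)$ with the exact $d_1$ stated in the text.

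Finally I would treat $|s_0(\Theta)|^2/9$ as an unknown and observe that, regarded as a polynomial in $z = |s_0(\Theta)|/3$, the first-order-corrected equation is the biquadratic $z^4 - (\tr(\mathbb{G}_0^2) + 4\eps \Re(s_0\overline{s_1})G(\lambda))z^2 + \det(\mathbb{G}_0^2) = 0$ of \eqref{eq:poly1}. Using the elementary factorization $z^4 - az^2 + b = (z^2 + \tilde a z + \tilde b)(z^2 - \tilde a z + \tilde b)$ with $\tilde b = b^{1/2}$ and $\tilde a = (a + 2\tilde b)^{1/2}$, together with $\det(\mathbb{G}_0^2)^{1/2} = \det(\mathbb{G}_0)$ and $\tr(\mathbb{G}_0^2) = \tr^2(\mathbb{G}_0) - 2\det(\mathbb{G}_0)$, the quartic splits into two quadratics whose discriminants are precisely the $\eps$-perturbed versions $T_1\up{\eps}$, $T_2\up{\eps}$ defined in \eqref{eq:T1T2eps1}. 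Reading off the four roots $\pm|s_0(\Theta)|/3 = T_1\up{\eps} \pm (T_2\up{\eps})^{1/2}$ as in \eqref{eq:T1T2eps2} and multiplying the two factors gives the advertised dispersion relation \eqref{eq:D1D2eps}.

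The main obstacle is the bookkeeping in the determinant expansion that produces $d_1$: one must verify that after using Lemma \ref{CondofFundSol} and the symmetry relations of Lemma \ref{symResultPhi}, every cofactor combination collapses into the single clean product $\Re(s_0(\Theta)\overline{s_1(\Theta)})G(\lambda)$ with $G$ depending only on $\lambda$. The rest — the biquadratic factorization and the identification with $\Delta_{1,2}\up{\eps}$ — is then algebraic and parallels the proof of Theorem \ref{detProp2} in the unperturbed case.
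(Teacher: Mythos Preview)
Your proposal is correct and follows essentially the same route as the paper: expand the sine factors to split $\mathbb{M}_\eps$ into $\mathbb{M}_0 + \eps\mathbb{M}_1$, compute $\det(\mathbb{M}_\eps) = d_0 + \eps d_1 + \cO(\eps^2)$, recognize the result as a biquadratic in $z = |s_0(\Theta)|/3$, and factor it using the trace/determinant identities for $\mathbb{G}_0$ to obtain the perturbed Lyapunov branches $\Delta_{1,2}\up{\eps}$. The only minor technical variation is that you invoke Jacobi's formula $\tr(\mathrm{adj}(\mathbb{M}_0)\mathbb{M}_1)$ to isolate $d_1$, whereas the paper simply says it computes the $4\times4$ determinant ``directly\ldots under heavy simplification''; both lead to the same expression, and your observation that $\mathbb{M}_1$ has zero diagonal blocks (because $s_1(0)=\cot(\delta_0)(1+c_1+c_2)=0$) is exactly what makes the Jacobi route tidy.
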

	We stress out here that for the case $\eps = 0$, results above are consistent with the ones stated for graphene Hamiltonian. One of the interesting futures of Theorem \ref{anglePertDispThm} is to answer whether singular Dirac points will be preserved under $\eps$-perturbed geometry. To answer this we first characterize the behaviour of $\Theta$-dependent function $ \Re(s_0(\Theta) \overline{ s_1(\Theta)})$ in perturbed part. 
	\begin{lem}
		\label{ReS0S1}
		Function $ \Re(s_0(\Theta) \overline{s_1(\Theta)})$ is $2\pi \bZ^2$ periodic, its magnitude is bounded by $2(1+|c_1|)$ and zeros are at $(0,0)$ and $\pm(2\pi/3, -2\pi/3)$. 
	\end{lem}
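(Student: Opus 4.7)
\textbf{Proof proposal for Lemma~\ref{ReS0S1}.} The plan is to compute $s_0(\Theta)\overline{s_1(\Theta)}$ by direct expansion, take the real part, and then read off each of the three claims (periodicity, bound, and location of zeros) from the resulting trigonometric expression. Recall from \eqref{eq:s0} and \eqref{eq:s1} that
\begin{equation*}
s_0(\Theta)\overline{s_1(\Theta)} = \cot(\delta_0)\bigl(1+e^{-i\theta_1}+e^{-i\theta_2}\bigr)\bigl(1+c_1 e^{i\theta_1}+c_2 e^{i\theta_2}\bigr),
\end{equation*}
using that $\cot(\delta_0) \in \bR$. First I would expand the right-hand side into nine terms and collect them into the constant $1+c_1+c_2$, three terms of the form $\alpha_k e^{i\theta_k} + \beta_k e^{-i\theta_k}$, and two conjugate terms in $e^{\pm i(\theta_1-\theta_2)}$. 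The key algebraic simplification is $c_2 = -(1+c_1)$, which forces $1+c_1+c_2=0$, so the constant term drops out. Taking real parts then yields a representation of the form
\begin{equation*}
\Re\bigl(s_0(\Theta)\overline{s_1(\Theta)}\bigr) = \cot(\delta_0)\bigl[(1+c_1)\cos\theta_1 - c_1\cos\theta_2 - \cos(\theta_1-\theta_2)\bigr],
\end{equation*}
after substituting $c_2+1=-c_1$ and $c_1+c_2=-1$.

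From this explicit formula the three assertions fall out cleanly. Periodicity under $2\pi\bZ^2$-shifts is immediate, since the expression is a linear combination of $\cos\theta_1$, $\cos\theta_2$, and $\cos(\theta_1-\theta_2)$. For the zeros, I would simply evaluate at the three specified points: at $\Theta=(0,0)$ the bracket becomes $(1+c_1)-c_1-1=0$, and at $\Theta=\pm(2\pi/3,-2\pi/3)$ each of the three cosines equals $-\tfrac{1}{2}$, so the bracket reduces to $-\tfrac{1}{2}\bigl[(1+c_1)-c_1-1\bigr]=0$. For the magnitude bound, I would apply the triangle inequality together with $|\cos|\le 1$ to obtain
\begin{equation*}
\bigl|\Re(s_0(\Theta)\overline{s_1(\Theta)})\bigr| \leq |\cot(\delta_0)|\bigl((1+|c_1|)+|c_1|+1\bigr) = 2(1+|c_1|)|\cot(\delta_0)|,
\end{equation*}
and since $\delta_0=2\pi/3$ gives $|\cot(\delta_0)|=1/\sqrt{3}<1$, this in particular implies the stated bound $2(1+|c_1|)$.

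I do not anticipate any substantive obstacle here; the only mild bookkeeping point is keeping track of the complex conjugation on $s_1$ (which only flips the exponentials since $\cot(\delta_0)$ is real) and consistently using the constraint $c_2=-(1+c_1)$ to collapse the constant and $(\theta_1-\theta_2)$ terms. The stated bound is not tight because one loses the factor $1/\sqrt{3}$ from $|\cot(2\pi/3)|$, but a loose bound is all that is needed for the forthcoming perturbative estimates.
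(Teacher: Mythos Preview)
Your proposal is correct and follows essentially the same route as the paper's proof: expand the product $s_0(\Theta)\overline{s_1(\Theta)}$, take real parts to obtain a trigonometric expression in $\cos\theta_1$, $\cos\theta_2$, $\cos(\theta_1-\theta_2)$, use $1+c_1+c_2=0$ to kill the constant term, then verify the zeros by direct evaluation and the bound by the triangle inequality. Your version is in fact a bit cleaner (you track the $|\cot(\delta_0)|=1/\sqrt 3$ factor explicitly, whereas the paper silently drops it in the final estimate), but there is no substantive difference in method.
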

	\begin{proof}[\normalfont \textbf{Proof of Lemma~\ref{ReS0S1}}] 
		Recalling the definitions of $s_0(\Theta)$ and $s_1(\Theta)$ from \eqref{eq:s0} and \eqref{eq:s1} respectively
		\begin{equation}
			s_0(\theta) \overline{ s_1(\Theta)} = -\cot(\delta_0)(1+e^{-i\theta_1} + e^{-i\theta_2}) (1+c_1e^{i\theta_1} +c_2e^{i\theta_2}). 
		\end{equation}
		By representation of exponential terms using Euler-formula
		\begin{equation}
			\Re(s_0(\Theta) \overline{ s_1(\Theta)}) = -\cot(\delta_0) \big((1+c_1) \cos(\theta_1) + (1+c_2) \cos(\theta_2)  + (c_1+c_2) \cos(\theta_2- \theta_1) \big),
		\end{equation}
	which after further simplification and application of identity $1+c_1+c_2 = 0$ will reduce to
		\begin{equation}
			\label{eq:s0Bars1}
			\Re(s_0(\Theta)\overline{ s_1(\Theta)}) = -\cot(\delta_0) \big( \cos(\theta_2- \theta_1) + c_1 \cos(\theta_2) + c_2 \cos(\theta_1) \big).
		\end{equation}
		Applying the fact that $\cos(\delta_0) = \cos(2\delta_0)$, then $(0,0)$ and $\pm(2\pi/3, -2\pi/3)$ are zeros of the functions. Finally, setting $c_2 = -1-c_1$ above we get
		\begin{equation}
			|\Re(s_0(\Theta) \overline{ s_1(\Theta)}) | \leq | \cos(\theta_2- \theta_1) -\cos(\theta_1) + c_1\big(\cos(\theta_2)-\cos(\theta_1)\big)|  \leq 2(1+|c_1|)
		\end{equation}
		as desired. 
	\end{proof}
	Figure \ref{Res0Bars1}(right) shows the behaviour of function $\Re(s_0(\Theta)\overline{s_1(\Theta)})$ for a fixed value of $c_1$ parameter.  
	\begin{figure}[ht]
		\centering
		\includegraphics[width=0.85\textwidth]{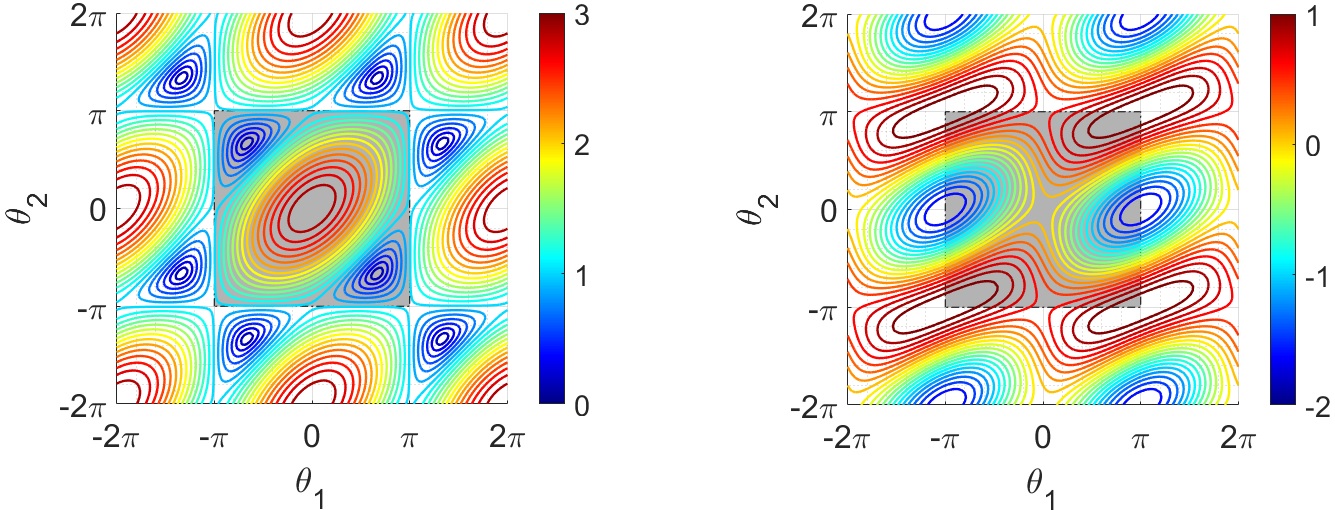}
		\caption{Plots of $|s_0(\Theta)|$ and $\Re(s_0(\Theta) \overline{ s_1(\Theta)})$. Highlighted rectangle shows the first Brillouin zone.}
		\label{Res0Bars1}
	\end{figure}
	\begin{cor}\label{DiracPerturb}\emph{\textbf{(Dirac points)}} 
		Singular Dirac points of graphene remain under angle-perturbed Hamiltonian. 
	\end{cor}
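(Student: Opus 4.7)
The plan is to combine the perturbed dispersion relation from Theorem \ref{anglePertDispThm} with the vanishing properties of $\Re(s_0(\Theta)\overline{s_1(\Theta)})$ established in Lemma \ref{ReS0S1}. Recall that $T_1^{(\eps)}$ and $T_2^{(\eps)}$ (see \eqref{eq:T1T2eps1}) differ from the unperturbed $T_1$ and $T_2$ only through the additive factor $\eps\,\Re(s_0(\Theta)\overline{s_1(\Theta)})\,G(\lambda)$, so at any quasimomentum at which this correction vanishes the perturbed dispersion identity \eqref{eq:D1D2eps} collapses to the graphene one.

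First I would evaluate at the graphene Dirac momenta $\Theta^{*} := \pm(2\pi/3, -2\pi/3)$. By Lemma \ref{ReS0S1} we have $\Re(s_0(\Theta^*)\overline{s_1(\Theta^*)}) = 0$, hence $T_1^{(\eps)}(\lambda,\Theta^*) = |T_1(\lambda)|$ and $T_2^{(\eps)}(\lambda,\Theta^*) = T_2(\lambda)$. Moreover $|s_0(\Theta^*)|=0$ from the analysis preceding Theorem \ref{DiracPointsThm}, so the perturbed dispersion relation \eqref{eq:D1D2eps} at $\Theta = \Theta^*$ reduces to the condition $|T_1(\lambda)| \pm T_2^{1/2}(\lambda) = 0$ for some choice of sign, which is equivalent to $\Delta_k(\lambda) = 0$ for some $k \in \{1,2\}$. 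This is exactly the energy condition characterizing the graphene Dirac points per Theorem \ref{DiracPointsThm}, so every graphene Dirac energy $\lambda^*$ remains a point at which two sheets of the perturbed dispersion relation meet above $\Theta^*$.

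To conclude that the singularity at $(\Theta^*, \lambda^*)$ survives as a two-sided cone (rather than being smoothed out by the perturbation), I would compare local orders of vanishing near $\Theta^*$. The function $|s_0(\Theta)|$ vanishes to first order in $|\Theta - \Theta^*|$ at the Dirac momenta, which is the mechanism producing the cone in graphene; meanwhile $\Re(s_0(\Theta)\overline{s_1(\Theta)})$ also vanishes at $\Theta^*$, and a Taylor expansion of the explicit formula \eqref{eq:s0Bars1} around $\Theta^*$ shows the perturbation term does not dominate the linear opening of the cone. Thus, to leading order in $\eps$, the sheets $\Delta_{1}^{(\eps)} \pm |s_0(\Theta)|/3$ and $\Delta_{2}^{(\eps)} \pm |s_0(\Theta)|/3$ still intersect transversally at $(\Theta^*, \lambda^*)$, yielding a Dirac singularity of the same type catalogued in Remark \ref{DiracPics}. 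The main obstacle is bookkeeping in this final step: one must check that along every direction of approach to $\Theta^*$ the $\eps$-correction remains subleading relative to the cone opening of $|s_0|$, and that this dominance is not spoiled by the $\cO(\eps^2)$ error already present in the perturbative dispersion relation.
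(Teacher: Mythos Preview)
Your proposal is correct and follows essentially the same route as the paper: both invoke Lemma~\ref{ReS0S1} to see that $\Re(s_0(\Theta^*)\overline{s_1(\Theta^*)})=0$ at $\Theta^*=\pm(2\pi/3,-2\pi/3)$, so the perturbed dispersion relation \eqref{eq:D1D2eps} collapses to the unperturbed one there, keeping the gap closed at the graphene Dirac energies.

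The only divergence is in the local analysis of the cone. The paper argues geometrically: it introduces $D_\eps(\lambda,\Theta)=\pm|s_0(\Theta)|/3 - T_1^{(\eps)} - (T_2^{(\eps)})^{1/2}$, uses continuity to obtain a neighborhood where $D_\eps$ is well defined, and then invokes the inverse function theorem to describe the level sets $D_\eps=0$ as simple closed curves (when $T_2>0$) or as one-sided arcs (when $T_2=0$, where the square root forces the restriction $\Re(s_0\overline{s_1})G(\lambda)\geq 0$). You instead compare orders of vanishing via Taylor expansion of \eqref{eq:s0Bars1}. Both arguments are at the same heuristic level; yours is more computational and the paper's more topological. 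One point the paper makes explicit and you do not is the case split on $T_2(\lambda)$: when $T_2(\lambda^*)=0$ the perturbed dispersion surface is only defined on half of every small neighborhood of $\Theta^*$, so the ``bookkeeping'' you flag at the end becomes a genuine qualitative distinction (one-sided versus two-sided cone) rather than a mere subleading-term check.
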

	\begin{proof}[\normalfont \textbf{Proof of Corollary~\ref{DiracPerturb}}] 
		Let $(\Theta^*, \lambda^*)$ with $\Theta^* := \pm(2\pi/3, -2\pi/3)$ be a Dirac point for graphene Hamiltonian. Then, result from Lemma \ref{ReS0S1} implies that function $ \Re(s_0(\Theta) \overline{s_1(\Theta)})$ vanishes on quasimomentum $\Theta^*$. Thereby,  there is no spectral gap at energy $\lambda^*$ for the perturbed Hamiltonian as well. Regarding singularity at this point, let define $\eps$-dependent function
		\begin{equation}
			D_\eps(\lambda,\Theta) := \pm \frac{|s_0(\Theta)|}{3} - T_1\up{\eps} - (T_2\up{\eps})^{1/2}
		\end{equation}
		and similarly for $\Delta_2\up{\eps}$. Applying continuity property of function $D_\eps(\lambda,\Theta)$ with respect to $\Theta$, then there exist $\eps$-dependent neighborhood $\cN_{\lambda,\Theta}\up{\eps} := \cN_{\lambda^*}(\lambda) \times \cN_{\Theta^*}\up{\eps}(\Theta)$ containing $(\lambda^*, \Theta^*)$ so that $D_\eps(\lambda,\Theta)$ is well defined for all $(\lambda,\Theta) \in \cN_{\lambda,\Theta}\up{\eps} $. For $\lambda \in \cN_{\lambda,\Theta}\up{\eps} \setminus\{\lambda^*\}$ and the case $T_2(\lambda) > 0$, then application of inverse function theorem implies that solution set for $D_\eps(\lambda,\Theta) = 0$ is a simple closed loop (distorted ellipse) in quasimomentum $\cN_{\Theta^*}\up{\eps}(\Theta)$. Moreover, observe that singularity of function $D_\eps(\lambda,\Theta) $ only occurs at $\Theta^*$ due to $|s_0(\Theta)|$. For the case $T_2(\lambda) = 0$, function $D_\eps(\lambda,\Theta)$ is only well-defined for $\cN_{\Theta^*}\up{\eps}(\Theta) \cap \{\Theta : \Re(s_0(\Theta) \overline{ s_1(\Theta)})G(\lambda) \geq 0\} $. Similar discussion implies that solution set for $D_\eps(\lambda,\Theta) = 0$ is a simple connected curve (not closed) in quasimomentum $\cN_{\Theta^*}\up{\eps}(\Theta)$. In this case, dispersion relation is lost locally for $\Theta$ such that  $\Re(s_0(\Theta) \overline{ s_1(\Theta)})G(\lambda) < 0$. In all two cases, the gap remain closed at Dirac point, however only one-side differentiability exists for the latter case.
	\end{proof}
	\begin{remark}
		Here we stress out that for the case $T_2(\lambda) = 0$ explained in the proof of Corollary \ref{DiracPerturb}, concern is only about $\lambda \not = \lambda^*$ as for $\Theta^*$ the $\eps$-term vanishes. Moreover, Corollary \ref{DiracPerturb} guarantees that Dirac points appear at $\Theta^* = \pm(2\pi/3, -2\pi/3)$ quasimomenta, but with possible shift in the energy space. 
	\end{remark}
    Investigation on presence of a pure point spectrum has been an active research area. Changing the geometry of medium (e.g. working with 2D periodic graph instead of a real line), imposing perturbation through potential, and applying different Hamiltonian model are among few ways to guarantee presence of a pure point spectrum, e.g. see \cite{K03,KP07,H18,L21}. As stated in Theorem \ref{grapheneSpectrum} pure point spectrum for graphene is non-empty. This has been proved by explicit construction of even (or odd) eigenfunctions with support on single hexagon. However, existence of pure point spectrum will fail for perturbed Hamiltonian.  
	\begin{thm}\label{PerturbedgrapheneSpectrum}\emph{\textbf{(the spectral description)}}
		The spectrum of the perturbed  Hamiltonian is purely absolutely continuous.
	\end{thm}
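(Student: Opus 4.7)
The plan is to prove $\sigma(\cH_\eps) = \sigma_{\text{ac}}(\cH_\eps)$ by separately establishing $\sigma_{\text{sc}}(\cH_\eps) = \emptyset$ and $\sigma_{\text{pp}}(\cH_\eps) = \emptyset$. For the first, the argument of Theorem \ref{grapheneSpectrum}(i) applies verbatim: $\cH_\eps$ is still a self-adjoint elliptic operator on the metric graph, so Corollary~6.11 of \cite{K16} yields $\sigma_{\text{sc}}(\cH_\eps) = \emptyset$.

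For the second, I would follow the Thomas analytic continuation scheme used in Theorem \ref{grapheneSpectrum}(iii): any eigenvalue $\lambda_0$ must correspond to a constant branch of the dispersion relation, i.e.\ a value at which \eqref{eq:D1D2eps} holds for all $\Theta \in [-\pi,\pi]^2$. The key idea is to evaluate \eqref{eq:D1D2eps} at three carefully chosen quasimomenta and extract mutually inconsistent constraints on $\Delta_1(\lambda_0),\Delta_2(\lambda_0)$. At $\Theta = (0,0)$ the identity $c_2 = -(1+c_1)$ forces $s_1(0,0) = \cot(\delta_0)(1+c_1+c_2) = 0$, so the $\eps$-corrections in $T_k\up{\eps}$ vanish and $\Delta_k\up{\eps}(\lambda_0,0) = \Delta_k(\lambda_0)$; since $|s_0(0,0)|/3 = 1$ this forces $\Delta_j(\lambda_0) = \pm 1$ for some $j \in \{1,2\}$. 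At $\Theta^* = (2\pi/3,-2\pi/3)$, Lemma \ref{ReS0S1} gives $\Re(s_0\overline{s_1})(\Theta^*) = 0$, and $|s_0(\Theta^*)| = 0$ then forces $\Delta_i(\lambda_0) = 0$ for some $i$, which must therefore differ from $j$. Without loss of generality $\Delta_1(\lambda_0) = \pm 1$ and $\Delta_2(\lambda_0) = 0$. At a third quasimomentum such as $\Theta_3 = (\pi,\pi)$, where $|s_0(\Theta_3)|/3 = 1/3$, the dispersion relation demands $\Delta_k\up{\eps}(\lambda_0,\Theta_3) = \pm 1/3$ for some $k$; but $\Delta_k\up{\eps}(\lambda_0,\Theta_3) = \Delta_k(\lambda_0) + \cO(\eps)$, and for $\eps$ sufficiently small neither $\pm 1 + \cO(\eps)$ nor $0 + \cO(\eps)$ is close to $\pm 1/3$, a contradiction. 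This step rules out flat branches with $\lambda_0 \notin \Sigma^D$.

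The remaining and most delicate step, which I expect to be the main obstacle, is handling $\lambda_0 \in \Sigma^D$, where the reduction leading to \eqref{eq:D1D2eps} breaks down. Here one must argue directly that the compactly supported eigenfunction construction of Lemma \ref{LemmasigmaD} fails. Its hexagon loop states rely on the symmetry-dependent cancellation $\sin(\delta_0)(\sigma_a + \sigma_b) = 0$ enforced by \eqref{eq:P2} at each hexagon vertex, whereas the perturbed condition becomes $\sigma_a \sin(\delta_a\up{\eps}) + \sigma_b \sin(\delta_b\up{\eps}) = 0$ with $\sigma_a,\sigma_b \in \{\pm 1\}$ and $\delta_a\up{\eps},\delta_b\up{\eps}$ two of the three now distinct perturbed angles at $v$; for $\eps \neq 0$ small, $\sin(\delta_a\up{\eps}) \neq \pm\sin(\delta_b\up{\eps})$, forcing the single-edge Dirichlet eigenfunction $u$ to additionally satisfy $u'(v) = 0$. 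Combined with $u(0)=u(1)=u''(0)=u''(1)=0$ this over-determines a fourth-order ODE and yields $u \equiv 0$, so simple loop states disappear under the perturbation.

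The genuine work is to upgrade this observation to exclude \emph{all} $L^2$-eigenfunctions at $\lambda_0 \in \Sigma^D$, not just those supported on a single hexagon. The plan is a propagation lemma: starting from a vertex on the boundary of the support of a candidate eigenfunction and reading inward, the perturbed vertex conditions \eqref{eq:PCond}--\eqref{eq:SCond} cascade into successively more restrictive linear constraints on the edge restrictions, which—because the matrix of these constraints is a small perturbation of the degenerate graphene one—becomes invertible uniformly for $\eps$ in a punctured neighborhood of $0$, forcing triviality. The combinatorial bookkeeping of this cascade and the uniform non-degeneracy estimate in $\eps$ form the most technically demanding part of the argument.
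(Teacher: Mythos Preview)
Your overall strategy---kill $\sigma_{\text{sc}}$ via ellipticity, reduce $\sigma_{\text{pp}}$ to $\Sigma^D$ via Thomas, then rule out $\Sigma^D$ by a support-boundary argument---is exactly the paper's. The differences are in execution, and in one place you overestimate the difficulty.

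For the reduction $\sigma_{\text{pp}}\subset\Sigma^D$, your three-quasimomentum evaluation is more elaborate than needed. The paper simply invokes the Thomas argument as in Theorem~\ref{grapheneSpectrum}: for $\lambda\notin\Sigma^D$ the exact determinant $\det(\mathbb{M}_\eps(\lambda))$ is a non-constant Laurent polynomial in $(e^{i\theta_1},e^{i\theta_2})$, so no flat branch. Your argument instead reads off values of the \emph{approximate} relation \eqref{eq:D1D2eps}, which is only accurate to $\cO(\eps^2)$; strictly speaking this does not exclude a flat branch of the exact dispersion. This is easy to repair (work with $\det(\mathbb{M}_\eps)$ directly), but worth flagging.

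The main point: your ``propagation lemma'' is over-engineered. You anticipate a cascade of constraints marching inward from the boundary of the support, with delicate bookkeeping and uniform-in-$\eps$ non-degeneracy estimates. The paper's argument terminates at the very first step. At a boundary vertex, one edge carries $u\equiv 0$ while the others carry scalar multiples of the same $\Sigma^D$-eigenfunction; the pair of conditions \eqref{eq:P2} and \eqref{eq:S2} (the latter being angle-independent) then forces the corresponding $\sin(\delta^{(\eps)}_{\bullet})$ values to coincide. Running this over the boundary vertices yields $\eps = c_1\eps = c_2\eps$, and since $1+c_1+c_2=0$ this gives $\eps=0$, a contradiction. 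No inward propagation, no cascade, no matrix perturbation estimate---the contradiction is immediate from the mismatch between the angle-weighted condition \eqref{eq:P2} and the unweighted condition \eqref{eq:S2}. The non-compactly-supported case is dispatched by the same local incompatibility. What you describe as ``the most technically demanding part of the argument'' is in fact a two-line observation.
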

	\begin{proof}[\normalfont \textbf{Proof of Theorem~\ref{PerturbedgrapheneSpectrum}}]
		The singular continuous spectrum is empty, since the Hamiltonian is a self-adjoint elliptic operator like the unperturbed case (see proof of Theorem \ref{grapheneSpectrum}). Next, let's show the absence of the pure point spectrum unlike the graphene case. Using the dispersion relation, we get $\sigma_{\text{pp}}(\cH) \subset \Sigma^{D}$ like we did in the unperturbed case. Now let us assume $\sigma_{\text{pp}}(\cH) \neq \emptyset$. Then the corresponding eigenfunction $u$ restricted to any edge should either be identically zero or solve $d^4u(x)/dx^4 + q(x)u(x) = \lambda u(x)$ with the boundary conditions $u(0)=u(1)=u''(0)=u''(1)=0$ on that edge. Therefore restriction of an eigenfunction to any edge on its support should be an eigenfunction of the operator $d^4/dx^4 + q(x)$ for the same eigenvalue $\lambda$ on $[0,1]$ interval with the boundary conditions $u(0)=u(1)=u''(0)=u''(1)=0$. Note that $u$ should also satisfy the vertex conditions. 

		If $u$ is compactly supported, then the vertex conditions on the vertices of the boundary of the support of $u$ imply $\eps = c_1\eps = c_2\eps$. Recall that $1 + c_1 + c_2 = 0$, so $\eps = 0$ is the only solution, which is the unperturbed case. For the non-compactly supported $u \in H^4(\Gamma)$, same discussion holds to show that vertex conditions can not be met at any vertex. Therefore the pure point spectrum is also empty. From the dispersion relation we get that the spectrum is non-empty, so we get the desired result that the spectrum is purely absolutely continuous. 
	\end{proof}
	\begin{remark}
		Applying the result in Lemma \ref{ReS0S1} and the proof of Corollary \ref{DiracPerturb} some arguments can be made to quantify shift of dispersion relation  \eqref{eq:D1D2eps} for perturbed Hamiltonian compared to graphene case at any $\lambda$. More precisely, for $T_2 > 0$, expansion of $T_1\up{\eps}$ and $T_2\up{\eps}$ in  \eqref{eq:T1T2eps2}  implies that 
		\begin{equation}
			\label{eq:qusiMomeps}
			\pm\frac{|s_0(\Theta)|}{3} = \Delta_1(\lambda) \Big\{1+ \eps \Re(s_0(\Theta) \overline{ s_1(\Theta)})G(\lambda)T_1^{-1}(\lambda)T_2^{-1/2}(\lambda) \Big\} + \cO(\eps^2)
		\end{equation}
		and similarly for $\Delta_2$ with sign changes. Now for fixed value of $\lambda$, the shift with respect to graphene, i.e. case $\eps = 0$, in quasimomentum can be found by solving \eqref{eq:qusiMomeps}. 
	\end{remark}
	Finally, in the following section we give a partial list of topics which may be interesting to the reader for future extension of current work. 
	\section{Outlook}
	\label{sec:Outlook}
	The viability of the frame model as a structure composed of one-dimensional segments needs to be verified mathematically, as a limit of a three-dimensional structure as the beam widths go to zero. There is a significant mathematical literature on this question for second-order operators (see, for example \cite{Z02,Gri_incol08,Post_book12}), with a variety of operators arising in the limit.  This variety will increase in the case of fourth-order equations, and may be expected to incorporate masses concentrating at joints and other cases of applied interest. Moreover, validity of Euler-Bernouli beam theory specially at high-energy level maybe a place to be questioned. Unlike this, richer Timoshenko model no longer assumes the cross-sections remain orthogonal to the deformed axis and therefore incorporates more degrees of freedom \cite{MPZ02,GR15,Mei19}. Of applied interest would be to extend the current results to the latter model. 
	
	In this work we focused on Euler-Bernouli beam theory and its restriction to scalar-valued lateral displacement $v(x)$. In the work \cite{BE21} it is shown that for planar graphs, more accurate way to present the operator is by including angular displacement field $\eta(x)$ as well. This then shifts our problem to a vector-valued operator and more complicated vertex conditions. We refer to recent work \cite{SBE21} for analysis in this line and potential future work for interesting three dimensional periodic graphs.
	
	An interesting problem is to employ two-scale analysis for understanding the homogenized behavior and spectra of Hamiltonian on periodic lattices with more complex fundamental domain, e.g. see \cite{K16,ZP16,GMO18,CLM20,KM21} and references there. Of similar interest is generalization of our result to multi-layer quantum graph models equipped with beam Hamiltonian. In the work \cite{FWS21}, it is shown that for {S}chr{\"o}dinger operator dispersion relation of wave vector and energy is a polynomial in the dispersion relation of the single layer. This leads to the reducibility of the algebraic Fermi surface, at any energy, into several components. For the beam Hamiltonian, it has been shown that in the special case of planar frames, the operator decomposes into a direct sum of two operators, one coupling out-of-plane displacement to angular displacement and the other coupling in-plane displacement with axial displacement \cite{BE21}. Understanding the interaction of these  decoupled systems on multi-layer graphs may be interesting from both theory and applied perspectives. 
	
	\bibliographystyle{abbrv}
	\bibliography{ref}

\end{document}